\documentclass[11pt]{article}
\usepackage{fullpage, tikz}
\usetikzlibrary{positioning}

\usepackage{geometry}
\newcommand*{\TCNLA}[1][]{}  %
\newcommand*{\ifNoAppendix}[1]{}
\newcommand{\probb}[1]{\mathbb{P}\left[#1\right]}

\usepackage{xcolor}
\usepackage{dirtytalk}
\usepackage{enumitem}
\usepackage{amsfonts}
\usepackage{amsthm}
\usepackage{amsmath}
\usepackage{amssymb}
\usepackage{mathtools}
\usepackage{bbm}
\usepackage[hidelinks]{hyperref}
\usepackage{cleveref}
\usepackage{enumerate}
\usepackage{comment}
\usepackage{algorithm}
\usepackage[noend]{algpseudocode}

\usepackage{natbib}
\usepackage{thmtools}
\usepackage{thm-restate}
\usepackage{tikz}
\usetikzlibrary{arrows.meta,patterns,calc}

\renewcommand{\cite}[1]{\citep{#1}}
\newcommand{\textcite}[1]{\citet{#1}}

\newtheorem{theorem}{Theorem}[section]
\newtheorem{inftheorem}{Informal Theorem}

\newtheorem{claim}[theorem]{Claim}
\newtheorem{lemma}[theorem]{Lemma}

\newtheorem{corollary}[theorem]{Corollary}
\newtheorem{proposition}[theorem]{Proposition}

\theoremstyle{definition}
\newtheorem{definition}{Definition}

\newenvironment{proofsketch}
  {\begin{proof}[Proof sketch]}
  {\end{proof}}

\newcommand{\I}{{\mathcal I}}
\newcommand{\E}{{\mathbb{E} }}

\newcommand{\MMS}{{\sf{MMS}}}

\let\oldnl\nl
\newcommand{\nonl}{\renewcommand{\nl}{\let\nl\oldnl}}

\renewcommand{\nonl}{\renewcommand{\nl}{\let\nl\oldnl}}
\long\def\symbolfootnote[#1]#2{\begingroup%
\def\thefootnote{\fnsymbol{footnote}}\footnote[#1]{#2}\endgroup}

\newcommand{\poly}{\operatorname{poly}}
\newcommand{\Pot}{\Pi}

\newcommand{\Pooja}[1]{{\textcolor[rgb]{0,0,1}{[Pooja: #1]}}}

\newcommand{\rh}[1]
{{\textcolor[rgb]{0,0,0}{#1}}}
\newcommand{\zhiyi}[1]{{\textcolor[rgb]{0,1,1}{[Zhiyi: #1]}}}
\newcommand{\shuchi}[1]{{\textcolor[rgb]{1,0,1}{[Shuchi: #1]}}}
\newcommand{\sh}[1]{{\textcolor[rgb]{0,0,0}{#1}}}

\newcommand{\remove}[1]{}
\newcommand{\swr}{\textsc{SwR}}
\newcommand{\mswr}{\textsc{MSwR}}
\newcommand{\iswr}{\textsc{IterSwR}}
\newcommand{\ws}[1][i]{\mathcal{S}_{#1}}
\newcommand{\argmax}{\operatorname{argmax}}

\title{Single-or-Sample: Online Fair Allocation for Combinatorial Agents}

\author{
Shuchi Chawla\thanks{Author names appear in alphabetical order.} \\
\small University of Texas, Austin \\
\small \texttt{shuchi@cs.utexas.edu}
\and
Zhiyi Huang \\
\small University of Texas, Austin \\
\small \texttt{zhiyih@cs.utexas.edu}
\and
Pooja Kulkarni \\
\small University of Chicago \\
\small \texttt{kulkarnip@uchicago.edu}
\and
Ruta Mehta \\
\small University of Illinois Urbana-Champaign \\
\small \texttt{rutameht@illinois.edu}
\and
Parnian Shahkar \\
\small University of California, Irvine \\
\small \texttt{shakarp@uci.edu}
}

\date{}

\begin{document}

\maketitle

\begin{abstract}
We study the problem of fairly allocating $m$ indivisible goods among $n$ agents who arrive online, under the notion of maximin share (MMS) fairness. Fair allocation with online arrivals is notoriously challenging: prior work achieves constant-factor MMS guarantees only when agents' preferences belong to a set of valuation functions known in advance, while no guarantees were known without such prior information.


We develop a new randomized online algorithm for additive and submodular valuations, that we call Single-or-Sample, and that achieves a constant-factor approximation to MMS simultaneously for all agents, with constant probability. \rh{The algorithm requires no prior knowledge about the agents' valuations, and works against adversarial (oblivious) inputs.} 
\sh{As an added benefit, it exhibits strong incentive properties: every agent is guaranteed constant-MMS under a safe strategy, even if others behave strategically or adversarially.} 

We further establish a fundamental tradeoff between approximation and success probability. Specifically, for any $c \ge 1$, no online algorithm can guarantee a $1/c$-approximation to MMS with probability exceeding $1 - 1/c^2$, even for binary additive valuations. This rules out constant MMS with probability asymptotically closer to $1$ than a constant. For XOS, the lower bound is much stronger: no algorithm can achieve even $1/\log\log n$-MMS to all agents with a constant probability. We complement this lower bound with an algorithm for the regime $c\in \Omega(\log n)$, namely $1/c$-MMS to all agents with probability $(1-O(1/c))$, and show that this tradeoff is tight for XOS. 

Our constant-factor algorithm introduces several new ideas, combining greedy submodular maximization with randomized allocation and single-item reduction. A key technical ingredient is a new approach for analyzing 
iterative 
sampling without replacement. We develop concentration bounds that apply to a broad class of adaptive processes with complex dependencies across elements and rounds, which may be of independent interest. 
\end{abstract}

\newpage
\vspace{1cm}
\setcounter{tocdepth}{1} 
\tableofcontents
\newpage

\section{Introduction}
\label{sec:intro}

Consider a disaster relief supply center that distributes essential goods to members of an affected community. Recipients have diverse requirements. They arrive over time, communicate their needs, and receive allocations. The center must manage its limited supply carefully: allocating too generously to early arrivals risks shortages for those who come later, while overly conservative allocations can lead to inefficiency and unmet needs. Compounding this challenge, demand is difficult to forecast, as it depends sensitively on the severity and local impact of the disaster. In such uncertain, dynamic environments, is it nevertheless possible to design allocation policies that achieve meaningful guarantees of fairness and efficiency?

Motivated by this setting, we study an online allocation problem with $m$ heterogeneous, indivisible goods and $n$ heterogeneous agents with combinatorial valuation functions. Agents arrive sequentially over time; upon arrival, an agent receives an allocation and then departs, and all allocation decisions are irrevocable. The allocator knows the total number of agents 
$n$ in advance, but has no information about the valuation functions of future arrivals. The goal is to design an online allocation policy that guarantees a fair share to {\em every} agent, despite the uncertainty inherent in future demand.

Given the online nature of the problem, we focus on a notion of fairness that depends only on an agent's own valuation and allocation, rather than on allocations made to agents who arrive later. For indivisible goods, the most natural, as well as sought-after, such notion is the {\em maximin share (MMS)}  \cite{budish2011combinatorial}. Informally, the MMS of an agent is the maximum value she can guarantee for herself by partitioning all goods into $n$ bundles and receiving the least desirable bundle. Formally, let $v_i:2^{[m]}\rightarrow \mathbb{R}_+$ denote the monotone valuation function of agent $i$ for $i\in [n]$. The MMS value of agent $i$, denoted $\MMS_i$, is defined as \[\MMS_i=\max_{(P_1,\dots,P_n)\in \Pi([m])} \min_{\ell\in[n]} v_i(P_\ell),\] 
where $\Pi([m])$ denotes the set of all partitions of the set of goods $[m]$ into $n$ bundles. 

The problem of finding an MMS allocation has been studied extensively in the offline setting: in light of early non-existence results \cite{procaccia2014fair}, the focus has been on finding an approximate MMS allocation; in an $\alpha$-MMS allocation each agent $i$ receives a bundle worth at least $\alpha*\MMS_i$. For the two most studied cases of additive and submodular valuation functions, constant-factor MMS guarantees are well-established \cite{Aziz2017, barman2020approximationalgorithmsmaximinfair,akrami2024breaking,ghodsi2018fair,uziahu2023fairallocationindivisiblegoods}. 

In contrast to these offline results, despite the recent focus on online fair division, online MMS remains relatively unexplored, in part due to inherent technical challenges \cite{amanatidis2022fair,kulkarni2025online}. No MMS guarantees are known for online good arrivals; the main barrier is that MMS value is undefined until all the goods are seen. For online agent arrivals, prior work \cite{kulkarni2025online} considers a stochastic model with additive valuations, achieving constant-factor MMS guarantees with high probability only when agents are drawn from a distribution over a fixed set of $k$ types known a priori. In the adversarial arrival model, that work shows a strong inapproximability beyond $\Omega\left(\frac{1}{\sqrt{k}}\right)$ when the adversary is adaptive.\footnote{An adaptive adversary can observe the random coin tosses of the algorithm.} This negative result naturally raises the following question.


\begin{center}{\em Is it possible to achieve any nontrivial approximate-MMS guarantee against an oblivious\footnote{An oblivious adversary can select worst-case agents' valuation functions after knowing the algorithm, but {\em not} knowing the outcomes of the random coin tosses (decisions) made by the algorithm.} adversary, when the allocator has no prior knowledge of the agents' valuation functions?}
\end{center}

In this paper, we answer this question in the affirmative---we provide asymptotically matching upper and lower bounds on the approximation factors achievable for agents with additive, submodular, and XOS (fractionally subadditive) valuations. First, we observe a tradeoff between the MMS approximation achievable and the probability of success for the algorithm: no algorithm, randomized or not, can achieve $\frac{1}{c}$-approximate MMS for all agents simultaneously with probability better than $\big(1-O(\frac{1}{c^2})\big)$, for any $c\ge 1$ (\Cref{thm:lb_c}).  
This tradeoff holds even for the simplest case of binary additive valuations. We then exhibit an algorithm that nearly matches this lower bound when $c\in \Omega(\log n)$ and all agents have XOS valuations, obtaining a $\frac{1}{c}$-MMS guarantee for {\em all} agents with probability $\left(1-O(\frac{1}{c})\right)$ (\Cref{thm:xos-alg}).  

{\bf Our main focus is on constant factor MMS approximation, simultaneously for all agents, with constant probability.} We show that constant MMS is achievable for additive and submodular valuations (\Cref{thm:submod-rr-main}), but not for XOS. Our positive result for additive and submodular values is based on a  randomized allocation algorithm that 
combines ideas from greedy submodular maximization and single-good reduction with a new adaptive sampling process. Our analysis of the adaptive sampling process holds in fairly general settings and may be of independent interest, as we show that the process lends itself to {\em Chernoff-style tail bounds despite exhibiting positive correlations} and ruling out the usual routes to concentration (\Cref{thm:conc}). On the lower bound side, we show that for XOS valuations, no randomized algorithm can achieve better than a $1/\log \log n$  approximation against an oblivious adversary, if we insist on succeeding with constant probability. This strong lower bound for XOS holds even under the {\em no-large-items} assumption (\Cref{thm:xos-lower-bound}). In contrast, for submodular valuations this assumption facilitates a much stronger guarantee, namely constant MMS with $(1-1/\poly(n))$ probability (\Cref{thm:sub-smallVal}).   


Besides the fairness guarantee, our algorithm has several nice features. First, it requires {\em limited communication} between the agents and the allocator. In particular, the allocator asks the agent to report her most desirable items over multiple (linear number) rounds of allocation, and never elicits her full valuation function. Second, our algorithm exhibits strong {\em incentive guarantees}. 
In particular, 
the MMS approximation guaranteed to an agent holds irrespective of the strategic behavior of other agents. If an agent behaves strategically or adversarially, she may harm her own utility but not that of other truthful agents. \rh{This incentive angle may be of independent interest given the impossibilities of fair and truthful mechanisms even in the offline setting \cite{amanatidis2016truthful}.} Third, it exhibits additional (sought-after) guarantees with much higher probabilities (\Cref{sec:additional}): $(i)$ each agent gets constant MMS with probability polynomially close to one, $(ii)$ all but $O(\log n)$ randomly chosen agents get constant MMS again with probability polynomially close to one, and $(iii)$ we can provide a constant $1$-out-of-$(n+o(n))$ MMS guarantee to all agents simultaneously with probability $(1-1/n)$. 


We now give an overview of our techniques and approaches to obtain these results.  


\subsection{Technical Overview: Adaptive Bucketing and Sampling}
To form intuition for our algorithm, consider an additive agent. At a high level, when an agent arrives, our goal is to allocate about a $1/n$ fraction of her total value while ensuring that no individual item is allocated with high probability. In the absence of information about which items future agents will value more than others, we strive to make the allocation distribution as symmetric across items as possible.

A natural first attempt is to allocate each item to the agent independently with $1/2n$ probability. Then every item has a constant probability of surviving until the last agent arrives, and each agent obtains in expectation a constant fraction of $1/n$ times her total value, an upper bound on her MMS value for additive valuations. The drawback is that the total value received by an agent can have large variance. As a result, this strategy cannot guarantee that every agent {\em simultaneously} receives a constant fraction of her MMS value.\footnote{In fact, this random allocation strategy cannot guarantee anything better than $O(1/\log n)$-approximate MMS with high probability.}

A second natural idea is inspired by the classical bag-filling paradigm used extensively in offline MMS algorithms. One can randomly permute the items and construct a bag by scanning the permutation until the current agent values the bag at her $\alpha$-MMS threshold, at which point the bag is allocated and the process continues with the remaining items and agents. This eliminates the variance in the value received by the current agent. Moreover, if $\alpha$ is a constant bounded away from $1$, each individual item still survives each round with constant probability, ensuring high {\em expected value} for the set $R$ remaining for the last agent. The difficulty lies in turning these itemwise survival guarantees into a {\em high-probability} guarantee on the value of $R$.
The stopping rule creates complicated dependencies among the survival events of different items. An item selected early can change when the current bag closes and thereby affect which items are exposed to allocation in both the current and subsequent rounds. These correlations appear difficult to control well enough to obtain the required high-probability guarantee. 

We instead develop a new approach that preserves the main advantage of bag-filling while making the random allocation process substantially more structured.

\paragraph{Bucketing and sampling without replacement.}
When an agent arrives, consider partitioning the remaining items into buckets according to geometrically increasing value ranges. Within each bucket, we can allocate a uniformly random subset of {\em fixed} size, about $1/2n$ times the size of the bucket. Because items within a bucket have comparable values, fixing the number selected ensures that, conditioned on the remaining items, the agent deterministically receives an $\Omega(1/n)$ fraction of their total value. While, correlations across items and rounds still remain, the key advantage of this approach is that the resulting dependencies have substantially more structure than under bag-filling. We show that, despite dependencies, the resulting adaptive sampling process admits Chernoff-style concentration, ensuring that enough value survives for later agents with high probability. Relative to bag-filling, sampling from buckets has the added benefit of providing strong incentive properties for the agents. 

We defer a detailed discussion of concentration to Section~\ref{sec:techoverview-conc} and incentives to Section~\ref{sec:techoverview-inc}; and first describe how to extend this approach to more general value functions.





\paragraph{Submodular valuations.}
For {\em submodular valuations}, the bucketing approach above does not apply directly, because the marginal value of an item depends on which other items the agent has already received. Instead, we develop a random greedy allocation procedure that continues to allocate each item with probability only $1/2n$. When an agent arrives, we identify the $2n$ unallocated items with the largest marginal values, allocate one of them uniformly at random, and reserve the other $2n-1$ for future agents. We then recompute marginal values over the remaining items with respect to the agent's current allocation and repeat until all items have been considered. We show that this protocol guarantees the agent an \rh{$\Omega(1/n)$ fraction of the sum of the residual values of her MMS-defining bundles}.

The protocol can be viewed as sampling without replacement from buckets that are defined adaptively by the agent's evolving marginal values. As in the additive setting, the key remaining challenge is to show that, despite this adaptivity, enough value survives in a large fraction of the MMS-defining bundles for every future agent.


\paragraph{High-valued items.}
The approach above is insufficient when an agent's MMS value is concentrated on a few high-valued items. We handle this case through a {\em single-item reduction}: any agent who can obtain a good approximation to her MMS value from a single item is allowed to take such an item immediately. The algorithm keeps track of the number of these ``special'' iterations and adjusts the allocation probabilities in subsequent rounds accordingly. Balancing these probabilities against the threshold for triggering a single-item reduction introduces a significant complication in the analysis. It also introduces further intricate dependencies between item allocation events. Moreover, a long sequence of such reductions can leave the final few agents with a constant probability of failure. Nevertheless, we show that the overall protocol simultaneously satisfies all agents with constant probability.

\begin{inftheorem}
\label{inf2}
    For any online MMS instance with submodular agents, with probability at least $0.423$, the Single-or-Sample Algorithm (\Cref{alg:submod-rr}) guarantees to every agent a $1/24$-fraction of their MMS value.\footnote{These constants are not optimized and it may be possible to parameterize our algorithms and tighten our analysis to obtain improved guarantees.}
\end{inftheorem}

\paragraph{XOS valuations.} In contrast to the above results, we show that similar guarantees cannot be achieved for agents with XOS valuations. That is, 
no algorithm, randomized or not, can guarantee $\frac{1}{\log \log n}$-MMS allocation to all agents with probability better than $o(1)$. To show this lower bound, we carefully create a distribution over nested instances that is independent of any algorithm, and show that any deterministic algorithm will fail to ensure $(1/\log\log n)$-MMS allocation to some agent with constant probability. On the other hand, for $c\in \Omega(\log n)$ when $n$ is sufficiently large, we show a tight result, namely $1/c$-MMS with $(1-O(1/c))$ probability.

\subsection{Technical Overview: Concentration}
\label{sec:techoverview-conc}

The adaptive bucketing and sampling protocols described above involve successive
rounds of sampling without replacement.  In each iteration $i$, every item is
sampled with probability at most $p_i$, but both the buckets from which items are sampled and the sampling probabilities themselves may be chosen adaptively, based on the outcomes of previous sampling rounds. Thus, even the marginal probability with which a particular item is sampled may depend on the preceding history. Our analysis requires a concentration guarantee of
the following form: for any fixed set $S$ of items, if $\sum_i p_i$ is bounded away from $1$, then a constant fraction
of the items in $S$ should survive with probability $1-\exp(-\Omega(|S|))$. 

At first glance, the adaptivity seems to rule out the usual routes to obtaining such a concentration guarantee. Sampling without replacement from a fixed bucket in a single round is negatively associated, but under adaptive sampling even individual sampling indicators can be positively correlated: an early outcome may cause later items to be sampled at higher or lower probabilities.\footnote{Adaptive sampling probabilities can violate even pairwise negative correlation of survival indicators in one round: sample item $1$ with probability $p\in(0,1)$, and item $2$ with probability $p$ if item $1$ was sampled and $0$ otherwise. For the survival indicators $X_1,X_2$, we have $\mathbb E[X_1]=1-p$, $\mathbb E[X_2]=1-p^2$, and $\mathbb E[X_1X_2]=1-p$, so $\operatorname{Cov}(X_1,X_2)=(1-p)-(1-p)(1-p^2)=p^2(1-p)>0$.} 
In
Appendix~\ref{app:negassoc} we show that negative association fails even under the restricted adaptivity of our algorithm. However, the key observation is that full negative dependence is unnecessary: we identify a weaker one-sided property that is both preserved by our adaptive process and sufficient for Chernoff-style concentration. If \(X_e\) denotes the indicator that item \(e\) is sampled, and \(p\) is an upper bound on the sampling probability of every item under every adaptive history, then for every fixed set \(T\),
\[
\E\!\left[\prod_{e\in T} X_e\right]
=
\Pr[T \text{ is entirely sampled}]
\le
p^{|T|}.
\]
This does not assert negative dependence with respect to the actual marginals, indeed, \(\mathbb{E}[X_eX_f]\) may exceed \(\mathbb{E}[X_e]\mathbb{E}[X_f]\). Rather, it shows that all higher-order inclusion probabilities are dominated by those of independent Bernoulli-\(p\) variables.



Even more importantly, this property composes across iterations. If iteration $i$ samples each
remaining item with probability at most $p_i$, then, writing $q=\prod_i(1-p_i)$,
the indicators that items are eventually removed are upper-cylinder dominated
by independent Bernoulli variables with removal probability $1-q$. We use these inequalities to bound the MGF of linear combinations of the sampling indicators, yielding Chernoff-style concentration for the agents' surviving valuations.

\sh{Handling single-item reduction requires additional ideas, as it does not fit cleanly into our adaptive sampling analysis. We introduce a deficit function that tracks the loss of an agent's value during sampling rounds while remaining unchanged during single-item reduction rounds. Applying our MGF bound to this function yields the required concentration guarantee.}

We develop these arguments formally in
Section~\ref{sec:mgf}. 

\subsection{Technical Overview: Incentives and Sharper Guarantees}
\label{sec:techoverview-inc}

\paragraph{Incentive properties.}
Our algorithms also enjoy strong incentive guarantees. First, they require agents to disclose only ordinal, rather than cardinal, preferences. Second, the algorithm for additive valuations is truthful (DSIC)\footnote{This makes the standard assumption that agents maximize their expected value.}: regardless of how an agent buckets the remaining items, she receives a fixed fraction of her remaining value in expectation. Finally, for submodular valuations, truthful behavior in the Single-or-Sample algorithm, that is, reporting the items of highest marginal value in each round, constitutes a {\em safe strategy}. By following this strategy, an agent is guaranteed a constant-factor approximation to her MMS, regardless of how the other agents behave, even if they act adversarially (\Cref{thm:submod-rr-main}). Moreover, the safe strategy is particularly simple to implement: it requires only identifying the highest-marginal-value items in each round.

\paragraph{Additional Guarantees.}
One of the primary tenets of fair division is to ensure {\em fairness to all agents simultaneously}. Thus, our analysis focuses on the probability with which all agents get $\alpha$-MMS value simultaneously. It is nevertheless worth noting other properties achieved by our algorithms (see Section \ref{sec:additional} for details).

{\em Individual and Almost-all Guarantees.} From an optimization perspective, it may be desirable to achieve $\alpha$-MMS for each individual agent with high probability (whp), even when all agents may never get their $\alpha$-MMS value simultaneously. \sh{A small modification of our algorithm ensures that every agent obtains a constant fraction of MMS individually with probability  polynomially close to $1$.} 
On the other hand, if the goal is to ensure approximate MMS to all but a few agents, our algorithm can be modified to achieve this guarantee for  all but $\log(n)$ agents with probability $(1-1/\poly(n))$. Furthermore, these $\log(n)$ agents can be chosen uniformly at random. 
\medskip

{\em $1$-out-of-$d$-MMS Guarantee.} A well-studied relaxation of MMS is $1$-out-of-$d$ MMS for $d\ge n$, where the agent partitions the set of goods into $d$ bundles to maximize the minimum valued bundle \cite{budish2011combinatorial}. Note that $d=n$ is exactly the MMS, and the literature has focused so far on offline algorithms to find $1$-out-of-$d$ MMS allocation under additive valuations for $d=(1+\delta)n$ for constant $\delta>0$ \cite{hosseini2022ordinal,akrami2024improving}. 
Our algorithm and analysis can be extended to derive constant-approximate $1$-out-of-$(n+o(n))$ MMS guarantee simultaneously for all agents with probability $(1-1/n)$. 

\paragraph{Organization.} We formalize the problem and notation in \Cref{sec:prelims}, and present a simple lower bound example showing that it is not possible to achieve constant-MMS with better than constant probability, even for binary additive valuations. Our new concentration analysis for sampling without replacement from adaptively defined buckets is presented in \Cref{sec:mgf}. As a warm-up to our final algorithm and analysis, in \Cref{sec:additive} we consider instances where no agent has high-valued items, and focus on presenting the main ideas behind our sampling algorithm and probabilistic analysis. In \Cref{sec:submod} we present the complete algorithm and analysis for submodular valuations to achieve constant-MMS with constant probability. \Cref{sec:additional} shows the additional guarantees our algorithm exhibits in terms of individual guarantees, guarantees for almost all agents, and $1$-out-of-$d$ guarantees. Finally, \Cref{sec:xos-lower-bound} focuses on XOS valuations, and shows a strong lower bound result ruling out $1/\log \log n$-MMS with probability better than $o(1)$, as well as a tight algorithm achieving $(1/\log n)$-MMS. 


\subsection{Other Related Work}\label{sec:related_work}

Fair division, the problem of allocating scarce resources ``fairly'' among heterogeneous agents, has long been a central topic across computer science, economics, social choice theory, and operations research. Over the past several decades, it has been extensively studied in offline settings, where all agents and their valuations are known in advance (e.g., see latest surveys \cite{moulin2019fair,amanatidis2023fair} and books \cite{FD-book1,FD-book2}). More recently, motivated by practical applications such as job and task scheduling or disaster relief and food banks, attention has shifted to online settings in which goods or agents arrive over time. In this section, we highlight previous work that is most closely related to our setting. We refer the reader to excellent surveys \cite{aleksandrov2020online, amanatidis2023fair} for further references.

\medskip

\noindent \textit{Online fair division with agent arrivals.} To the best of our knowledge, \citet{kulkarni2025online} is the only paper that considers the agent arrival setting with indivisible goods prior to our work. As discussed in Section \ref{sec:intro}, the paper considers the stochastic case where the valuation function of the arriving agent is drawn from a finite distribution over known additive $k$ agent types. No results are known when the allocator has no knowledge of future agents' valuation functions that are chosen by an (oblivious) adversary.

Other works on online fair division with agent arrivals focus on the setting where goods are divisible. \citet{walsh2011online} consider the setting of dividing a single divisible good to $n$ agents arriving online. \citet{kash2014no} studied a setting called dynamic fair division to fairly allocate a set of divisible goods to agents who arrive but never depart from the system. Works by \citet{sinclair2022sequential} and \citet{banerjee2023online} studied the setting of allocating a set of goods to agents which are classified by their valuation functions.
\citet{friedman2015dynamic, friedman2017controlled} consider algorithms simultaneously achieve fairness notions and minimize disruptions where disruption is defined when a reallocation of a previously allocated resource happens.
In a set of related works, 
\citet{bogomolnaia2019simple} consider the setting where agent's valuation functions are drawn from a distribution, while \citet{donahue2020fairness} study the setting where uncertain number of agents from different "groups" arrive. They both define the notion of fairness as agents from different groups receive resource with equal probability.

\medskip

\noindent \textit{Online fair division with indivisible items.} While most of the current literature in online fair division focuses on divisible items, there are a few papers dealing with indivisible items. These mostly focus on the item-arrival setting. \citet{benade2018make} study the fairness notion of envy, minimizing the total envy over a certain period of time. \citet{he2019achieving} minimize the total number of re-allocations needed to achieve EF1. \citet{zhou2023multi} study MMS in the item-arrival setting for indivisible goods and chores.
\citet{zeng2020fairness} focus on the setting where agents' values are drawn from a random distribution, and study the fairness and efficiency trade-offs over time. \citet{procaccia2024honor} and \citet{yamada2024learning}
study how to allocate indivisible online items through the lens of bandit learning. A recent work \citet{halpern2025online} connects envy minimization in online indivisible item arrival setting to discrepancy theory.

\medskip

\noindent \textit{Concentration under adaptive sampling.}
Sampling without replacement is a classical source of negative dependence. In particular,
\citet{PanconesiSrinivasan97} showed that a one-sided negative cylinder condition suffices
for Chernoff concentration, via comparison of exponential moments with those of independent
Bernoulli variables; see also \citet{ImpagliazzoKabanets2010}. Our adaptive process need not
satisfy such negative dependence with respect to its actual marginals. Instead, we identify the
weaker Bernoulli domination property described above, which is sufficient to carry through the
same exponential-moment approach.

To the best of our knowledge, existing concentration results for sampling without replacement
do not cover the adaptive iterative process considered here. Work on negative dependence in
sampling establishes strong dependence properties for particular fixed sampling designs
\cite{BrandenJonasson2012}. There is also a literature on adaptive cluster sampling
\cite{Thompson1990,DryverThompson2007}, but its sampling model and objectives are different, focusing on
adaptive survey sampling and estimation rather than concentration for repeated sampling from
adaptively chosen buckets.

Other works, such as \citet{KuszmaulQi2021} and \citet{ChekuriQuanrud2018}, develop
concentration bounds for random variables generated adaptively over time using conditional
exponential moments and supermartingale arguments. These results do not directly apply in our
setting: at the natural granularity of an allocation round, the number of sampled items can be
$\Theta(|S|)$, so bounded-increment inequalities yield bounds that are too weak for our
application. We are not aware of a martingale-based argument that recovers the required
Chernoff-type concentration under our adaptive bucketing assumptions.

\section{Notation and Preliminaries}\label{sec:prelims}

{\em Notation.} For any positive integer $n$, let $[n] = \{1, 2, \ldots, n\}$, and for two positive integers $i,j$ where $i<j$, let $[i,j] = \{i, i+1, \cdots,j\}$. For a set or list \( N \), let $|N|$ denote the number of elements in $N$.

\subsection{Problem Setting}
\noindent \textbf{Instance.}
We consider the problem of fairly allocating $m$ indivisible goods among $n$ heterogeneous agents. Throughout, we use index $i$ to denote an agent and index $j$ to denote a good. Without loss of generality, we assume agent $i$ arrives at time $i$, and her preferences over the subsets of goods are represented by a non-negative non-decreasing valuation function $v_i : 2^{[m]} \rightarrow \mathbb{R}_+$. We denote an instance of the fair division problem as $\I = \{[n], [m], (v_i)_{i \in [n]}\}$. 
\medskip

\noindent{\bf Valuation functions.} In this paper, we study three well-known valuation functions, namely additive, submodular, and fractionally subadditive XOS. 

Valuation function $v_i$ is said to be additive if for any subset $S\subseteq [m]$ of goods, $v_i(S)=\sum_{j\in S} v_{i}(j)$ where $v_i(j)$ is the value that agent $i$ derives from good $j$. Additive is well known to strike a good balance between the expressivity of preferences and complexity of reporting/eliciting and cognition, and hence has been extensively studied not only within fair division (see survey \cite{amanatidis2023fair}, but also in optimization (e.g., \cite{santa-claus}). 


{\em Submodular} functions are much more general, and manages to capture the {\em diminishing marginal utility} of agents' preferences \cite{kauder2015history}.  
Formally, a valuation function $v_i$ is said to be submodular iff it satisfies the following:
\[
\forall S\subset T\subseteq [m] \mbox{ and } \forall j\in [m]\setminus T,\ \ \ v_i(T\cup \{j\}) - v_i(T) \ge v_i(S\cup\{j\}) -v_i(S)
\]

{\em Fractionally subadditive (XOS)} further generalizes submodular functions  \cite{xos}. An XOS function $v_i$ is defined as follows: let $v_{ik}$ for $k=1,\dots,l$ be a set of additive functions, where $l$ can be exponentially big. Then, 
\[ v_i(S)=\max_{k=1}^l v_{ik}(S). 
\]
Note that, even though $l$ is exponential, $v_i(S)$ could be evaluated in polynomial time. 
Note that, reporting a submodular or an XOS valuation function requires the agent to report $2^m$ values in general. \sh{Our algorithms instead require only limited information about the valuation function and have polynomial communication complexity.}

\medskip

\noindent{\bf Fairness Notion: Maximin Share (MMS).}
For any set \(S\) of goods and a positive integer \(d\), let \(\Pi_d(S)\) denote the collection of all partitions of \(S\) into \(d\) bundles. 
\begin{definition}[Maximin Share (MMS)]\label{def:mms}
Given an instance $\I = ([n], [m], \{v_i\}_{i\in[k]})$ of the fairdivision problem, the \emph{maximin share} (MMS) value of agent $i$ is defined as
\[
\MMS_i = \max_{P \in \Pi_n([m])} \min_{j \in [n]} v_i(P_j).
\]
\end{definition}
An allocation $(A_1,\dots,A_n)\in \Pi_n([m])$ is said to be an MMS allocation if $v_i(A_i) \ge \MMS_i$ for all agents $i$. It is said to be $\alpha$-approximate MMS or just $\alpha$-MMS, for an $\alpha\in[0,1]$ if, \[ v_i(A_i) \ge \alpha * \MMS_i,\ \ \ \forall i\in[n].\]

\noindent{\bf \textsc{Online MMS} with adversarial agent arrival.}
We study the problem of computing an (approximate) MMS allocation in the online setting where agents arrive online over time. When agent $i$ arrives at time $i$, her valuation function is revealed to the algorithm, and it must irrevocably allocate a subset of goods to the agent from the available set of goods. 

We consider an adversarial arrival model where the adversary has access to the algorithm we apply to allocate the goods for choosing the agents' valuations. However, the adversary does not have access to the random coin tosses of the algorithm. In fact, it is easy to see that against an adaptive adversary, who has access to the latter as well, no non-trivial guarantees are possible \cite{kulkarni2025online}. Therefore, our adversary model is the strongest possible for which one can hope to get non-trivial guarantees, as far as we understand. 

\begin{definition}[\((\alpha, \beta)\)-competitive algorithm]
For $\alpha \in [0,1]$ and $\beta \in [0,1]$, we say that an online algorithm is $(\alpha, \beta)$-$\MMS$ competitive if it ensures $\alpha$-approximate MMS allocation to {\em all} agents ex-post, with probability at least $\beta$. 
\end{definition}


\noindent{\bf Normalization.} 
To aid our analysis we will assume that the given instance is normalized, without loss of generality (wlog), as defined below (see Appendix \ref{app:norm} for why it is wlog).

\begin{definition}[Normalized Instance]\label{def:norm}
An input instance $\I = ([n], [m], \{v_i\}_{i\in[n]})$ of an \textsc{OnlineMMS} problem with (non-binary) additive valuations is \emph{normalized} if for every agent $i\in[n]$, \(\MMS_i = 1\) and the total value $v_i([m])=n$. This in turn implies $v_{i}(j)\in [0,1]$, for all agent $i$ and goods $j$.

We call an instance $\I$ with submodular or XOS valuations normalized if, for every agent $i \in [n]$, $\MMS_i = 1$. For an MMS partition $(A_1,\dots,A_n)$ of agent $i$ it follows that $v_i(A_k)\ge 1$ for each $k\in[n]$. 
\end{definition}

\noindent{\bf Probabilistic/Fractional Allocations: Function Extensions.}
Our work considers randomized allocation algorithms in which goods are assigned to agents with certain probabilities. To analyze the expected value received by an agent under such randomized outcomes, we must extend valuation functions to probabilistic or fractional allocations. For additive valuations, this extension is straightforward: if an agent $i$ receives good $j \in [m]$ with probability $p_j$, then their expected value is
\(
\sum_{j \in [m]} v_{i}(j) p_j.
\)

For non-additive valuations, defining the value of a fractional allocation vector $\vec{p} \in [0,1]^m$ is more subtle. In particular, we need to understand the expected value an agent receives when each good $j \in [m]$ is allocated with marginal probability $p_j$. There are multiple natural ways to interpret such fractional allocations. Two important notions that have been extensively studied in the literature are the multilinear extension and the concave extension of submodular valuation functions. We give the details of these extensions in Appendix \ref{app:prelims-submod}.

\subsection{An Approximation vs Success Probability Tradeoff}
\label{sec:lower}
To set the stage for our upper bounds, we first show that no algorithm, randomized or not, can achieve $(c, (1-\frac{c^2}{2}))$ approximation for any $c>0$, i.e., $c$-MMS with at least $(1-\frac{c^2}{2})$ probability is not possible.

\begin{theorem}
\label{thm:lb_c}
    For any $c\in(0,1]$, there exists an \textsc{OnlineMMS} instance $\I$ with binary additive valuations, where no randomized algorithm is \((c, \beta)\)-competitive for any $\beta\geq1-\frac{c^2}{2}$.
\end{theorem}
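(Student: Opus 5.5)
The plan is to use Yao's minimax principle. I will construct a distribution $\mathcal D$ over binary additive instances, fixed independently of any algorithm. I will then show that every deterministic online algorithm fails to give $c$-MMS to some agent with probability at least $c^2/2$ over $\mathcal D$. A randomized algorithm is a distribution over deterministic ones, so some instance in the support of $\mathcal D$ then defeats it with probability at least $c^2/2$. That rules out $(c,\beta)$-competitiveness for $\beta\ge 1-c^2/2$; I will make the failure probability strictly exceed $c^2/2$ so that the boundary case $\beta=1-c^2/2$ is covered. The obliviousness of the adversary is exactly what makes a fixed $\mathcal D$ legitimate.

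\emph{Construction.} There are $n$ agents and $m=tn$ goods, with $t$ a parameter to tune.
\begin{itemize}
\item Agent $1$ values every good at $1$. Then $\MMS_1=t$, so agent $1$ must receive at least $a:=\lceil ct\rceil$ goods. A deterministic algorithm sees only this valuation at time $1$, so its bundle $A$ is a fixed set, and wlog $|A|=a$ (giving more only hurts).
\item Agents $2,\dots,n$ share the same binary valuation: the indicator of a set $T$ of $n$ goods drawn uniformly from $[m]$. Each of them has $\MMS_i=1$, so each needs at least one good of $T$. Since $c>0$, having positive value means having value at least $1$.
\end{itemize}
The $n-1$ later agents need $n-1$ distinct goods of $T$, so every algorithm fails whenever $|A\cap T|\ge 2$. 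Conversely, when $|A\cap T|\le 1$ the algorithm could succeed, so this event is exactly the failure event to bound.

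\emph{Computation.} The quantity $X=|A\cap T|$ is hypergeometric: $n$ draws from $m$ goods, $a$ of them ``marked''. I need a lower bound on
\[
\Pr[X\ge 2]=1-\Pr[X=0]-\Pr[X=1],
\]
which I will compute explicitly from the hypergeometric formulas.

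\emph{Main obstacle: constants.} As $n\to\infty$ with $a/m\approx c/n$, $X$ tends to $\mathrm{Poisson}(a/t)$. The limit $1-e^{-\lambda}(1+\lambda)$ with $\lambda=c$ is only about $c^2/2-c^3/3$, which is slightly too small. I will recover the deficit in two ways.
\begin{itemize}
\item \emph{Rounding.} I will choose $t$ so that $\lceil ct\rceil/t$ exceeds $c$ enough. With $t$ small the ceiling helps substantially; for instance $t=1$ forces $a=1$ and $\lambda=1\ge c$.
\item \emph{Small $n$.} I will take $n$ small enough that sampling without replacement raises the collision probability above its Poisson limit.
\end{itemize}
If neither tuning suffices for all $c$, the fallback is to split into regimes: small $c$ versus $c$ near $1$, each with its own choice of $t$ and $n$. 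I would also let agent $1$ reappear, i.e.\ use several ``all-goods'' agents before the $T$-agents, so that the effective removal rate is larger. The crux is to verify $\Pr[X\ge 2]\ge c^2/2$ uniformly over $c\in(0,1]$ for some admissible choice of parameters. I expect this calculation, rather than the reduction, to be the real work.
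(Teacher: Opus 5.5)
\textbf{Gap: the only nontrivial step is never carried out, and for your construction it fails for some $c$.}

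Your reduction is correct and matches the paper's setup. It uses Yao's principle, the fixed bundle $A$ of agent~1 with $|A|=a=\lceil ct\rceil$, and failure whenever $X=|A\cap T|\ge 2$. But you never prove the only nontrivial step, $\Pr[X\ge 2]>c^2/2$, and two of your proposed remedies point the wrong way.

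\begin{itemize}
\item For fixed $t$, the $n\to\infty$ limit of $X$ is $\mathrm{Bin}(a,1/t)$, not Poisson.
\item Sampling without replacement \emph{lowers} joint inclusion: $\Pr[\{e,f\}\subseteq T]=\frac{n(n-1)}{tn(tn-1)}<t^{-2}$ for $t\ge 2$, and this increases with $n$. So taking $n$ small only hurts.
\end{itemize}

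More seriously, for this construction the crux is false at some $c$. For every $t\ge 2$ and every $n$,
\[
\Pr[X\ge 2]\;\le\;\mathbb{E}\tbinom{X}{2}\;=\;\tbinom{a}{2}\frac{n(n-1)}{tn(tn-1)}\;<\;\frac{a(a-1)}{2t^2}.
\]
At $c=1$ we have $a=t$, so this bound equals $\frac{t-1}{2t}<\frac12$. At $c=\frac12$ it is at most $\frac{t^2-1}{8t^2}<\frac18$. Hence no choice of $t$ and $n$ works for $c\in\{\frac12,1\}$.

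Your ``several all-goods agents'' fallback is left unspecified. It does help for $c\in(\frac12,1)$. Take $t=2$ and $j$ all-goods agents, each needing two goods; then $X=|A\cap T|$ is hypergeometric and symmetric about $j$. The failure probability is therefore at least $\Pr[X>j]=\frac12\bigl(1-\Pr[X=j]\bigr)\to\frac12>c^2/2$. However, this stays strictly below $\frac12$, so it still cannot handle $c=1$.

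\textbf{What the paper does, and where it also falls short.} The paper's concrete computation takes $t$ to be the smallest integer exceeding $1/c$; its choice $t=\lceil 1/(c-\epsilon)\rceil$ amounts to this. Then $1<ct\le 1+c\le 2$, so agent~1 needs exactly two goods. The failure probability is $\binom n2/\binom{tn}2\to 1/t^2$. Since $t\le 1/c+1$, this limit is at least $c^2/(1+c)^2$. That exceeds $c^2/2$ for all $c<\sqrt2-1$, and more precisely whenever $ct<\sqrt2$; you can adopt this part directly.

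Be aware, though, that the paper's closing inequality $\frac{n(n-1)}{tn(tn-1)}>\frac{c^2}{2}$ fails exactly where your plan stalls:
\begin{itemize}
\item at $c=1$: $t=2$, and $\frac14<\frac12$;
\item at $c=\frac12$: $t=3$, and $\frac19<\frac18$.
\end{itemize}
What the single-agent construction actually proves for all $c\in(0,1]$ is impossibility for every $\beta>1-1/t^2$, hence for every $\beta>1-c^2/4$. Obtaining the stated constant $c^2/2$ on the whole range, in particular at $c=\frac12$ and $c=1$, would require a genuinely different gadget.
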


\begin{proof}
     We'll construct our instance $\I$ as below. Let $n$ be a sufficiently large integer. The instance contains $n$ agents and $\lceil\frac{1}{c-\epsilon}\rceil n$ goods where $\epsilon>0$ is a small constant. The first agent values every good at $1$, while the remaining $n-1$ agents value each good in a uniformly random set of $n$ out of $\lceil\frac{1}{c-\epsilon}\rceil n$ goods at $1$ while all other goods at $0$. 

    For a randomized algorithm to be $(c,\beta)$-competitive, it must allocate at least $c\cdot\lceil\frac{1}{c-\epsilon}\rceil>1$ goods to the first agent since her MMS value is $\lceil\frac{1}{c-\epsilon}\rceil$. This means we must allocate at least $2$ goods to the first agent. Denote the set of $n$ random goods which are valued $1$ for the remaining $n-1$ agents as $S$. To ensure the remaining agents each gets a $c$-MMS bundle, we can at most allocate $1$ good from $S$ to the first agent, otherwise at least one of the remaining agents will get no good in $S$, which yields a value $0$. Thus $\beta<1-\probb{\text{$2$ goods in $S$ are allocated to the first agent}}=1-\binom{n}{2}/\binom{\lceil\frac{1}{c-\epsilon}\rceil n}{2}=1-\frac{n(n-1)}{\lceil\frac{1}{c-\epsilon}\rceil n(\lceil\frac{1}{c-\epsilon}\rceil n-1)}<1-\frac{c^2}{2}$. The last inequality holds true if we pick $n$ large enough and $\epsilon$ small enough.
\end{proof}

\section{Concentration for Iterative Hypergeometric Sampling}
\label{sec:mgf}

In this section we prove concentration bounds for iterative sampling without
replacement from adaptively chosen buckets. 
As discussed in \Cref{sec:techoverview-conc}, adaptivity can destroy negative dependence.
Nevertheless, we show that an upper-cylinder bound survives both adaptive
bucketing and repeated sampling. This can, in turn, be used to bound the moment generating function of the desired random variable, from which concentration follows using standard techniques.  

\subsection{Single-iteration sampling without replacement}

We will first consider a single round of sampling without replacement, but with adaptively defined buckets, formalized as follows. 

\begin{definition}
    We consider the following sampling without replacement (SwR) processes. In each process, we begin with a ground set $G$ of $m$ elements and return a sample $A\subset G$.
    \begin{itemize}
        \item The {\em fixed-probability SwR}, that we denote $\swr(G,p)$, is parameterized by $p\in (0,1)$. Let $P$ be a random variable taking values in $\{\lfloor pm\rfloor, \lceil pm\rceil\}$, with $\mathbb{E}[P]=pm$. We set $A$ to be a uniformly random subset of $G$ of size $P$.
        \item The {\em multi-bucket SwR} with parameter $p\in (0,1)$, that we denote $\mswr(G,p)$, is defined as follows. 
        We initialize $A=\emptyset$, $i=0$, and $G_0=G$. While $G_i$ is not empty, let $B_i$ be any subset of $G_i$, and $A_i$ be drawn from $\swr(B_i,p_i)$ where $p_i\le p$. Set $G_{i+1}=G_i\setminus B_i$, $A=A\cup A_i$, and $i=i+1$, and repeat. Return $A$.
    \end{itemize}
\end{definition}

We emphasize that the ``buckets'' $B_i$ for $i>1$ defined in the $\mswr$ process can be chosen adaptively depending on the {\em instantiations} of $A_1, \cdots, A_{i-1}$.

We first show that, for both processes above, the probability of selecting every element of any fixed set is no greater than the corresponding probability under an independent Bernoulli process in which each element is selected independently with probability \(p\). We refer to this property as {\em upper-cylinder domination by an independent Bernoulli process}.

\remove{
\begin{lemma}
\label{lem:bucket-cylinder}
Let $G$ be nonempty and $A\sim\operatorname{SwR}(G,\ell)$ where $|G|=m$. For every fixed $T\subseteq G$,
\[
\Pr[T\subseteq A]\le (\frac{\ell}{m})^{|T|}.
\]
\end{lemma}

This lemma follows from the fact the the {\em fixed-size SwR} process is a sampling without replacement process which satisfies a stronger property called negative association \cite{joag1983negative} thus satisfies {\em one-sided negative cylinder dependence}. We restate a short direct proof here.

\begin{proof}
Fix set $|T|$, the probability that all items in $T$ are sampled into $A$ is
\[
\Pr[T\subseteq A]=\frac{\binom{m-|T|}{\ell-|T|}}{\binom{m}{\ell}}
=\frac{(\ell)_{|T|}}{(m)_{|T|}}\le (\frac{\ell}{m})^{|T|},
\]
where $(r)_{|T|}:=r(r-1)\cdots(r-|T|+1)$ is the falling factorial.
\end{proof}
}

\begin{lemma}
\label{lem:bucket-cylinder}
    Let $G$ be nonempty and $A\sim\operatorname{SwR}(G,p)$ where $|G|=m$. For every fixed $T\subseteq G$,
\[
\Pr[T\subseteq A]\le p^{|T|}.
\]
\end{lemma}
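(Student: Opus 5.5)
The plan is to induct on $t=|T|$ by conditioning on one element of $T$ at a time. The point of this route is that conditioning an $\swr$ sample on containing a fixed element produces another $\swr$ instance, on a smaller ground set and with a parameter no larger than $p$. If $T=\emptyset$ the bound is trivial. Write $pm=k+\theta$ with $k=\lfloor pm\rfloor$ and $\theta\in[0,1)$, so that $P=k+1$ with probability $\theta$ and $P=k$ otherwise. If $\theta>0$ then $k<pm<m$, so $m-k\ge 1$; this slack is used in the key step.

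Fix $e\in T$. A uniformly random $\ell$-subset of $G$ contains $e$ with probability $\ell/m$, so $\Pr[e\in A]=\E[P]/m=p$. Conditioned on $e\in A$ and $P=\ell$, the set $A\setminus\{e\}$ is a uniformly random $(\ell-1)$-subset of $G\setminus\{e\}$. By Bayes' rule the conditional law of $P$ is the size-biased law $\Pr[P=\ell\mid e\in A]=\ell\Pr[P=\ell]/\E[P]$. This law is still supported on two consecutive integers, so $P':=P-1$ takes values in two consecutive integers as well. Conditioned on $e\in A$, the set $A\setminus\{e\}$ is therefore distributed as $\swr(G\setminus\{e\},p')$ with
\[
p' \;=\; \frac{\E[P(P-1)]}{\E[P]\,(m-1)}.
\]
One small point needs care: $\swr$ is defined with $P'\in\{\lfloor p'(m-1)\rfloor,\lceil p'(m-1)\rceil\}$. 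Since $P'$ lives on two consecutive integers with mean $p'(m-1)$, this matches the definition. The degenerate case where $P'$ is deterministic and integral is also covered. Then
\[
\Pr[T\subseteq A]=p\cdot\Pr\bigl[T\setminus\{e\}\subseteq A\setminus\{e\}\bigm| e\in A\bigr]\le p\cdot (p')^{t-1}
\]
by the induction hypothesis applied to the conditional process. It thus suffices to show $p'\le p$.

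That inequality is the main (though elementary) obstacle. The naive route of writing $\Pr[T\subseteq A]=\E[(P)_t]/(m)_t$ and applying Jensen fails, because $(\ell)_t$ is convex in $\ell$, so randomizing the size pushes the bound in the wrong direction. We must instead use the slack that $\ell(\ell-1)/(m(m-1))$ is noticeably smaller than $(\ell/m)^2$. Concretely, $\E[P(P-1)]=k(k-1)+2k\theta$ and $\E[P]=k+\theta$. The condition $p'\le p$ is equivalent to $m\,\E[P(P-1)]\le (m-1)(k+\theta)^2$. Expanding, the difference of the two sides is
\[
(m-1)(k+\theta)^2-m\bigl(k^2-k+2k\theta\bigr)\;=\;\theta^2(m-1)+k(m-k-2\theta).
\]
If $\theta=0$ this is $k(m-k)\ge 0$. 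If $\theta>0$, then $m-k\ge1$ gives a lower bound of $\theta^2(m-1)+k(1-2\theta)$. When $\theta\le 1/2$ this is clearly nonnegative. When $\theta>1/2$ we use $k\le m-1$ to get at least $(m-1)(\theta^2-2\theta+1)=(m-1)(1-\theta)^2\ge0$. Hence $p'\le p$, and the induction closes to give $\Pr[T\subseteq A]\le p^{|T|}$.
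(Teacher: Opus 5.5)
Your proof is correct, and it takes a different route from the paper's. The paper writes $\Pr[T\subseteq A]$ in closed form as the mixture $(1-x)\frac{(y)_q}{(m)_q}+x\frac{(y+1)_q}{(m)_q}$, with $y=\lfloor pm\rfloor$, $x=pm-y$ and $q=|T|$. It then proves a separate claim that this mixture is at most $((y+x)/m)^q$. In that claim the difference is treated as a concave function of the mixing weight $x$, its maximizer $x^*$ is found from the first-order condition, and the identity $Y^+-Y=\frac{q}{y-q+1}Y$ shows the maximum value is nonpositive.

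You instead use self-reducibility. Conditioning on $e\in A$ size-biases the two-point size law, which stays on consecutive integers, so the conditional process is again an \swr{} instance on $G\setminus\{e\}$. The $q$-fold bound then reduces, through the chain of conditional inclusion probabilities, to the single second-moment inequality $m\,\E[P(P-1)]\le (m-1)\E[P]^2$, i.e.\ $p'\le p$, which is a quadratic check in $\theta$.

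What your route buys:
\begin{itemize}
\item You never handle falling factorials or powers of general order; only a degree-two computation is needed.
\item You treat the boundary cases uniformly. The paper's claim is stated for $1\le q\le y$ and divides by $y-q+1$, so the cases $\lfloor pm\rfloor=0$ and $|T|=\lfloor pm\rfloor+1$ need a separate (easy) check.
\item You avoid a step the paper uses without comment: the concavity argument needs the unconstrained maximizer to satisfy $x^*\le 1$. This does hold, but it is not verified there.
\end{itemize}
What the paper's route buys is an explicit formula for the cylinder probability and a direct one-shot inequality, with no conditioning structure.

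Two cosmetic points in your write-up:
\begin{itemize}
\item $p'$ can equal $0$ (when $\lfloor pm\rfloor=0$), which lies outside the range $(0,1)$ in the definition. State the induction for $p\in[0,1)$; the case $p=0$ is trivial.
\item Your formula for $p'$ is undefined when $m=1$. In that case $|T|\le 1$, so the step needs no induction hypothesis.
\end{itemize}
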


\begin{proof}
    Recall that we select a random subset of size either $y$ or $y+1$ where $y = \lfloor pm\rfloor$. We choose size $y+1$ with probability $x=pm -\lfloor pm\rfloor$ and $y$ with probability $1-x$.
Consider a fixed subset $T$ with $|T|=q$. The probability that all items in $T$ are sampled into $A$ is
\[
\Pr[T\subseteq A]=(1-x)\frac{\binom{m-q}{y-q}}{\binom{m}{y}}+x\frac{\binom{m-q}{y+1-q}}{\binom{m}{y+1}}
=(1-x)\frac{(y)_q}{(m)_q}+x\frac{(y+1)_q}{(m)_q},
\]
where $(r)_q:=r(r-1)\cdots(r-q+1)$ is the falling factorial. It then remains to show that
\[
(1-x)\frac{(y)_q}{(m)_q}+x\frac{(y+1)_q}{(m)_q}\le \left(\frac{y+x}{m}\right)^q=p^q.
\]
This inequality, restated as \Cref{claim:binom}, has a simple algebraic proof that we provide in \Cref{sec:sampling-ext}.
\end{proof}

\begin{lemma}[Adaptive bucketing]
\label{lem:adaptive-cylinder}
Let $A\sim\operatorname{MSwR}(G,p)$. For every fixed $T\subseteq G$,
\[
\Pr[T\subseteq A]\le p^{|T|}.
\]
\end{lemma}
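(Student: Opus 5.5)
We will prove the statement by induction on the number of rounds of the $\mswr$ process, conditioning on the history at each step. The process can be arbitrarily adaptive, so we cannot fix the buckets in advance. Instead we will prove a slightly stronger, history-conditioned statement.

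\emph{Claim.} For any stage $i$ and any realized history $\mathcal{H}_i=(B_0,A_0,\dots,B_{i-1},A_{i-1})$, and for every fixed $T\subseteq G_i$, the probability (conditioned on $\mathcal{H}_i$) that the remaining process selects all of $T$ is at most $p^{|T|}$.

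Applying the claim at $i=0$ with $G_0=G$ gives the lemma. Elements of $T$ outside $G_i$ need no separate treatment: the original $T$ is handled directly at stage $0$. In the inductive step, if $T\not\subseteq G_i$ the corresponding conditional probability of covering the remaining part is all that matters, so we always restrict attention to $T\cap G_i$.

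For the inductive step, condition on $\mathcal{H}_i$. The bucket $B_i$ and the parameter $p_i\le p$ are then determined, or, if randomized, we condition on them as well. Split $T=T_1\cup T_2$ with $T_1=T\cap B_i$ and $T_2=T\setminus B_i\subseteq G_{i+1}$. The event $T\subseteq A$ requires $T_1\subseteq A_i$, since elements of $B_i$ are never considered again. It also requires $T_2$ to be covered by the later rounds. Conditioning further on the realization of $A_i$, the induction hypothesis applied to the history $\mathcal{H}_{i+1}$ and the set $T_2\subseteq G_{i+1}$ bounds the second probability by $p^{|T_2|}$, uniformly over every realization of $A_i$. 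Hence
\[
\Pr[T\subseteq A\mid \mathcal{H}_i]\le \E\bigl[\mathbbm{1}[T_1\subseteq A_i]\cdot p^{|T_2|}\mid\mathcal{H}_i\bigr]\le p_i^{|T_1|}p^{|T_2|}\le p^{|T|},
\]
where the middle inequality uses \Cref{lem:bucket-cylinder} for $\swr(B_i,p_i)$.

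The induction is well-founded because the process terminates: each nonempty bucket strictly shrinks $G_i$. To be safe about empty buckets, we note that the claim trivially holds at termination, where $G_i=\emptyset$ forces $T=\emptyset$ and the bound is $1=p^0$. If buckets could be empty indefinitely, we would induct on the number of rounds actually executed, or truncate at a finite horizon and take a limit.

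The main subtlety is the order of conditioning. The bound on future rounds must hold \emph{uniformly} over the realization of $A_i$, because later buckets depend on $A_i$. This uniformity is exactly what the history-conditioned form of the claim provides. Once the claim is phrased this way, the product structure follows immediately, and no negative-correlation property across rounds is needed.
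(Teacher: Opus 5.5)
Your proof is correct and is essentially the paper's argument. You induct over the shrinking ground set, uniformly over the preceding history, and condition on the current bucket and its sample. You then bound $\Pr[T\cap B_i\subseteq A_i]$ by $p^{|T\cap B_i|}$ via the single-bucket lemma, and bound the remainder by $p^{|T\setminus B_i|}$ uniformly over the realization of $A_i$; your handling of empty buckets is extra care that the paper's induction on $|G|$ leaves implicit.
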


\begin{proof}
We induct on $|G|$, uniformly over the history preceding the iteration.
Condition on the first bucket $B_1$ and its sampling probability $p_1\le p$,
and let $A_1$ be its sample. For every realization of $A_1$, the remaining
process acts only on $G\setminus B_1$. By induction, its probability of
sampling all of $T\setminus B_1$ is at most $p^{|T\setminus B_1|}$, regardless
of its dependence on $A_1$. Hence, by Lemma~\ref{lem:bucket-cylinder},
\[
\Pr[T\subseteq A\mid B_1, p_1]
\le \Pr[T\cap B_1\subseteq A_1\mid B_1,p_1]\cdot p^{|T\setminus B_1|}
\le p^{|T\cap B_1|}p^{|T\setminus B_1|}=p^{|T|}.
\]
Averaging over $B_1$ and $p_1$ completes the induction.
\end{proof}

While upper-cylinder domination is sufficient to obtain concentration for the number of surviving elements in some set $S\subseteq G$, for our analysis we will also require concentration for linear functions over the set $S$. The following lemma provides a weighted MGF bound that enables concentration.

\begin{lemma}[Weighted MGF bound for adaptive sampling]
\label{lem:weighted-mgf}
Let $A \sim \mswr(G,p)$, and let $(x_e)_{e\in G}$ be arbitrary
weights with $x_e\in[0,1]$. Define $X := \sum_{e\in A} x_e$ and $W := \sum_{e\in G} x_e$. Then, for every $\lambda\ge 0$,
\[
    \E[e^{\lambda X}]
    \le
    \exp\left(pW(e^\lambda-1)\right).
\]
\end{lemma}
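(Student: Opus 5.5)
The plan is to combine the upper-cylinder domination of Lemma~\ref{lem:adaptive-cylinder} with a standard power-series comparison, as in the Panconesi--Srinivasan approach. For each $e\in G$ let $X_e=\mathbbm{1}[e\in A]$, so $X=\sum_e x_eX_e$. The first step reduces the weighted exponential to a nonnegative combination of cylinder events. Since $X_e\in\{0,1\}$ and $x_e\in[0,1]$, convexity of $t\mapsto e^{\lambda t}$ on $[0,1]$ gives
\[
e^{\lambda x_eX_e}=1+X_e\bigl(e^{\lambda x_e}-1\bigr)\le 1+X_e\,x_e\bigl(e^{\lambda}-1\bigr).
\]
Writing $c_e:=x_e(e^\lambda-1)\ge 0$ (using $\lambda\ge 0$), we obtain
\[
e^{\lambda X}=\prod_{e\in G}e^{\lambda x_eX_e}\le\prod_{e\in G}(1+c_eX_e)=\sum_{T\subseteq G}\Bigl(\prod_{e\in T}c_e\Bigr)\prod_{e\in T}X_e .
\]
This is a finite multilinear expansion, and every coefficient $\prod_{e\in T}c_e$ is nonnegative.

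The second step takes expectations term by term. Note that $\E[\prod_{e\in T}X_e]=\Pr[T\subseteq A]$, and Lemma~\ref{lem:adaptive-cylinder} bounds this by $p^{|T|}$ for every fixed $T$, no matter how the buckets and the $p_i\le p$ were chosen adaptively. Because the coefficients are nonnegative, we may replace each cylinder probability by $p^{|T|}$:
\[
\E[e^{\lambda X}]\le\sum_{T\subseteq G}\prod_{e\in T}(p\,c_e)=\prod_{e\in G}(1+p\,c_e)\le\prod_{e\in G}e^{p c_e}=\exp\bigl(pW(e^\lambda-1)\bigr),
\]
where the last inequality uses $1+z\le e^z$.

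The main point to be careful about is the first step. We must linearize each factor into the form $1+c_eX_e$ with $c_e\ge 0$ \emph{before} taking expectations, so that only upper-cylinder probabilities appear, with nonnegative coefficients. Without this, we would need negative association, which Appendix~\ref{app:negassoc} shows fails here. The two facts that make this linearization valid are the convexity bound $e^{\lambda x}-1\le x(e^\lambda-1)$ for $x\in[0,1]$ and the assumption $\lambda\ge 0$. Once these are in place, the remaining steps are routine.
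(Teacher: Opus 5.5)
Your proof is correct and is essentially the paper's argument: you expand $\prod_{e}(1+c_eX_e)$ into cylinder indicators with nonnegative coefficients, bound each $\Pr[T\subseteq A]$ by $p^{|T|}$ via Lemma~\ref{lem:adaptive-cylinder}, and finish with $1+z\le e^z$. The only difference is where the chord bound enters: the paper expands the exact identity $e^{\lambda x_eX_e}=1+(e^{\lambda x_e}-1)X_e$, whose coefficients are already nonnegative, and applies $e^{\lambda x_e}-1\le x_e(e^\lambda-1)$ only at the very end. So applying that bound before taking expectations, as you stress, is harmless but not actually necessary.
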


\begin{proof}
For each $e\in G$, let $X_e=\mathbbm{1}[e\in A]$ and set
$z_e=e^{\lambda x_e}-1\ge 0$. Then, since $X_e\in\{0,1\}$,
\[
e^{\lambda X}
= \prod_{e\in G} e^{\lambda x_e X_e}
=
\prod_{e\in G}(1+z_eX_e)
=
\sum_{T\subseteq G}
\left(\prod_{e\in T}z_e\right)
\mathbbm{1}[T\subseteq A].
\]
Taking expectations and applying Lemma~\ref{lem:adaptive-cylinder},
\[
\E[e^{\lambda X}]
\le
\sum_{T\subseteq G}
\left(\prod_{e\in T}z_e\right)p^{|T|}
=
\prod_{e\in G}(1+pz_e)
\le
\exp\left(p\sum_{e\in G}z_e\right).
\]
Finally, $x_e\in[0,1]$ implies 
$e^{\lambda x_e}-1
\le x_e(e^\lambda-1)$.
Thus
\[
\E[e^{\lambda X}]
\le
\exp\left(p(e^\lambda-1)\sum_{e\in G}x_e\right)
=
\exp\left(pW(e^\lambda-1)\right).
\]
\end{proof}


\subsection{Iterative sampling without replacement}
\label{sec:iterative-cylinder}

We next show that the cylinder domination composes across iterations.

\begin{definition}
    An {\em iterative SwR} process over $n$ rounds with parameters $p_1, \cdots, p_n$, that we denote by $\iswr(G, p_1, \cdots, p_n)$, is described as follows. We start with a ground set $G$. Initialize $G_1=G$. For $j\in [n]$, let $A_j$ be drawn from $\mswr(G_j,p_j)$. Set $G_{j+1} = G_j\setminus A_j$. Return $A_1, \cdots, A_n, G_{n+1}$.
\end{definition}

\begin{lemma}[Composition across iterations]
\label{lem:iterative-cylinder}
Consider an iterative SwR process with parameters $p_1,\ldots,p_n$.
Let $A=G\setminus G_{n+1}$ and $q=\prod_{i=1}^n(1-p_i)$.
For every fixed $T\subseteq G$,
\[
\Pr[T\subseteq A]\le(1-q)^{|T|}.
\]
Moreover, let $(x_e)_{e\in G}$ be arbitrary weights with
$x_e\in[0,1]$, $X := \sum_{e\in A} x_e$, and $W := \sum_{e\in G} x_e$. Then, for every $\lambda\ge 0$,
\[
\E[e^{\lambda X}]
\le
\exp\left((1-q)W(e^\lambda-1)\right).
\]
\end{lemma}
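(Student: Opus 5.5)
We prove the cylinder bound by induction on the number of rounds $n$. The induction hypothesis is taken uniformly over the starting ground set and over all adaptive choices, exactly as in the proof of Lemma~\ref{lem:adaptive-cylinder}.

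\textbf{Base case.} For $n=1$ we have $A=A_1\sim\mswr(G,p_1)$ and $1-q=p_1$, so the bound is precisely Lemma~\ref{lem:adaptive-cylinder}.

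\textbf{Inductive step.} Condition on the first round's outcome $A_1$. The remaining $n-1$ rounds form an iterative SwR process $\iswr(G\setminus A_1,p_2,\ldots,p_n)$. Its bucket choices may depend on $A_1$, but its parameters are fixed. Write $q'=\prod_{i\ge 2}(1-p_i)$, so that $q=(1-p_1)q'$. The set $T$ is entirely removed iff $T\cap A_1$ is removed in round~1 and $T\setminus A_1$ is removed later. By induction, conditioned on $A_1$, the latter has probability at most $(1-q')^{|T\setminus A_1|}$. Hence
\[
\Pr[T\subseteq A]\le \E\Big[\mathbbm{1}[\cdot]\ldots\Big]=\E\Big[(1-q')^{|T\setminus A_1|}\Big]=\sum_{S\subseteq T}\Pr[A_1\cap T=S]\,(1-q')^{|T|-|S|}.
\]

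\textbf{Main obstacle.} Lemma~\ref{lem:adaptive-cylinder} only bounds upper-cylinder probabilities $\Pr[S\subseteq A_1]$, not exact events $\Pr[A_1\cap T=S]$, so the sum above cannot be bounded term by term. Instead, write $(1-q')=1-q'$ and expand the function of $Y_e=\mathbbm{1}[e\in A_1]$:
\[
(1-q')^{|T|-\sum_{e\in T}Y_e}=\prod_{e\in T}\big((1-q')+q'Y_e\big)=\sum_{S\subseteq T}(1-q')^{|T|-|S|}q'^{|S|}\,\mathbbm{1}[S\subseteq A_1].
\]
All coefficients here are nonnegative. Taking expectations and applying Lemma~\ref{lem:adaptive-cylinder} to each term gives the bound
\[
\sum_S(1-q')^{|T|-|S|}(q'p_1)^{|S|}=(1-q'+q'p_1)^{|T|}=(1-q)^{|T|},
\]
since $1-q'+q'p_1=1-q'(1-p_1)=1-q$.

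\textbf{MGF bound.} The MGF statement then follows verbatim from the proof of Lemma~\ref{lem:weighted-mgf}. Expand $e^{\lambda X}=\sum_{T\subseteq G}\prod_{e\in T}z_e\,\mathbbm{1}[T\subseteq A]$ with $z_e=e^{\lambda x_e}-1\ge0$. Substitute the cylinder bound with $1-q$ in place of $p$ to get $\prod_e(1+(1-q)z_e)\le\exp\big((1-q)\sum_e z_e\big)$. Finally use $z_e\le x_e(e^\lambda-1)$.
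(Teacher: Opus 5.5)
Your proof is correct, and it is the paper's argument in mirror image. The paper peels off the last round: it applies Lemma~\ref{lem:adaptive-cylinder} conditionally on $\mathcal F_{i-1}$ and expands $p_i^{|T\setminus\mathcal{A}_{i-1}|}$ against the induction hypothesis for the prefix. You peel off the first round and expand $(1-q')^{|T\setminus A_1|}$ against Lemma~\ref{lem:adaptive-cylinder}, which is why you need the hypothesis to be uniform over ground sets and adaptive strategies. The placeholder $\E[\mathbbm{1}[\cdot]\ldots]$ should simply read $\E\bigl[\Pr[T\subseteq A\mid A_1]\bigr]$, and your MGF step is exactly the paper's.
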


\newcommand{\A}{\mathcal A}
\begin{proof}
Let $\A_i=G\setminus G_{i+1}$ and
$r_i=1-\prod_{j=1}^i(1-p_j)$, with $\A_0=\varnothing$ and $r_0=0$.
We inductively prove $\Pr[T\subseteq \A_i]\le r_i^{|T|}$ for every $T$.
Condition on the full history $\mathcal F_{i-1}$ before iteration $i$.
By Lemma~\ref{lem:adaptive-cylinder},
\[
\Pr[T\subseteq \A_i\mid\mathcal F_{i-1}]
\le p_i^{|T\setminus \A_{i-1}|}
=\prod_{g\in T}\bigl(p_i+(1-p_i)\mathbf 1_{\{g\in \A_{i-1}\}}\bigr).
\]
Expanding this product and applying the induction hypothesis gives
\begin{align*}
\Pr[T\subseteq \A_i]
&\le\sum_{U\subseteq T}p_i^{|T|-|U|}(1-p_i)^{|U|}
                         \Pr[U\subseteq \A_{i-1}]\\
&\le\bigl(p_i+(1-p_i)r_{i-1}\bigr)^{|T|}
 =r_i^{|T|}.
\end{align*}
Taking $i=n$ proves the first statement in the lemma. The second statement follows from the cylinder bound through an argument similar to the proof of Lemma~\ref{lem:weighted-mgf}.
\end{proof}

We can now apply the following concentration result from \cite{PanconesiSrinivasan97}.


\begin{theorem}[Theorem 3.4 of \cite{PanconesiSrinivasan97}]
\label{thm:iterative-swr-concentration}
Let $X_1, \cdots, X_m$ be $\{0,1\}$-random variables with $X=\sum_{e\in [m]} X_e$. If $\hat X_1, \cdots, \hat X_m$ are {\em independent} random variables  with $\hat X= \sum_e \hat X_e$, such that for all $I\subseteq [m]$, $\Pr[\prod_e X_e=1]\le \prod_e \Pr[\hat X_e=1]$, then, for any $\epsilon>0$,
\[\Pr[X>(1+\epsilon)\hat X]\le \exp\left(-\frac 13\epsilon^2\E[\hat X]\right).\]
\end{theorem}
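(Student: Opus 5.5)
We read the statement in its intended form: the hypothesis is $\Pr\bigl[\prod_{e\in I}X_e=1\bigr]\le\prod_{e\in I}\Pr[\hat X_e=1]$ for every $I\subseteq[m]$, and the conclusion bounds $\Pr[X>(1+\epsilon)\mu]$ with $\mu:=\E[\hat X]=\sum_e \hat p_e$, where $\hat p_e:=\Pr[\hat X_e=1]$. The deviation must be measured against the deterministic quantity $\mu$, not the random variable $\hat X$. The displayed constant $\frac13$ is the standard one for $\epsilon\in(0,1]$, and we prove it in that range.

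The plan is to compare moment generating functions and then run the usual Chernoff calculation. The argument is the same expansion used in the proof of Lemma~\ref{lem:weighted-mgf}. Fix $\lambda\ge 0$ and set $z:=e^\lambda-1\ge 0$. Since each $X_e\in\{0,1\}$, we have $e^{\lambda X_e}=1+zX_e$. Hence
\[
e^{\lambda X}=\prod_{e}(1+zX_e)=\sum_{I\subseteq[m]} z^{|I|}\prod_{e\in I}X_e .
\]
Every coefficient $z^{|I|}$ is nonnegative, so we may take expectations termwise and apply the cylinder hypothesis to each term. This gives
\[
\E[e^{\lambda X}]\le\sum_{I} z^{|I|}\prod_{e\in I}\hat p_e=\prod_e(1+z\hat p_e)=\E[e^{\lambda\hat X}],
\]
where the last equality uses independence of the $\hat X_e$. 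Using $1+y\le e^y$, we then get $\E[e^{\lambda X}]\le \exp\bigl(\mu(e^\lambda-1)\bigr)$. The point to stress is that only upper cylinders are needed, because $z\ge 0$. This is exactly why one-sided Bernoulli domination suffices for an upper tail.

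Next we apply Markov's inequality to $e^{\lambda X}$:
\[
\Pr[X\ge(1+\epsilon)\mu]\le \exp\bigl(\mu(e^\lambda-1)-\lambda(1+\epsilon)\mu\bigr).
\]
We choose $\lambda=\ln(1+\epsilon)$. This yields the classical bound $\bigl(e^\epsilon/(1+\epsilon)^{1+\epsilon}\bigr)^\mu$.

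The last step is the elementary inequality $\epsilon-(1+\epsilon)\ln(1+\epsilon)\le-\epsilon^2/3$ for $\epsilon\in(0,1]$. Write $f(\epsilon)=\epsilon-(1+\epsilon)\ln(1+\epsilon)+\epsilon^2/3$. Then $f(0)=0$ and $f'(\epsilon)=-\ln(1+\epsilon)+2\epsilon/3$. We have $f'(0)=0$, and $f''(\epsilon)=-1/(1+\epsilon)+2/3$ changes sign only once, at $\epsilon=1/2$. So $f'$ first decreases and then increases on $[0,1]$, and $f'(1)=-\ln 2+2/3<0$. It follows that $f'\le 0$ on $[0,1]$, hence $f\le0$ there, which completes the bound. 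There is no real obstacle here. The only delicate point is the interpretation above, namely fixing the cylinder sets $I$, the comparison against $\mu$, and the range of $\epsilon$. For $\epsilon>1$ one would instead cite the form $\exp(-\epsilon\mu/3)$.
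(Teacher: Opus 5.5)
Your proof is correct and follows the same route the paper relies on. The paper cites this result from Panconesi--Srinivasan without proof, and your expansion $e^{\lambda X}=\sum_I z^{|I|}\prod_{e\in I}X_e$, with the cylinder bound applied termwise and then the standard Chernoff optimization, is exactly the argument of Lemma~\ref{lem:weighted-mgf}.

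Your corrected reading (products over $e\in I$, deviation measured against $\E[\hat X]$, and $\epsilon\in(0,1]$) is the right one. The range restriction is genuinely necessary, not just convenient: for a single Bernoulli$(0.1)$ variable with $\epsilon=8.9$, the displayed bound fails, since $0.1>e^{-2.64}\approx 0.071$.
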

For a fixed but arbitrary set $S\subseteq G$, we can apply the above theorem to the random variables $X_e$ for $e\in S$ defined to be $1$ if $e$ is allocated and $0$ otherwise. Let $\mu = \sum_{i\in [n]} p_i\ge 1-\prod_{i\in [n]} (1-p_i)$. Then Lemma~\ref{lem:iterative-cylinder} implies that the upper cylinder inequality in the theorem holds with respect to Bernoulli $\hat X_e$ where $\E[\hat X_e]=\mu$ for all $e$. We therefore obtain the following theorem with a strong concentration bound. 


\begin{theorem}
\label{thm:conc}
    Let $G$ be a ground set of elements with $|G|=m$ and $S$ be an arbitrary subset of $G$ with $|S|=m'$. Let $A_1, \cdots, A_n, G_{n+1}$ be drawn from $\iswr(G, p_1, \cdots, p_n)$. Let $\mu$ bound the expectation for each item: $\mu := \sum_i p_i$. 
    Then, we have for any $\epsilon\in [0,1]$,
\[
\Pr\!\left[\,|S\setminus G_{n+1}|> (1+\epsilon)\mu m'  \,\right]
\;\le\;
\exp\left(-\frac 13\epsilon^2\mu m'\right).
\] 
\end{theorem}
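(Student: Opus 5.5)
We will prove the bound by the standard exponential-moment (Chernoff) argument, using the weighted MGF bound in Lemma~\ref{lem:iterative-cylinder} in place of independence. Set $x_e=\mathbbm{1}[e\in S]$ for $e\in G$. Then $X=\sum_{e\in A}x_e=|S\setminus G_{n+1}|$, where $A=G\setminus G_{n+1}$, and $W=\sum_{e\in G} x_e=m'$.

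First we compare removal probabilities. Let $q=\prod_i(1-p_i)$. By the union bound (equivalently, Weierstrass' product inequality), $1-q\le \sum_i p_i=\mu$. Hence Lemma~\ref{lem:iterative-cylinder} gives, for every $\lambda\ge 0$,
\[
\E[e^{\lambda X}]\le \exp\bigl((1-q)m'(e^\lambda-1)\bigr)\le \exp\bigl(\mu m'(e^\lambda-1)\bigr),
\]
where the second inequality uses $e^\lambda-1\ge 0$. This is exactly the MGF of a sum of $m'$ independent Bernoulli variables with mean $\mu$, or more precisely the Poisson-type upper bound on it. If $\mu\ge 1$ the statement needs a separate remark: the event $X>(1+\epsilon)\mu m'\ge m'$ is impossible since $X\le m'$. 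So we may assume $\mu<1$, although the MGF argument does not actually need this.

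Next we apply Markov's inequality to $e^{\lambda X}$:
\[
\Pr[X>(1+\epsilon)\mu m']\le \exp\bigl(\mu m'(e^\lambda-1)-\lambda(1+\epsilon)\mu m'\bigr).
\]
Choosing $\lambda=\ln(1+\epsilon)\ge 0$ yields the classical bound $\bigl(e^{\epsilon}/(1+\epsilon)^{1+\epsilon}\bigr)^{\mu m'}$. The elementary inequality $(1+\epsilon)\ln(1+\epsilon)-\epsilon\ge \epsilon^2/3$ for $\epsilon\in[0,1]$ then gives $\exp(-\epsilon^2\mu m'/3)$. This inequality can be checked by noting that $f(\epsilon)=(1+\epsilon)\ln(1+\epsilon)-\epsilon-\epsilon^2/3$ satisfies $f(0)=f'(0)=0$ and $f''(\epsilon)=\frac{1}{1+\epsilon}-\frac23\ge 0$ on $[0,1/2]$. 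On $[1/2,1]$, $f'$ is concave, so it suffices to check $f'(1)=\ln 2-2/3>0$ together with $f'(1/2)>0$.

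Alternatively, Theorem~\ref{thm:iterative-swr-concentration} can be invoked directly. Lemma~\ref{lem:iterative-cylinder} gives $\Pr[T\subseteq A]\le (1-q)^{|T|}\le \mu^{|T|}$ for $T\subseteq S$, so the cylinder hypothesis holds with independent $\hat X_e\sim\mathrm{Bernoulli}(\mu)$ and $\E[\hat X]=\mu m'$. This requires $\mu\le 1$, and the case $\mu>1$ is trivial as noted above.

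The main obstacle is the adaptivity, and it is already handled by Lemma~\ref{lem:iterative-cylinder}. What remains needs care only in the monotonicity step $1-q\le\mu$, which must be applied inside the exponent with the correct sign since $\lambda\ge0$, and in the elementary calculus inequality. Neither should cause difficulty.
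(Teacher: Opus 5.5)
Your proof is correct and is essentially the paper's argument. The paper applies the Panconesi--Srinivasan comparison (Theorem~\ref{thm:iterative-swr-concentration}) using the cylinder bound of Lemma~\ref{lem:iterative-cylinder} together with $1-q\le\mu$, which is exactly your ``alternative'' route. Your main route simply inlines that comparison through the MGF half of Lemma~\ref{lem:iterative-cylinder} followed by the textbook Chernoff calculation; this needs no Bernoulli$(\mu)$ comparison, so it avoids the $\mu\le 1$ caveat that the paper passes over silently.
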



\section{Warm-Up: Constant MMS Guarantee in the Absence of Large Values}
\label{sec:additive}

We will now present some of the main ideas underlying our online allocation algorithm. As a warm-up, we will consider the case of additive values under the assumption that no agent places too high a value on a single item.
Recall that each agent's value is normalized so that $\MMS_i=1$; in other words, $v_i([m])=n$. We will further assume that $v_i(e)<1/12$ for all $i\in [n], e\in [m]$. 

Our allocation algorithm proceeds as follows. When an agent arrives, she partitions the available items into disjoint sets of size exactly $3n$, possibly leaving some items unassigned. From each such set, the agent receives one item chosen uniformly at random. The algorithm then proceeds to the next agent with the remaining items, and repeats this process. The algorithm is described formally in pseudocode below.

\begin{algorithm}[H]
\caption{Simple Static-Bucketing Allocation}
\label{alg:add-simple}
\begin{algorithmic}[1]
\Require Number of agents $n$, ground set $[m]$, with $v_i(e)<1/12$ for all $i\in [n], e\in [m]$.
\State $R_1 \gets [m]$, $t \gets 0$
\For{$i = 1,2,\dots,n$}
    \State Agent $i$ observes the current remaining set $R_i$
    \State Agent $i$ specifies a partition
    $R_i = B_{i1} \,\dot\cup\, B_{i2} \,\dot\cup\, \cdots \,\dot\cup\, B_{i\ell_i}$, with $|B_{ij}|=3n$ $\,\forall j\in [\ell_i-1]$. \label{step:simple-bucketing}
    \For{$j = 1,2,\dots,\ell_i-1$}
        \State $A_{ij} \gets \swr\!\left(B_{ij}, \frac{1}{3n}\right)$
    \EndFor
    \State Allocate $A_i := \bigcup_{j=1}^{\ell_i-1} A_{ij}$ to agent $i$
    \State $R_{i+1} \gets R_i \setminus A_i$
\EndFor
\end{algorithmic}
\end{algorithm}

Our analysis hinges on two main claims. First, no matter how agents $1, \cdots, i-1$ partition the items, with high probability, enough total value remains for agent $i$ in the set $R_i$ --- $v_i(R_i)\ge v_i([m])/2$. This property follows from the concentration we established for the $\iswr$ process in Section~\ref{sec:mgf}. Second, there exists a partitioning strategy for agent $i$ that guarantees the agent a $1/3n$ fraction of her total value for $R_i$ less the value of her most valuable item. Since whp $v_i(R_i)\ge n/2$ and the agent's most valuable item has value at most $1/12$, the agent receives a total value of $\approx 1/12$.

To formalize this analysis, we will first describe a {\em safe strategy} for the agent. Essentially, the agent places her most valuable $2n$ items in the first bucket, the next most valuable $2n$ items in the second bucket, and so on, until no items remain.
\begin{definition}
    {\bf (Safe Strategy $\ws$ for agent $i$)} Order elements $e\in R_i$ in weakly decreasing order of value $v_i(e)$, and rename them $e_0, e_1, \cdots$ in this order. Define $\ell_i = \lceil |R_i|/3n \rceil$. For $j\in [\ell_i-1]$, set $B_{ij}:=\{e_{3n(j-1)}, \cdots, e_{3nj-1}\}$, and $B_{i\ell_i} := \{e_{3n(\ell_i-1)}, \cdots \}$.
\end{definition}

We now state the two main lemmas underlying our analysis. We provide proof sketches here and defer the full formal argument to Section~\ref{sec:submod} where we prove our fully general
result.

\begin{lemma}
\label{lem:enough-total-simple}
    For any $i\in [n]$, regardless of the partitioning strategies of agents $1, \cdots, i-1$, with probability at least $1-1/\poly(n)$, $v_i(R_i)\ge \left(\frac 12-o(1)\right)n$.
\end{lemma}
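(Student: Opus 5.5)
We will cast the removals made by agents $1,\dots,i-1$ as an iterative SwR process and then apply the weighted MGF bound of Lemma~\ref{lem:iterative-cylinder} to agent $i$'s (fixed, since the adversary is oblivious) valuation.

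\textbf{Step 1 (the process is an $\iswr$).} Agent $i$'s valuation $v_i$ is fixed in advance. Round $k<i$ of Algorithm~\ref{alg:add-simple} has the following structure.
\begin{itemize}
\item Agent $k$ partitions $R_k$ into buckets. This partition may depend adaptively on everything that has happened so far.
\item From each full bucket of size $3n$ exactly one item is removed, uniformly at random. This is $\swr(B_{kj},\frac1{3n})$: since $\frac{1}{3n}\cdot 3n=1$ is an integer, the sample size is deterministic.
\item The leftover bucket $B_{k\ell_k}$ receives nothing, which is $\swr$ with $p=0\le \frac1{3n}$.
\end{itemize}
Hence the set removed in round $k$ is distributed as $\mswr(R_k,\frac1{3n})$, with adaptively chosen buckets. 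The whole history is therefore $\iswr([m],p_1,\dots,p_{i-1})$ with every $p_k=\frac1{3n}$. Set
\[
1-q \;=\; 1-\Bigl(1-\tfrac1{3n}\Bigr)^{i-1} \;\le\; \tfrac{i-1}{3n} \;<\; \tfrac13 .
\]

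\textbf{Step 2 (MGF bound and Chernoff tail).} Let $A=[m]\setminus R_i$ be the set of removed items. Rescale the weights by $x_e = 12\,v_i(e)\in[0,1]$, which uses the assumption $v_i(e)<1/12$. Then
\[
W=\sum_e x_e = 12n, \qquad X=\sum_{e\in A}x_e = 12\,v_i(A).
\]
Lemma~\ref{lem:iterative-cylinder} gives
\[
\E[e^{\lambda X}]\le \exp\bigl(\mu(e^\lambda-1)\bigr), \qquad \mu:=(1-q)W\le 4n .
\]
The right-hand side is exactly the Poisson/Bernoulli MGF used in the standard multiplicative Chernoff bound. Applying Markov's inequality to $e^{\lambda X}$ with $\lambda=\ln(1+\epsilon)$ yields
\[
\Pr\bigl[X\ge (1+\epsilon)\mu'\bigr]\le \exp\bigl(-\epsilon^2\mu'/3\bigr)
\]
for any $\mu'\ge\mu$. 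We may take $\mu'$ up to $4n$, since the bound only needs an upper estimate of the mean.

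\textbf{Step 3 (choose the threshold).} The goal is $v_i(A)\le (\tfrac12+o(1))n$, i.e.\ $X\le (6+o(1))n$. With $\mu'=4n$ this requires $1+\epsilon=\tfrac32+o(1)$, i.e.\ $\epsilon=\tfrac12$. The tail is then
\[
\exp(-n/3)\le 1/\poly(n),
\]
which is much stronger than needed. In fact one can even get $v_i(A)\le(\tfrac13+\delta)n$ with exponentially small failure probability. Consequently
\[
v_i(R_i)=n-v_i(A)\ge \Bigl(\tfrac12-o(1)\Bigr)n .
\]
The $o(1)$ slack absorbs the small-$n$ regime and any rounding.

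\textbf{Main obstacle.} The main point needing care is Step 1: justifying that adaptive, possibly adversarial, bucketing by earlier agents (including the unsampled remainder bucket) fits the $\mswr$/$\iswr$ framework. Once that identification is made, Lemma~\ref{lem:iterative-cylinder} does all the work, and the rest is the routine Chernoff calculation.
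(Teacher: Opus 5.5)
Your proof is correct, but it takes a different route from the paper's sketch.

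\textbf{The paper's route.} The paper works only with cardinalities. It sorts the items by $v_i(e)$ and applies the unweighted tail bound of Theorem~\ref{thm:conc} to every prefix $T_j$ with $j>100\log n$. A union bound then gives $|T_j\setminus R_i|\le |T_j|/2$ for all such $j$ simultaneously. A matching/exchange argument charges each removed item outside the top $100\log n$ to a more valuable surviving item. This yields $v_i(R_i)\ge n/2-O(\log n)$ with polynomially small failure probability. In that argument, the bound $v_i(e)<1/12$ enters only to control the value of those top $O(\log n)$ items.

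\textbf{Your route.} You apply the weighted MGF bound in the second half of Lemma~\ref{lem:iterative-cylinder} directly to the additive value, after rescaling by the maximum item value. This removes the need for the prefix decomposition, the union bound over prefixes, and the matching. It also gives a stronger conclusion: $v_i(R_i)>n/2$, and in fact $(2/3-\delta)n$, with failure probability $e^{-\Omega(n)}$ rather than $1/\poly(n)$. Your argument is closer in spirit to the paper's full analysis in Section~\ref{sec:submod}, which bounds the exponential moment of the weighted value loss (the deficit) via Lemma~\ref{lem:weighted-mgf} in Lemma~\ref{lem:deficit-mgf}. The warm-up's prefix-and-matching argument mainly serves to motivate the supporting-value ordering used in the submodular sketch.

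\textbf{Two minor formalities.}
\begin{itemize}
\item $\swr(G,p)$ is defined for $p\in(0,1)$, so treating the unsampled leftover bucket as $p=0$ is a trivial extension. The cylinder bound holds there because the sample is empty.
\item The $o(1)$ slack you invoke is not actually needed, since your bound already gives $v_i(R_i)>n/2$. The only remaining issue is that $e^{-n/3}\le 1/\poly(n)$ holds asymptotically in $n$.
\end{itemize}
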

\begin{proofsketch}
    Sort all of the items in $[m]$ in weakly decreasing order of $v_i(e)$, breaking ties arbitrarily, and let $T_j$ denote the prefix of size $j$ in this ordering. By Theorem~\ref{thm:conc}, using $\mu=1/3$ and $\epsilon=1/2$, for any $j$, with probability at least $1-e^{-j/36}$, $|T_j\cap R_i| \ge \frac 12 |T_j|$, or in other words, $|T_j\cap R_i| \ge |T_j\setminus R_i|$.
    
    With probability $1-1/\poly(n)$, these events happen simultaneously for all $j>100\log n$.
    We can therefore find a matching between elements in $R_i$ and all but the first $100\log n$ elements in $[m]\setminus R_i$ where each element in $[m]\setminus R_i$ is mapped to a more valuable element in $R_i$. This implies $v_i(R_i)\ge v_i([m]\setminus R_i) - 100\log n/12$, and the lemma follows.
\end{proofsketch}

\begin{lemma}
\label{lem:frac-alloc-simple}
    For any $i\in [n]$, if agent $i$ follows the safe strategy $\ws$, the agent receives $v_i(A_i)\ge \frac{1}{3n}v_i(R_i)-\frac{1}{12}$.
\end{lemma}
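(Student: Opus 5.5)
Since every full bucket $B_{ij}$ with $j\le \ell_i-1$ has size exactly $3n$ and $\swr(B_{ij},\frac{1}{3n})$ has $pm = 1$, it deterministically returns exactly one element of $B_{ij}$. So $A_i$ contains exactly one item from each of $B_{i1},\dots,B_{i,\ell_i-1}$, and the bound must hold for every realization, not just in expectation. I would prove it with the standard "each bucket's minimum dominates the average of the next bucket" charging argument, which uses the sorted structure of the safe strategy $\ws$.

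\textbf{Step 1 (lower bound per bucket).} Under $\ws$, items are sorted so that every item of $B_{ij}$ has value at least every item of $B_{i,j+1}$. The item received from $B_{ij}$ therefore has value at least $\min_{e\in B_{ij}} v_i(e)$. This minimum is in turn at least $\max_{e\in B_{i,j+1}} v_i(e)$, which is at least the average $\frac{1}{3n}v_i(B_{i,j+1})$. This holds for $j+1\le \ell_i$, including the leftover bucket $B_{i\ell_i}$. The leftover bucket has fewer than $3n$ (or exactly $3n$) items, and its average-type bound still holds because $v_i(B_{i\ell_i})\le 3n\max_{e\in B_{i\ell_i}} v_i(e)$ since $|B_{i\ell_i}|\le 3n$.

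\textbf{Step 2 (summation).} Summing over $j=1,\dots,\ell_i-1$ gives
\[
v_i(A_i)\ \ge\ \frac{1}{3n}\sum_{j=2}^{\ell_i} v_i(B_{ij}) \;=\; \frac{1}{3n}\bigl(v_i(R_i)-v_i(B_{i1})\bigr).
\]

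\textbf{Step 3 (loss from the first bucket).} The remaining task is to bound the loss term $\frac{1}{3n}v_i(B_{i1})$. Because $|B_{i1}|\le 3n$ and each item is worth less than $1/12$, we get $\frac{1}{3n}v_i(B_{i1})\le \max_e v_i(e) < \frac1{12}$. This yields $v_i(A_i)\ge \frac{1}{3n}v_i(R_i)-\frac1{12}$.

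\textbf{Edge cases.} If $\ell_i=1$, then $|R_i|\le 3n$, so $v_i(R_i)\le 3n\cdot\frac1{12}$ and the right-hand side is nonpositive, making the claim trivial. I also need to handle the indexing convention when $|R_i|$ is a multiple of $3n$: there the last bucket is full but unsampled, and Step 1 still applies to it.

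The only mildly delicate step is Step 1's handling of the last, possibly partial or unsampled, bucket. I would resolve it using the size bound $|B_{i\ell_i}|\le 3n$, as above.
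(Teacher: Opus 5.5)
Your proof is correct and follows the paper's own argument: the item drawn from each $B_{ij}$ is worth at least the average value of $B_{i(j+1)}$, which covers every bucket except $B_{i1}$, and $B_{i1}$ contributes at most $3n$ times the largest item value, which is below $1/12$. You also spell out two points the paper's sketch leaves implicit: that $\swr(B_{ij},\frac{1}{3n})$ with $|B_{ij}|=3n$ draws exactly one item deterministically, and that the last, unsampled bucket is still covered because it has at most $3n$ items.
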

\begin{proofsketch}
    Observe that if agent $i$ follows $\ws$, for any $j\in [\ell_i-1]$, the value of any element in $B_{ij}$ is no smaller than the value of every element in $B_{i (j+1)}$. In particular, the value of the random element chosen in $A_{ij}$ is at least a $1/3n$ fraction of the total value of $B_{i(j+1)}$. This covers the values of all but the first bucket $B_{i1}$, but the total value of $B_{i1}$ is at most $3n$ times that of the most valuable element. Therefore the claim follows.
\end{proofsketch}

Putting the two lemmas together, we obtain the following theorem.
\begin{theorem}
    If all agents $i\in [n]$ follow their respective safe strategies $\ws$, Algorithm~\ref{alg:add-simple} allocates to every agent simultaneously an $1/12-o(1)$ approximation to MMS with probability $1-1/\poly(n)$.
\end{theorem}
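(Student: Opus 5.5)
The argument combines \Cref{lem:enough-total-simple} and \Cref{lem:frac-alloc-simple} with a union bound over agents. Fix an agent $i$. Her safe strategy $\ws$ depends only on $R_i$, so it is a valid (adaptive) choice of buckets within the process. For the first ingredient, we need agents $1,\dots,i-1$ to be arbitrary. Their actions form an $\iswr$ process with $p_k = 1/(3n)$, since each agent's draws from size-$3n$ buckets are $\mswr$ with parameter $1/(3n)$. Hence \Cref{lem:enough-total-simple} applies regardless of their strategies. In particular it applies when they follow their own safe strategies, though we never use that fact.

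So with probability at least $1-1/\poly(n)$ we have $v_i(R_i)\ge(\tfrac12-o(1))n$. Conditioned on this event, \Cref{lem:frac-alloc-simple} is deterministic given $R_i$: it only uses the sorted-bucket comparison, since each chosen element dominates the average of the next bucket. It gives
\[
v_i(A_i)\ge \frac{1}{3n}\Bigl(\frac12-o(1)\Bigr)n-\frac1{12}=\frac16-\frac1{12}-o(1)=\frac1{12}-o(1),
\]
which is $(\tfrac1{12}-o(1))\MMS_i$ after normalization ($\MMS_i=1$). Let $F_i$ be the event that \Cref{lem:enough-total-simple} fails for agent $i$. A union bound over the $n$ agents shows that $\Pr[\bigcup_i F_i]\le n/\poly(n)$. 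This is $1/\poly(n)$ provided the polynomial in \Cref{lem:enough-total-simple} has degree larger than $1$.

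The only real point to check is that last degree condition. I would therefore tighten the sketch of \Cref{lem:enough-total-simple} by choosing the cutoff as $C\log n$ with a large constant $C$. The per-prefix failure probability is then $e^{-j/36}\le n^{-C/36}$ for $j\ge C\log n$. Summing over at most $m$ prefixes (or over geometric scales if $m$ is huge) gives failure probability at most $n^{-3}$, say. The lost value $C\log n/12$ is still $o(n)$, so the $o(1)$ term is unaffected and the union bound over the $n$ agents goes through.

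A secondary subtlety is applying Theorem~\ref{thm:conc} with $\mu=\sum_{k<i}p_k\le 1/3$ rather than exactly $1/3$. This is harmless because the upper-tail bound is monotone in the dominating Bernoulli mean, so one may pad $\mu$ up to $1/3$.
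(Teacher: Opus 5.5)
Your proposal is correct and follows the paper's route exactly. The theorem comes from combining \Cref{lem:enough-total-simple}, applied against arbitrary earlier agents with a union bound over the $n$ agents, with the deterministic \Cref{lem:frac-alloc-simple}, which gives $\frac{1}{3n}\bigl(\frac12-o(1)\bigr)n-\frac1{12}=\frac1{12}-o(1)$. Your padding of $\mu$ up to $1/3$ is the right fix, but you don't need to retune the cutoff: the paper's $100\log n$ already makes each agent's failure probability a geometric tail $\sum_{j\ge 100\ln n}e^{-j/36}=O(n^{-2})$, independent of $m$, so the union bound over agents goes through as is.
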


\subsection{Modifications needed for submodular agents}

We will now briefly describe the changes needed in our algorithm and analysis to handle submodular values. We will continue to operate under the assumption that $\MMS_i=1$, and $v_i(e)\le 1/12$, for all $i,e$. 

For the algorithm and the agent's strategy, a simple modification suffices: the algorithm allows each agent to specify the bucketing of items in Step~\ref{step:simple-bucketing} adaptively, with each bucket $B_{ij}$ chosen after the random selection of $A_{ij'}$ from $B_{ij'}$, $j'<j$, has been revealed. The agent chooses buckets greedily according to the marginal values of the elements, as formalized below.

\begin{definition}\label{def:safe-smallVal}
    {\bf (Safe Strategy $\ws$ for submodular agent $i$)} Define $\ell_i = \lceil |R_i|/3n \rceil$. For $j\in [\ell_i-1]$, set $B_{ij}$ to be the $3n$ elements in $R_i\setminus\cup_{j'<j} B_{ij'}$ that have the highest marginal values relative to the allocation so far, $\cup_{j'<j} A_{ij'}$. Break ties arbitrarily. Set $B_{i\ell_i} := R_i\setminus\cup_{j'<\ell_i} B_{ij'}$.
\end{definition}

Recall that for additive agents, we showed that with high probability $v_i(R_i)=\Omega(n)$ and the $1$-out-of-$n$ \MMS\ value of $R_i$ is $\Omega(1)$. 
Neither statement is necessarily true for submodular agents. Instead, we use a different potential function to track the value lost by the agent. Let $(P^i_1, \ldots, P^i_n)$ denote the $\MMS$-defining partition of items for agent $i$. For a subset $R$ of items, define the potential function $\Pot_i(R)$ as
\[\Pot_i(R) := \sum_{j\in [n]} v_i(R\cap P^i_j)\]
Observe that $\Pot_i([m]) \ge n$ by our assumption that $\MMS_i=1$. 

We can now restate the counterparts of Lemmas~\ref{lem:enough-total-simple} and \ref{lem:frac-alloc-simple} in terms of $\Pot_i$. 

\begin{lemma}
    For any $i\in [n]$, regardless of the partitioning strategies of agents $1, \cdots, i-1$, with probability at least $1-1/\poly(n)$, $\Pot_i(R_i)\ge \left(\frac 12-o(1)\right)n$.
\end{lemma}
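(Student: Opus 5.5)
The plan is to reduce the submodular potential $\Pot_i$ to an \emph{additive} lower bound and then apply the MGF bound of Lemma~\ref{lem:iterative-cylinder} exactly as in the additive case. The allocations made to agents $1,\dots,i-1$ form an $\iswr([m], p_1,\dots,p_{i-1})$ process with $p_k = \frac{1}{3n}$. In round $k$, agent $k$ picks buckets $B_{k1}, B_{k2},\dots$ adaptively (even after seeing earlier samples within the round), and each full bucket is sampled via $\swr(\cdot,\frac{1}{3n})$. The leftover bucket $B_{k\ell_k}$ can be viewed as sampled with probability $0\le \frac{1}{3n}$. So each round is an instance of $\mswr(R_k,\frac1{3n})$, whatever strategy agent $k$ uses, truthful or adversarial. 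Hence $q=\prod_{k<i}(1-p_k)\ge 1-\frac{i-1}{3n}\ge \frac23$, i.e.\ $1-q\le \frac13$.

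\textbf{Additive surrogate.} Fix agent $i$'s MMS partition $(P^i_1,\dots,P^i_n)$. For each bundle $P^i_j$, fix an arbitrary order $e_1,\dots,e_k$ of its elements and define the weights
\[
w(e_t) := v_i(e_t \mid \{e_1,\dots,e_{t-1}\}).
\]
These weights have the following properties.
\begin{itemize}
\item By monotonicity, $w\ge 0$.
\item By submodularity, $w(e)\le v_i(e)\le \frac1{12}$.
\item The weights telescope: $\sum_{e\in P^i_j} w(e) = v_i(P^i_j)$ (using $v_i(\emptyset)=0$).
\item For any $S\subseteq P^i_j$, write $v_i(S)$ as a telescoping sum of marginals taken in the same order. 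Each marginal relative to a subset of the prefix is at least the marginal relative to the full prefix, so $v_i(S)\ge w(S)$.
\end{itemize}
Summing over bundles, with $w$ extended additively to $[m]$, gives $\Pot_i(R)\ge w(R)$ for every $R$, with equality $W:=w([m])=\Pot_i([m])\ge n$.

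\textbf{Concentration.} Let $X = w([m]\setminus R_i)$ be the removed weight. Apply the second part of Lemma~\ref{lem:iterative-cylinder} with rescaled weights $x_e = 12\,w(e)\in[0,1]$, whose total is $12W$. This gives
\[
\E[e^{\lambda \cdot 12X}]\le \exp\left(\tfrac13\cdot 12W(e^\lambda-1)\right).
\]
The standard Chernoff optimization (Markov on $e^{\lambda \cdot 12X}$ with $\lambda=\ln(1+\epsilon)$) then yields
\[
\Pr\left[12X > (1+\epsilon)\cdot 4W\right]\le \exp\left(-\Omega(\epsilon^2 W)\right).
\]
Take $\epsilon=\frac12$, so that $(1+\epsilon)\frac{W}{3} = \frac{W}{2}$. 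Then with probability at least $1-e^{-\Omega(n)}$ we have $X\le \frac W2$. On this event,
\[
\Pot_i(R_i)\ge w(R_i) = W - X \ge \frac W2\ge \frac n2,
\]
which is stronger than the claimed $(\frac12-o(1))n$ with failure probability $1/\poly(n)$. The rescaling by $12$ is exactly what makes the small-items assumption produce an exponent linear in $n$.

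\textbf{Expected obstacle.} The step needing care is justifying that the earlier rounds really constitute an $\iswr$ process regardless of the strategies of agents $1,\dots,i-1$. Two points must be checked: that the within-round adaptivity (buckets chosen after earlier samples are revealed) is covered by the definition of $\mswr$; and that the surrogate weights $w$ are fixed in advance, since they depend only on $v_i$ and agent $i$'s partition, not on the randomness. The second point is what allows the fixed-weight MGF bound to apply. The submodular-to-additive reduction itself is routine once the ordering trick is in place.
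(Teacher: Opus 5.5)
Your proof is correct, but it uses a different concentration step from the paper's sketch. Both arguments share the same additive surrogate: chain-marginal supporting values within each MMS bundle. The paper's sketch orders each bundle greedily, but as you observe, and as Section~\ref{sec:submod} itself does, an arbitrary fixed order suffices.

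The paper then repeats the proof of Lemma~\ref{lem:enough-total-simple}. It sorts all items by supporting value and applies the cardinality bound of Theorem~\ref{thm:conc} to every prefix longer than $100\log n$, with a union bound over prefixes. It then matches removed items to surviving items of at least the same weight, discarding the top $O(\log n)$ removed items. That route needs only concentration of counts for fixed sets. The price is an $o(n)$ loss and a failure probability of $1/\poly(n)$.

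You instead apply the weighted MGF bound (second part of Lemma~\ref{lem:iterative-cylinder}) directly to the fixed surrogate weights. This is a single Chernoff argument with no union bound and no matching. It gives the stronger conclusion $\Pot_i(R_i)\ge n/2$ with probability $1-e^{-\Omega(n)}$. It is essentially the mechanism the paper later needs for the deficit in Lemma~\ref{lem:deficit-mgf}, where a fixed-set prefix argument no longer applies.

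One small correction to your commentary: the rescaling by $12$ is not what makes the exponent linear in $n$. Since $w(e)\le 1/12$, the unscaled weights already lie in $[0,1]$. The same lemma then gives $\Pr[X>W/2]\le e^{-W/36}\le e^{-n/36}$, so rescaling only improves the constant. What matters is that the weights are bounded by a constant while $W\ge n$.
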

\begin{proofsketch}
    Consider each component $P^i_j$ of agent $i$'s MMS defining partition. Order the elements in $P^i_j$ in a greedy fashion by marginal value and assign to each a ``supporting value'' equaling the corresponding marginal value: let $g_1$ be the item with highest marginal value over $\emptyset$, let $g_2$ be the item with highest marginal value given $\{g_1\}$, and so on. Define $a_{g_q} := v_i(g_q \mid g_0,\dots,g_{q-1})$ for all $q \in [|P^i_j|]$. Then $\sum_{g\in P^i_j} a_{g} = v_i(P^i_j)$, while for any $R$, $\sum_{g\in P^i_j\cap R} a_{g} \le v_i(P^i_j\cap R)$. 
    
    We do this for all bundles $P^i_j$ and then define an ordering $T$ by ordering all the elements in $[m]$ globally in order of their supporting values. Repeating the proof of Lemma~\ref{lem:enough-total-simple} over these supporting values then implies the lemma.
\end{proofsketch}

\begin{lemma}
    For any $i\in [n]$, if agent $i$ follows the safe strategy $\ws$, the agent receives $v_i(A_i)\ge \frac{1}{3n}\Pot_i(R_i)-\frac{1}{12}$.
\end{lemma}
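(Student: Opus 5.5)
The approach is a greedy charging argument. Each element of $R_i$ is charged to the item drawn from the bucket immediately preceding the element's own bucket, and the bucket $B_{i1}$ produces the additive loss of $1/12$. Throughout, $i$ is fixed and $v=v_i$. Under $\ws$ each bucket $B_{ij}$ for $j<\ell_i$ has size $3n$, and $A_{ij}=\{a_j\}$ is a single item, since $\swr(B_{ij},\frac1{3n})$ on a set of size $3n$ returns exactly one element. Write $S_j=\{a_1,\dots,a_j\}$, $S=S_{\ell_i-1}=A_i$, and $Q_k=R_i\cap P^i_k$, so that $\Pot_i(R_i)=\sum_k v(Q_k)$. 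The target is the deterministic inequality $\sum_k v(Q_k)\le 3n\,v(S)+3n\cdot\frac1{12}$, which after division by $3n$ is exactly the statement.

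\emph{Step 1 (greedy comparison).} Take $j\ge 2$ and $e\in B_{ij}$. The element $e$ was still available when $B_{i,j-1}$ was formed from the top $3n$ marginals relative to $S_{j-2}$, but it was not chosen. Hence $v(e\mid S_{j-2})\le v(a_{j-1}\mid S_{j-2})$. Summing over the at most $3n$ elements of $B_{ij}$ gives
\[
\sum_{e\in B_{ij}} v(e\mid S_{j-2})\le 3n\, v(a_{j-1}\mid S_{j-2}).
\]
This holds for every realization of the random draws, so the bound is ex post, as the lemma requires. It also covers the last bucket $B_{i\ell_i}$, which has at most $3n$ elements. Summing over $j\ge 2$ telescopes to at most $3n\,v(S)$.

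\emph{Step 2 (the first bucket).} Every $e\in B_{i1}$ satisfies $v(e)\le 1/12$, so $\sum_{e\in B_{i1}}v(e)\le 3n/12$. This is the source of the $-\frac1{12}$ term.

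\emph{Step 3 (decomposing $\Pot_i$).} It remains to show
\[
\sum_k v(Q_k)\le \sum_{e\in B_{i1}} v(e)+\sum_{j\ge2}\sum_{e\in B_{ij}} v(e\mid S_{j-2}).
\]
Inside a single bundle I would order $Q_k$ by bucket index and telescope:
\[
v(Q_k)=\sum_j v\bigl(Q_k\cap B_{ij}\,\big|\,Q_k\cap B_{i,<j}\bigr).
\]
The difficulty is that the conditioning sets $Q_k\cap B_{i,<j}$ are not comparable to the chain $S_{j-2}$, so submodularity does not directly give $v(e\mid Q_k\cap B_{i,<j})\le v(e\mid S_{j-2})$. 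The naive fix is $v(Q_k)\le v(S\cup Q_k)=v(S)+v(Q_k\mid S)$. It works but pays an extra $v(S)$ per bundle, $n\,v(S)$ in total, and yields only the $\frac1{4n}$ constant. For the stated $\frac1{3n}$ this loss has to be avoided.

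\emph{Main obstacle and plan for it.} I expect Step 3 to be the hard part. My first attempt exploits the disjointness of the $Q_k$, together with the fact that the items $a_{j-1}$ lie in at most one bundle each. The idea is an exchange argument: for each $j$, the loss from conditioning on $S_{j-2}$ rather than on $Q_k\cap B_{i,<j}$ is charged to the items of $S_{j-2}\setminus Q_k$. Since each $a$ belongs to a single bundle, it should be counted once overall rather than $n$ times. If that fails, the fallback is to use the supporting-value decomposition from the previous lemma's proof sketch. There, elements of each $P^i_k$ get greedy marginals $\alpha_e$ with $\sum_{e\in Q_k}\alpha_e\le v(Q_k)$. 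I would then seek a matching from these supporting values into $\bigcup_{j\ge2}B_{ij}$ dominated by the $v(e\mid S_{j-2})$, reducing Step 3 to an additive-style counting argument like the one behind Lemma~\ref{lem:frac-alloc-simple}.
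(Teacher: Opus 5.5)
Your Steps 1 and 2 are correct and match the paper's charging: each item of $B_{ij}$ with $j\ge2$ is charged at most $v_i(a_{j-1}\mid S_{j-2})$, and each item of $B_{i1}$ at most $\frac1{12}$. The gap is Step 3, which you leave open. It cannot be closed, because the inequality is false, and so is the $\frac1{3n}$ constant it is meant to deliver.

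Here is a counterexample. Take $n\ge3$, $i=1$ (so $R_i=[m]$), and
\[
v_i(T)=|T\cap U|+\min\{1,|T\cap Y|\},\qquad |U|=12n-2,\ |Y|=n .
\]
This is monotone submodular and every singleton has value $1$. Moreover $\MMS_i=12$:
\begin{itemize}
\item putting one item of $Y$ in each bundle, together with $11,11,12,\dots,12$ items of $U$, gives minimum $12$;
\item any partition has integer bundle values summing to at most $13n-2<13n$, so not every bundle can reach $13$.
\end{itemize}
Break ties by an index order listing $6n$ items of $U$, then $y_1\in Y$, then the rest of $U$, then $Y\setminus\{y_1\}$. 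The safe strategy then runs as follows.
\begin{itemize}
\item $B_{i1},B_{i2}\subseteq U$ and $y_1\in B_{i3}$.
\item If $y_1$ is drawn, the other items of $Y$ have marginal $0$. So $B_{i4}$ is the last $3n-1$ items of $U$ plus one $y_2\in Y$, and $B_{i5}=Y\setminus\{y_1,y_2\}$.
\item On the event (probability $(3n)^{-2}$) that $y_1$ and $y_2$ are drawn, $v_i(A_i)=3$.
\end{itemize}
For the MMS partition above, $\Pot_i(R_i)=13n-2$. After normalizing by $12$, the claimed bound reads $\frac{3}{12}\ge\frac{13n-2}{36n}-\frac1{12}$, which is equivalent to $n\le2$.

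Your Step 3 fails on the same run. Its right-hand side is $3n+3n+3n+3n+0=12n$, since the items of $B_{i5}$ have zero marginal given $S_3\ni y_1$. Its left-hand side is $13n-2$. This is also why your exchange idea cannot work: copies of one selected item, sitting in later buckets with charge $v_i(\cdot\mid S_{j-2})=0$, re-collect that item's value in every other bundle. The overshoot is therefore about $n$ times a single pick's marginal, not $O(v_i(A_i))$. The supporting-price matching fallback fails for the same reason.

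What you called the naive fix is exactly what the paper's rigorous version of this step does. Lemma~\ref{lem:submod-greedy-bucketing}, with $k=3$, $N=n$, $\tau=\frac1{12}$, proceeds as follows.
\begin{itemize}
\item It bounds $v_i(Q_h\cap R_i)-v_i(A_i)\le\sum_{g\in(Q_h\cap R_i)\setminus A_i}c_g$.
\item Summing over $h$, it gets $\Pot-n\,v_i(A_i)\le C\le 3n\bigl(\tau+v_i(A_i)\bigr)$.
\item It concludes $v_i(A_i)\ge\frac14\bigl(\frac{\Pot_i(R_i)}{n}-\frac14\bigr)=\frac{\Pot_i(R_i)}{4n}-\frac1{16}$.
\end{itemize}
That is the constant $\frac1{k+1}=\frac14$, not the $\frac13$ asserted in the sketch. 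So complete your proof with Steps 1--2 plus the naive fix, and state the bound with $\frac1{4n}$ and $\frac1{16}$. Combined with $\Pot_i(R_i)\ge(\frac12-o(1))n$, this yields $\frac1{16}-o(1)$ in Theorem~\ref{thm:sub-smallVal} rather than $\frac1{12}-o(1)$.
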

\begin{proofsketch}    
    The value of the allocation received by agent $i$ can be written as the sum over the elements in $A_i$ of the marginal value of each element at the time it was added to $A_i$. Observe that the marginal values of element decrease over time as more and more elements are added to $A_i$. For each chosen element $e\in A_i\cap B_{ij}$, the marginal value of $e$ is at least as large as the marginal values of all elements in $B_{i (j+1)}$. Then, a charging argument similar to that for Lemma~\ref{lem:frac-alloc-simple} implies that the set $A_i$ retails at least a $3n$ fraction of $\pi(R_i)$ minus the total value of the first bucket.
\end{proofsketch}


Putting the two lemmas together, we obtain the following theorem.
\begin{theorem}\label{thm:sub-smallVal}
    If each agent $i\in [n]$ has a submodular valuation and follows the safe strategies as in \Cref{def:safe-smallVal} to construct $B_{i1},\dots, B_{il_i}$ adaptively within Algorithm~\ref{alg:add-simple}, then it allocates to every agent simultaneously an $1/12-o(1)$ approximation to MMS with probability $1-1/\poly(n)$.
\end{theorem}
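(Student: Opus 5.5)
The plan is to combine the two preceding lemmas, the potential lower bound and the per-agent allocation bound, through a union bound over agents. I first fix an agent $i$ and her MMS-defining partition $(P^i_1,\dots,P^i_n)$, normalized so that $v_i(P^i_j)\ge 1$ for every $j$. The strategies of agents $1,\dots,i-1$ are arbitrary, as are their adaptively chosen buckets. The process that removes items before agent $i$ arrives is then an instance of $\iswr([m],p_1,\dots,p_{i-1})$ with every $p_k = 1/(3n)$. Indeed, in each round every bucket is sampled by $\swr(\cdot,1/(3n))$; the leftover bucket is sampled with probability $0\le p$, which $\mswr$ permits. The buckets may depend on the history, and \Cref{lem:adaptive-cylinder} and \Cref{lem:iterative-cylinder} are exactly designed to tolerate this. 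Hence $\mu\le (i-1)/(3n)<1/3$.

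The first step establishes $\Pot_i(R_i)\ge(\tfrac12-o(1))n$ with probability $1-1/\poly(n)$. I assign supporting values $a_g$ via the greedy order inside each $P^i_j$. These satisfy $\sum_{g\in P^i_j}a_g=v_i(P^i_j)$ and, by submodularity, $\sum_{g\in P^i_j\cap R}a_g\le v_i(P^i_j\cap R)$. Greedy ordering also makes $a_g$ nonincreasing within a bundle, and each $a_g\le v_i(g)\le 1/12$. I then sort all items globally by $a_g$ and apply \Cref{thm:conc} to every prefix $T_j$ with $\epsilon=1/2$. This shows that, except with probability $e^{-\Omega(j)}$, at most half of $T_j$ is removed. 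A union bound over $j\ge C\log n$ gives probability $1-1/\poly(n)$. The prefix-majority property then yields an injection from removed items (beyond the first $C\log n$) into surviving items of no smaller supporting value. Consequently $\sum_{g\in R_i}a_g\ge \tfrac12\sum_g a_g - O(\log n)/12\ge n/2-o(n)$, so $\Pot_i(R_i)\ge n/2-o(n)$.

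The second step shows that the safe strategy of \Cref{def:safe-smallVal} gives $v_i(A_i)\ge \Pot_i(R_i)/(3n)-1/12$. Write $v_i(A_i)$ as a telescoping sum of marginals of the chosen items $e_j\in A_{ij}$. By greedy bucket choice, the marginal of $e_j$ at its time is at least the marginal of every element of $B_{i(j+1)}$ relative to $\cup_{j'\le j}A_{ij'}$. It is therefore at least $\frac1{3n}\sum_{g\in B_{i(j+1)}}v_i(g\mid A_i)$, where the inequality is taken with respect to the final $A_i$ and uses submodularity. Summing over $j$ covers every bucket except $B_{i1}$. This gives $3n\,v_i(A_i)\ge \sum_{g\in R_i\setminus B_{i1}}v_i(g\mid A_i)$. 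Next, $\sum_{g\in R_i} v_i(g\mid A_i)\ge \sum_j [v_i((R_i\cap P^i_j)\cup A_i)-v_i(A_i)]\ge \Pot_i(R_i)-n\,v_i(A_i)$. The contribution of $B_{i1}$ is at most $3n\cdot\tfrac1{12}$. Rearranging gives $4n\,v_i(A_i)\ge \Pot_i(R_i)-n/4$, so $v_i(A_i)\ge \Pot_i(R_i)/(4n)-1/16$. Carefully tracking the $n\,v_i(A_i)$ term would sharpen this; with the step-one bound it yields a constant close to the claimed $1/12-o(1)$. If the constant does not quite match, I would fall back to using the assumption $v_i(e)\le1/12$ more tightly in the first bucket.

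Finally, the two events combine: each agent fails with probability $1/\poly(n)$, and a union bound over the $n$ agents (with the polynomial chosen large enough) gives simultaneous success with probability $1-1/\poly(n)$. The main obstacle is the first step. Concentration must hold for weighted supporting values under adaptive bucketing, and I handle this through prefix counts via \Cref{thm:conc} together with the matching argument, which avoids requiring any bound on weights beyond monotonicity within bundles.
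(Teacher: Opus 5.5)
Your structure is the same as the paper's: a high-probability bound $\Pot_i(R_i)\ge(\tfrac12-o(1))n$ via greedy supporting values, prefix concentration from Theorem~\ref{thm:conc} and a matching argument, then a deterministic greedy-bucketing bound, then a union bound over agents. Step one is fine. Apply the concentration with the padded parameter $\mu=1/3$, as the paper does, so the tail is $e^{-\Omega(j)}$ for early agents too.

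The gap is the constant. Your step two correctly gives $v_i(A_i)\ge \Pot_i(R_i)/(4n)-1/16$. Combined with step one, this yields only $\tfrac18-\tfrac1{16}-o(1)=\tfrac1{16}-o(1)$, not $\tfrac1{12}-o(1)$. Neither proposed repair helps. Under the hypothesis $v_i(e)\le\tfrac1{12}$, the first bucket is already charged at its worst case of about $3n\cdot\tfrac1{12}$, so there is nothing left to squeeze there. The loss relative to the stated constant is the multiplicative factor $\tfrac1{4n}$ instead of $\tfrac1{3n}$, and that comes from the $n\,v_i(A_i)$ term.

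The paper reaches $\tfrac1{12}$ by pairing step one with the per-agent inequality $v_i(A_i)\ge\tfrac{1}{3n}\Pot_i(R_i)-\tfrac1{12}$, justified only by analogy with the additive charging. For submodular valuations that inequality is false, because $A_i$ can erode every MMS bundle at once. Here is a counterexample.
\begin{itemize}
\item Take a coverage valuation on $13$ elements of weight $1/13$, with one copy of each element in each of the $n$ bundles $P^i_h$. Then $\MMS_i=1$, every singleton is worth $1/13<1/12$, and $\Pot_i([m])=n$.
\item Let agent $i$ arrive first, so $|R_i|=13n$ and there are four sampled buckets.
\item Break ties so that each of the first three buckets is one copy of a fresh element plus $3n-1$ copies of ten fixed other elements.
\item With positive probability the three picks are exactly those fresh copies. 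Each such pick makes the other $n-1$ copies of its element worthless.
\item Only $n+3$ valuable items then remain, so the fourth bucket contains $2n-3$ worthless items, and its pick can be worthless.
\end{itemize}
Thus $v_i(A_i)=3/13<1/4=\tfrac1{3n}\Pot_i(R_i)-\tfrac1{12}$, so the sharper bound you hoped to reach by ``careful tracking'' is not available.

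Your argument supports a $\tfrac1{16}-o(1)$ guarantee. Equivalently, Lemma~\ref{lem:submod-greedy-bucketing} with $k=3$ and $\tau=1/12$ gives $\tfrac14\bigl(\Pot_i(R_i)/n-\tfrac14\bigr)$. Reaching $\tfrac1{12}-o(1)$ would need a genuinely different argument, or a weaker constant in the statement.
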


\section{Constant MMS guarantee for Submodular Valuations}
\label{sec:submod}
We will now present out main result: an $(\alpha, \beta)$ approximation for the online MMS problem where agents have submodular valuations with both $\alpha \in \Theta(1)$ and $\beta \in \Theta(1)$. Our algorithm and analysis follows the ideas discussed in Section~\ref{sec:additive}. But removing the ``no high value items'' assumption requires significant extra effort.

The challenge with high value items is that a few items can capture a large fraction of the agent's potential function value, so randomizing the agent's allocation over a large set no longer guarantees the agent enough value. As an extreme example, suppose that an additive agent values $n$ items at $1$ and the remaining at $0$. Then, even if all the good items are present when the agent arrives, and the agent follows the safe strategy $\ws[]$, the agent has no more than a $1/3$ chance of getting a value $>0$. This issue can be fixed through a ``single-item reduction'': we allow the agent to pick any one large-value item with certainty. The key observation is that this should not hurt others' MMS value. If all of the agents with high value items arrive first, and each takes one item, for the remaining, say $n'<n$, agents the $1$-out-of-$n'$-MMS value {out of the remaining set of items} is still at least $1$. Running Algorithm~\ref{alg:add-simple} with a revised value $n'$ for the number of agents, should then provide our desired guarantees. The issue is that the high-value agents might be interspersed with low-value ones. So we need to adjust our allocation probabilities for the low-value agents in a manner that reflects the changing $n'$.

These ideas are formalized in Algorithm~\ref{alg:submod-rr} below. Rounds with high-value agents are called ``special'' and the variable $t$ keeps track of the number of such rounds seen by the algorithm so far. 



\begin{algorithm}[H]
\caption{Single-or-Sample Allocation Algorithm}
\label{alg:submod-rr}
\begin{algorithmic}[1]
\Require Number of agents $n$, ground set $[m]$, parameter $k$
\State $R_1 \gets [m]$; $t \gets 0$
\For{$i = 1,2,\dots,n$}
    \State Agent $i$ observes the current remaining set $R_i$
    \State Agent $i$ chooses one of the following two actions:
    \If{agent $i$ chooses a special round}
        \State Agent $i$ selects an item $e \in R_i$
        \State Allocate $e$ to agent $i$
        \State $R_{i+1} \gets R_i \setminus \{e\}$; $t \gets t+1$
    \Else
        \State Initialize $G=R_i$ and $j=1$ \label{step:rr-start}
        \While{$|G|\ge k(n-t)$} \label{step:while}
            \State Agent $i$ specifies bucket $B_{ij}\subset G$ with $|B_{ij}|=k(n-t)$ \label{step:adap-buckets}
            \State Pick a uniformly random item $e_{ij}$ from $B_{ij}$
            \State $G\gets G\setminus B_{ij}$, $j \gets j+1$
        \EndWhile
        \State Allocate $A_i=\cup_j\{e_{ij}\}$ to agent $i$
        \State $R_{i+1} \gets R_i \setminus A_i$ \label{step:rr-end}
    \EndIf
\EndFor
\end{algorithmic}
\end{algorithm}

\subsection{Safe strategy and \MMS\ guarantee}

We next define a ``safe strategy'' for agent $i$, denoted $\ws$. By following this strategy, a submodular agent can guarantee a constant fraction of her \MMS\ value with high probability, regardless of the actions of the other agents. 
{For a fixed parameter $c$ to be specified later, the following safe strategy targets value $\tau=1/ck$ for each agent.}
\begin{definition}\label{def:safe}
    {\bf (Safe Strategy $\ws$ for agent $i$ in Algorithm~\ref{alg:submod-rr})}
    If $R_i$ contains an item $e$ with $v_i(e)\ge \tau$, then agent $i$ chooses a special round and selects an item of maximum value, $e\in \argmax_{g\in R_i} v_i(g)$.    Otherwise, the agent participates in the randomized iterative allocation in Steps~\ref{step:rr-start}--\ref{step:rr-end} and receives the resulting allocation $A_i$.

    In round $j$ of the while loop (Step~\ref{step:while}), let $A_i^{j-1}:=\{e_{ij'}:j'<j\}$ denote the items allocated to agent $i$ in earlier rounds. The marginal value of an item $e\in G$ is
    \[
        v_i(e\mid A_i^{j-1})
        :=
        v_i(A_i^{j-1}\cup\{e\})-v_i(A_i^{j-1}).
    \]
    The agent chooses $B_{ij}$ to consist of the $k(n-t)$ items in $G$ with the largest marginal values.
\end{definition}

\begin{theorem}
\label{thm:submod-rr-main}
    For every instance $\I = ([n],[m],(v_i)_{i\in[n]})$, there exist choices of the parameters $c$ and $k$, and failure probabilities $q_i$, $i\in[n]$, such that:
    \begin{itemize}
        \item For every agent $i\in[n]$ with a submodular valuation function, following the safe strategy $\ws$ in \Cref{alg:submod-rr} guarantees an allocation of value at least $\frac{1}{24}\,\MMS_i$ with probability at least $1-q_i$, regardless of the actions of the other agents.
        \item The failure probabilities satisfy $\sum_{i\in[n]} q_i \le 0.577$.
    \end{itemize}
\end{theorem}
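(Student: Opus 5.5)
Fix an agent $i$ who follows $\ws$, and normalize so that $\MMS_i=1$ with MMS partition $(P^i_1,\dots,P^i_n)$. If agent $i$ selects a special round, she receives an item of value at least $\tau=1/(ck)$; choosing $ck\le 24$ already gives $1/24$. The only way she can fail is by entering the sampling branch with too little residual value. The plan therefore has two parts: (a) a deterministic lower bound on her sampled allocation in terms of a potential, and (b) a probabilistic bound showing the potential survives, with failure probability $q_i$ depending on $i$ and on the number of special rounds.

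For part (a), I follow the submodular warm-up lemma. With buckets of size $k(n-t)$ chosen greedily by marginal value, the charging argument gives
\[
v_i(A_i)\ \ge\ \frac{1}{k(n-t)}\,\Pot^{(t)}_i(R_i)\;-\;\tau ,
\]
where the subtracted term is the first bucket, whose elements all have value below $\tau$. The right potential is
\[
\Pot^{(t)}_i(R)=\sum_{j\in J} v_i(R\cap P^i_j),
\]
taken over the bundles $P^i_j$ not yet hit by special rounds. The key accounting fact is that each special round removes a single item, so it destroys at most one MMS bundle of agent $i$. After $t$ special rounds, at least $n-t$ bundles are untouched by special items. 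It therefore suffices to show that a constant fraction, say at least $1/2$, of the supporting value of the untouched bundles survives the sampling rounds; then $v_i(A_i)\ge \frac{1}{2k}-\tau$, and taking $k=6$, $c=4$ yields $1/24$.

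For part (b), I use the deficit function suggested in the overview. Assign greedy supporting values $a_g\le\tau$ to the elements of each bundle; large items are handled separately, since items with $v_i(g)\ge\tau$ that survive would make agent $i$ special. Track the supporting value removed by the sampling rounds only. In a sampling round with counter $t'$, each item is removed with probability at most $1/(k(n-t'))$, and this is adaptive across buckets, so $\mswr$ applies. Special rounds remove a single item, and I charge that item to the bundle destroyed, leaving the deficit unchanged. Summing, the total removal rate over the sampling rounds before agent $i$ is at most
\[
\sum_{i'<i}\frac{1}{k(n-t_{i'})}\le \frac{1}{k}\,H\!\left(\frac{n}{n-i}\right)\approx \frac{1}{k}\log\frac{n}{n-i+1}.
\]
Applying \Cref{lem:iterative-cylinder} with weights $a_g/\tau\in[0,1]$ over the relevant elements, and then Chernoff via the MGF bound, gives the failure bound
\[
q_i\le \exp\!\bigl(-\Omega\bigl((n-i+1)/(\tau k)\bigr)\bigr).
\]
The important feature is that the scale is the residual number of agents $n-i+1$ rather than $n$, because the potential of the remaining $n-t$ bundles is about $n-t\ge n-i+1$.

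Summing $q_i$ over $i$ then gives a geometric-like series dominated by the last few agents, and the constants are tuned so that $\sum_i q_i\le 0.577$. For the last handful of agents the Chernoff bound is vacuous, and I would use a direct expectation or Markov bound for them instead.

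The main obstacle is making the deficit argument rigorous despite the adaptivity of the special rounds. Whether a round is special, and hence $t$ and the later sampling probabilities, depends on earlier random outcomes, and special items may be removed from bundles in a correlated way. I would first try to embed the whole process into an $\iswr$ process with per-round upper bounds $p_{i'}=1/(k(n-i'))$, which is valid since $t\le i'-1$, treating each special removal as a deterministic deletion that is ignored by the deficit. The subtle point is that the harmonic sum $\sum_{i'}\frac{1}{k(n-i')}$ is only bounded away from $1$ when $k$ is large relative to $\log(n/(n-i))$. For agents near the end, this forces one to use the fact that the sampling rate is $1/(k(n-t))$ with $n-t\ge n-i'$ being the remaining agents, and to carefully trade the larger per-round rates against the smaller residual target.
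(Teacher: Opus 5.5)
\textbf{There is a genuine gap in part (b).} It sits exactly at what you call the main obstacle, and it is not confined to a few final agents.

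\emph{Why the rate bound fails.} With the deterministic rates $p_{i'}=1/(k(n-i'+1))$, your cumulative rate $\frac1k\ln\frac{n}{n-i+1}$ exceeds $1/2$ for every $i$ with $n-i+1\le e^{-k/2}n$. For constant $k$ that is $\Theta(n)$ agents. For them Chernoff is vacuous, and Markov-type bounds give at best constant failure probabilities, which are useless when summed over $\Theta(n)$ agents. (Pathwise, the true cumulative rate is at most $r_i/(kN_i)<1/k$. But it is random, so Lemma~\ref{lem:iterative-cylinder} does not apply to it.)

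\emph{Why a fixed-set bound cannot be localized.} Lemma~\ref{lem:iterative-cylinder} and Theorem~\ref{thm:conc} concern a fixed set. The bundles untouched by special rounds form a random set that the other agents can influence adaptively. Bounding the loss over all $n$ bundles instead does not work. Suppose $n-\sqrt n$ special rounds each remove a low-weight item from a different bundle, followed by sampling rounds with $N=\sqrt n$. Then the all-bundle loss is $\Theta(n/k)$, while the target $\gamma(n-t_i)$ is only $\Theta(\sqrt n)$.

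\textbf{The missing idea} is to count loss in absolute units on exactly $N_j=n-t_j$ active bundles.
\begin{enumerate}
\item Take per-bundle normalized prices $w_e=a_e/v_i(P_h)\in[0,1]$, with large items included.
\item Deactivate exactly one bundle per special round, and let $D_j=N_j-W_j$. Special rounds never increase $D_j$.
\item In a non-special round the active weight is at most $N$, and each item is taken with probability at most $1/(kN)$. So Lemma~\ref{lem:weighted-mgf} bounds the conditional MGF increment by $e^{\psi(\lambda)}$ with $\psi(\lambda)=(e^\lambda-1)/k$, independently of $t$.
\item Hence $\E[\exp(\lambda D_j-\psi(\lambda)r_j)]\le 1$.
\item Since $N_i=(n-i+1)+r_i$, each sampling round raises the threshold $\lambda\gamma N_i$ by $\lambda\gamma\ge\psi(\lambda)$. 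Markov then gives $q_i\le e^{-\lambda\gamma(n-i+1)}$.
\item Summing, $\sum_i q_i<1/(e^{\lambda\gamma}-1)<0.577$ for $k=4$, $\tau=1/24$, $\gamma=5/8$, $\lambda=\ln 5$.
\end{enumerate}
This is the ``trade'' you describe but do not carry out.

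\textbf{Two smaller issues.}

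\emph{Large items.} Rescaling weights by $1/\tau$ breaks for items of value at least $\tau$, which earlier sampling rounds may remove. Your claimed rate $\exp(-\Omega((n-i+1)/(\tau k)))$ cannot hold uniformly in $\tau$. Suppose agent $n$ is additive and values $n$ items at $1$. The first $n-2$ agents take $n-2$ of these in special rounds, and agent $n-1$ puts the last two in different buckets of size $2k$. Then agent $n$ fails with probability $1/(4k^2)$, whatever $\tau$ is.

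\emph{Greedy-bucketing constant.} The submodular charging argument bounds each of the $N$ bundles by $v_i(A_i)$ plus its charges. It therefore yields only $v_i(A_i)\ge\frac1{k+1}(\Pot/N-k\tau)$ (Lemma~\ref{lem:submod-greedy-bucketing}). With your $k=6$, $c=4$ and half the weight surviving, this gives $1/28$ rather than $1/24$, so the constants need re-tuning.
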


\begin{corollary}
    If all agents $i\in[n]$ have submodular valuation functions and follow their respective safe strategies $\ws$, then \Cref{alg:submod-rr} is $\left(\frac{1}{24},\,0.423\right)$-competitive.
\end{corollary}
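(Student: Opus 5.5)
Fix agent $i$, normalized so $\MMS_i=1$ with MMS partition $(P^i_1,\dots,P^i_n)$. Set $\tau=1/ck$ and, for concreteness, aim for $k\approx 2$ and $c\approx 12$, so that $\tau=1/24$; the final constants will be tuned at the end. The proof will separate the two ways agent $i$ can fail: being in a special round, and being in a sampling round.

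\textbf{Special rounds.} If $R_i$ contains an item of value $\ge\tau$, the safe strategy gives $v_i(e)\ge\tau\ge\frac1{24}$ deterministically. Agent $i$ can therefore fail only in a non-special round, where every remaining item has singleton value below $\tau$.

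\textbf{Sampling rounds: allocation lemma.} Here I would first prove the counterpart of the allocation lemma of \Cref{sec:additive}. Suppose $t$ special rounds preceded agent $i$, so the buckets have size $k(n-t)$. The greedy charging argument then gives
\[
v_i(A_i)\ \ge\ \frac{\Pot_i(R_i)-\text{(value of the first bucket)}}{k(n-t)} .
\]
The first bucket is worth at most $k(n-t)\tau$, so
\[
v_i(A_i)\ \ge\ \frac{\Pot_i(R_i)}{k(n-t)}-\tau .
\]
It therefore suffices to show that, with high probability, $\Pot_i(R_i)\ge (n-t)\cdot\gamma$ for a constant $\gamma$ with $\gamma/k-\tau\ge 1/24$.

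\textbf{Survival of the potential.} I would account for the $t$ special items by charging them to MMS bundles: each special item removed lies in at most one $P^i_j$. Discard those (at most $t$) bundles, which leaves at least $n-t$ untouched bundles. What remains is to show that sampling rounds do not destroy too much value in the untouched bundles. For the items lost to sampling I would use the deficit approach sketched in \Cref{sec:techoverview-conc}:
\begin{itemize}
\item assign supporting values $a_g$ greedily inside each bundle, as in the submodular warm-up, so that $\sum_{g\in P_j\cap R}a_g\le v_i(P_j\cap R)$;
\item let the deficit be the total supporting value removed in sampling rounds.
\end{itemize}
A sampling round performed when $t'$ special rounds have occurred removes each item with probability at most $1/k(n-t')$. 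This fits the $\iswr$/$\mswr$ framework, with special rounds treated as removals that the deficit ignores. Summed over the rounds before $i$, the removal probabilities are roughly $\sum_{s} 1/k(n-t_s)$, which could be as large as about $\frac1k\ln\frac{n}{n-t-i'}$.

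\textbf{Main obstacle.} This harmonic sum is the crux. For late agents with $n-t$ small, $\mu$ is no longer bounded away from $1/2$, and the concentration of \Cref{lem:iterative-cylinder} is also weak because there is little mass. I would bound the MGF of the deficit using \Cref{lem:iterative-cylinder}, with weights scaled by $\tau$ so they lie in $[0,1]$. The quantity to compare is $W=\Pot\approx (n-t)$ against $\mu W$. By Markov on $e^{\lambda\cdot\text{deficit}}$, the failure probability for agent $i$ comes out as
\[
q_i\le \exp\bigl(-\Omega((n_i-t)/\tau)\bigr), \qquad n_i-t:=\text{number of agents remaining, counted from agent } i .
\]
This decays geometrically in the number of remaining agents, so summing over $i$ gives a geometric series. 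That series is dominated by the last few agents and yields a constant, which explains why $\beta$ is a constant rather than $1-o(1)$.

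The difficulty is that the adversary controls when special rounds occur. I would therefore take the worst case over interleavings: replace $\mu$ by its maximum over all such schedules and verify that it stays below a constant $<1$ times $1/k$ per unit. Finally I would choose $k$ and $c$ numerically so that $\sum_i q_i\le 0.577$ while $\gamma/k-\tau\ge 1/24$.
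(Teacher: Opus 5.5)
Your outline has the same shape as the paper's argument: special rounds succeed outright, then a greedy-bucketing lemma, a supporting-price deficit, an MGF bound with Markov, and finally a geometric union bound. However, the step that carries the proof is left open, and the tools you propose cannot close it.

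\textbf{The missing step.} The bundles ``untouched by special rounds'' form a random collection. Which items earlier agents take in special rounds depends on what remains, and hence on the algorithm's coins. Lemma~\ref{lem:iterative-cylinder} and Theorem~\ref{thm:conc} require a fixed set and fixed $p_1,\dots,p_n$. A union bound over the $\binom{n}{t}$ possible collections overwhelms an $e^{-\Theta(n-t)}$ tail.

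The natural fallbacks also fail. If you measure loss on all $n$ bundles (a fixed set), the mean is about $\mu n\approx n/k$ when special rounds come first and $i$ is late. That is far above the budget $\gamma(n-t)$ once $t$ is large. If you bound the random rates $1/(k(n-t_s))$ by $1/(k(n-s+1))$, you get exactly the harmonic sum you flag. Taking ``the worst case over interleavings'' does not help either, because the interleaving is itself random and correlated with the losses.

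\textbf{The paper's fix.} The missing idea is an adapted normalization.
\begin{itemize}
\item Maintain exactly $N_j=n-t_j$ active bundles, deactivating one per special round (the touched one, if it is active).
\item Set $D_j=N_j-W_j$, where $W_j$ is the surviving supporting weight of the active bundles.
\end{itemize}
A special round then never increases $D_j$. In a non-special round the active weight is at most $N_j$, and each item is sampled with probability at most $1/(kN_j)$. So Lemma~\ref{lem:weighted-mgf} bounds the conditional MGF increment by $e^{\psi(\lambda)}$ with $\psi(\lambda)=(e^\lambda-1)/k$, independently of $N_j$. Hence $\exp(\lambda D_j-\psi(\lambda)r_j)$ is a supermartingale (Lemma~\ref{lem:deficit-mgf}).

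Failure requires $D>\gamma N_i=\gamma(n-i+1)+\gamma r_i$. Choosing $\lambda$ with $\psi(\lambda)\le\lambda\gamma$ cancels the $r_i$ term and gives $q_i\le e^{-\lambda\gamma(n-i+1)}$. Intuitively, each non-special round adds at most $1/k$ to the expected deficit but also adds $\gamma$ to the threshold.

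\textbf{Secondary errors.}
\begin{itemize}
\item Your rate $\exp(-\Omega((n-t)/\tau))$ is wrong. Items agent $i$ values highly may be removed in earlier sampling rounds, so removed supporting weights lie only in $[0,1]$, not $[0,\tau]$. Indeed, summed over agents, such a rate would give a $\tau$-MMS guarantee failing with probability $e^{-\Omega(1/\tau)}$. For small $\tau$ this contradicts Theorem~\ref{thm:lb_c}.
\item The statement asserts the specific constant $0.423$, so you need the explicit exponent, not an $\Omega(\cdot)$. The choices $k=4$, $\tau=1/24$, $\gamma=1-(2k+1)\tau=5/8$, $\lambda=\ln 5$ satisfy $\psi(\lambda)\le\lambda\gamma$ and give $\sum_i q_i<1/(5^{5/8}-1)<0.577$.
\item Your bucketing bound $v_i(A_i)\ge\Pot/(kN)-\tau$ is false for submodular valuations. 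A coverage function whose $N$ bundles are copies of the same elements can yield only about $\Pot/((k+1)N)$ under adversarial tie-breaking and unlucky picks. The charging argument gives $\Pot-Nv_i(A_i)\le kN(\tau+v_i(A_i))$, i.e.\ $v_i(A_i)\ge\frac{1}{k+1}(\Pot/N-k\tau)$. The success condition therefore becomes $\gamma\ge(2k+1)\tau$.
\end{itemize}
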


\paragraph{Incentive properties.} Note that, Algorithm \ref{alg:submod-rr} requires only ordinal information from the agents, namely the next set of goods with highest marginal, and therefore is low on information elicitation. Further, 
\Cref{thm:submod-rr-main} implies a strong incentive guarantee, namely following the {\em safe strategy} ensures $1/24$-MMS to an agent no matter how strategic other agents are. Moreover, the safe strategy is particularly simple to implement for an agent: it requires only identifying the highest-marginal-value items in each round. Finally, the algorithm for additive valuations is truthful (DSIC), under the standard assumption that agents want to maximize their expected value: regardless of how an agent buckets the remaining items, she receives exactly a $1/k(n-t)$ fraction of her remaining value in expectation. Finally, for submodular valuations, truthful behavior in the adaptive bucketing algorithm---that is, reporting the items of highest marginal value in each round---constitutes a {\em safe strategy}. By following this strategy, an agent is guaranteed a constant-factor approximation to her MMS, regardless of how the other agents behave, even if they act adversarially. Moreover, the safe strategy is particularly simple to implement: it requires only identifying the highest-marginal-value items in each round. 

\subsection{Analysis of a single agent}
\label{sec:submod-anal}

Fix an agent $i$, and normalize her MMS value to one.  We analyze the
safe strategy of agent $i$ against arbitrary behavior of the agents that
arrive before her.  
If an item of
value at least $\tau=1/ck$ is available when agent $i$ arrives, she takes such
an item in a special round; otherwise she participates in the adaptive
bucketing procedure of the Single-or-Sample Algorithm.

\subsubsection{Greedy bucketing (Counterpart of Lemma~\ref{lem:frac-alloc-simple})}

We first prove that if enough bundles of the agent retain enough total value, then the agent deterministically receives a large value allocation.

\begin{lemma}[Greedy bucketing guarantee]
\label{lem:submod-greedy-bucketing}
Suppose agent $i$ arrives after $t$ special rounds, and let
$N:=n-t$.  Let $Q_1,\ldots,Q_N$ be any pairwise disjoint sets of goods,
and define
\[
    \Pot:=\sum_{h=1}^N v_i(Q_h\cap R_i).
\]
If agent $i$ follows the safe strategy $\ws$, then
\[
    v_i(A_i)
    \ge
    \min\left\{
        \tau,
        \frac{1}{k+1}\left(\frac{\Pot}{N}-k\tau\right)
    \right\}.
\]
In particular, if
$\Pot\ge (2k+1)\tau N$, then $v_i(A_i)\ge\tau$. 
\end{lemma}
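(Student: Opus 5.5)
The plan is a deterministic charging argument in the style of Lemma~\ref{lem:frac-alloc-simple}. Each item the agent receives pays for the whole next bucket. Submodularity then converts the total marginal value of all remaining items into a lower bound involving $\Pot$.

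\textbf{Special round.} First dispose of the case where agent $i$ takes a special round. Under $\ws$ this happens exactly when $R_i$ contains an item of value at least $\tau$, and she takes it, so $v_i(A_i)\ge\tau$ and the minimum is attained. Otherwise every $g\in R_i$ has $v_i(g)<\tau$. By submodularity, every marginal value $v_i(g\mid S)$ with $S\subseteq R_i$ is also below $\tau$.

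\textbf{Setup for the sampling round.} Let $K:=kN$. Let the full buckets be $B_{i1},\dots,B_{ir}$ with picked items $e_1,\dots,e_r$, and write $A^j:=\{e_1,\dots,e_j\}$, so $A_i=A^r$. Let $L:=R_i\setminus\bigcup_j B_{ij}$ be the leftover set, with $|L|<K$.

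\textbf{Degenerate case $r=0$.} Then $|R_i|<K$ and $A_i=\emptyset$. Each $Q_h\cap R_i$ has value at most $\tau|Q_h\cap R_i|$, and the $Q_h$ are disjoint, so $\Pot<K\tau=kN\tau$. The claimed bound is then nonpositive and holds trivially.

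\textbf{Charging step for $r\ge1$.} The greedy choice of $B_{ij}$ gives $v_i(e_j\mid A^{j-1})\ge v_i(g\mid A^{j-1})$ for every $g$ in $B_{i,j+1}$ (or in $L$ when $j=r$). Since $A^{j-1}\subseteq A_i$, submodularity gives $v_i(g\mid A^{j-1})\ge v_i(g\mid A_i)$. The next bucket has exactly $K$ items, and $L$ has fewer than $K$, so
\[
v_i(e_j\mid A^{j-1})\ \ge\ \frac1K\sum_{g\in B_{i,j+1}} v_i(g\mid A_i)\qquad(\text{or } \tfrac1K\sum_{g\in L}v_i(g\mid A_i) \text{ for } j=r).
\]
Summing the telescoping marginals over $j$ yields
\[
K\,v_i(A_i)\ \ge \sum_{g\in R_i\setminus B_{i1}} v_i(g\mid A_i)\ \ge\ \sum_{g\in R_i} v_i(g\mid A_i)-K\tau ,
\]
because $B_{i1}$ has $K$ items, each with marginal below $\tau$. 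This handling of the first bucket is the only place the no-large-item condition enters.

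\textbf{Lower bound via $\Pot$.} This is the step that needs care. By submodularity applied within each disjoint set $Q_h\cap R_i$, together with monotonicity,
\[
\sum_{g\in R_i} v_i(g\mid A_i)\ \ge\ \sum_{h=1}^N v_i\big((Q_h\cap R_i)\mid A_i\big)\ \ge\ \sum_{h=1}^N\big(v_i(Q_h\cap R_i)-v_i(A_i)\big)=\Pot-N\,v_i(A_i).
\]

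\textbf{Combining.} The two displays give $(K+N)\,v_i(A_i)\ge \Pot-K\tau$. Dividing by $N(k+1)$ yields
\[
v_i(A_i)\ \ge\ \frac1{k+1}\Big(\frac{\Pot}{N}-k\tau\Big).
\]
For the final claim, if $\Pot\ge(2k+1)\tau N$ the right-hand side is at least $\frac{(k+1)\tau}{k+1}=\tau$, so $v_i(A_i)\ge\tau$ in either case.
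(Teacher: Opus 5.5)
Your proof is correct and follows essentially the same charging argument as the paper. Each picked item $e_j$ pays for the $kN$ items of the next bucket (or the leftover set), the first bucket costs at most $kN\tau$, and submodularity plus monotonicity over the disjoint sets $Q_h\cap R_i$ give $\Pot - N v_i(A_i)\le \sum_{g\in R_i} v_i(g\mid A_i)$. The differences are cosmetic: you bound $v_i(g\mid A_i)$ directly instead of going through the paper's intermediate charges $c_g$, and you treat the no-full-bucket case $r=0$ explicitly, which the paper's write-up leaves implicit.
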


\begin{proof}
If $R_i$ contains an item of value at least $\tau$, the safe strategy
chooses a special round and obtains value at least $\tau$.  We may
therefore assume that $v_i(e)<\tau$ for all $e\in R_i$. The agent then runs the adaptive bucketing procedure with bucket size $kN$. 

Let $B_1,\ldots,B_\ell$ be the full buckets produced by the procedure, let $e_s$ be the item selected from $B_s$, and let
$E_s:=\{e_1,\ldots,e_s\}$, with $E_0=\varnothing$. Note that $A_i = E_{\ell}$. Define $\Delta_s:=v_i(e_s\mid E_{s-1})$, so that $\sum_{s=1}^{\ell}\Delta_s= v_i(A_i)$.  Let $L$ be the final
set of fewer than $B$ items that do not form a full bucket.

We charge every item in $R_i\setminus A_i$ as follows.  If
$g\in B_s\setminus\{e_s\}$, set $c_g:=v_i(g\mid E_{s-1})$, and if $g\in L$, set $c_g:=v_i(g\mid E_\ell)$.  Let
$C:=\sum_{g\in R_i\setminus A_i}c_g$.

For every $h\in[N]$, monotonicity and submodularity imply
\begin{align*}
    v_i(Q_h\cap R_i)-v_i(A_i)
    &\le v_i((Q_h\cap R_i)\cup A_i)-v_i(A_i) \\
    &\le \sum_{g\in (Q_h\cap R_i)\setminus A_i}v_i(g\mid A_i) \le \sum_{g\in (Q_h\cap R_i)\setminus A_i}c_g.
\end{align*}

Since the $Q_h$'s are disjoint, summing over $h$ gives
\begin{equation}
\label{eq:charge-lower}
    \Pot-Nv_i(A_i)\le C.
\end{equation}

We next upper bound $C$.  Since $v_i(e)\le\tau$ for all items $e$, the total charge
of the unchosen items in the first bucket is at most $kN\tau$.  For
$s\ge2$, every $g\in B_s$ was available when $B_{s-1}$ was formed.
Since $B_{s-1}$ consists of the $kN$ items of largest marginal value at
that point,
\[
    c_g
    =v_i(g\mid E_{s-1})
    \le v_i(g\mid E_{s-2})
    \le v_i(e_{s-1}\mid E_{s-2})
    =\Delta_{s-1}.
\]
Likewise, every $g\in L$ was available when $B_\ell$ was formed, so
$c_g\le\Delta_\ell$.  Hence
\begin{equation}
\label{eq:charge-upper}
    C\le kN\tau+kN\sum_{s=1}^{\ell}\Delta_s
    =kN(\tau+v_i(A_i)).
\end{equation}

Combining~\eqref{eq:charge-lower} and~\eqref{eq:charge-upper}, and rearranging, we get,
\[
    v_i(A_i)
    \ge
    \frac{1}{k+1}\left(\frac{\Pot}{N}-k\tau\right).
\]
Together with the special-round guarantee this proves the claim.
\end{proof}

\subsubsection{The deficit: a weighted loss function}

We now track the loss in value caused by agents preceding \(i\). Rather than counting the number of previously allocated items, we track the loss through supporting prices of the items defined by the agent's submodular valuation function.

Special rounds introduce an additional complication. If \(t_i\) special rounds occur before agent \(i\) arrives, we would ideally restrict attention to \(n-t_i\) of agent \(i\)'s \MMS\ bundles, excluding one bundle touched by each special round. However, \(t_i\), as well as the identities of the bundles touched by special rounds, may depend on the algorithm's previous random choices, so we cannot fix this collection of bundles in advance and apply concentration to it directly. We circumvent this issue by maintaining a set of {\em active} bundles; discarding one active \MMS\ bundle after each special round; and defining a deficit that measures the potential lost from the active bundles relative to their {\em nominal potential}, namely the potential they would have if each active bundle retained its full value $1$. More precisely, the deficit is defined to be the number of active bundles minus the sum of the supporting prices of the unallocated items in the active bundles. We now provide the formal definitions.

\paragraph{Supporting prices.}
Let $(P_1,\ldots,P_n)$ be an \MMS-defining partition for agent $i$.
For each bundle $P_h$, fix an arbitrary ordering
$g_{h,1},\ldots,g_{h,\ell_h}$ and define the chain marginal
$a_{g_{h,s}}
    :=v_i\bigl(g_{h,s}\mid \{g_{h,1},\ldots,g_{h,s-1}\}\bigr)$.
We normalize these marginals within each \MMS\ bundle and define the
\emph{supporting prices} $w_e:=\frac{a_e}{v_i(P_h)}$ for $e\in P_h$. These prices have the following properties:

\begin{equation*}
    \sum_{e\in P_h}w_e=1,
    \qquad
    0\le w_e\le 1,
    \qquad
    \sum_{e\in Q}w_e\le v_i(Q)
    \quad\text{for every }Q\subseteq P_h.
\end{equation*}

\paragraph{Active \MMS\ bundles and the deficit.} Initially every \MMS\ bundle is \emph{active}. After each non-special round the active collection is unchanged.  After a special round, if the
item allocated in that round belongs to a currently active bundle, we
deactivate that bundle.  Otherwise we deactivate an arbitrary currently
active bundle.  These deactivations are only part of the analysis and do
not alter the algorithm.

For $j=0,1,\ldots,i-1$, let $t_j$ be the number of special rounds among
the first $j$ rounds, let \(N_j:=n-t_j\),
and let $\mathcal B_j\subseteq[n]$ denote the indices of the active \MMS\
bundles after these $j$ rounds.  Thus $N_j=|\mathcal B_j|$ is the nominal potential.  Recall that
$R_{j+1}$ is the set of goods remaining after the first $j$ agents have
been served.  Define the surviving supporting weight
\begin{equation}
    W_j
    :=
    \sum_{h\in\mathcal B_j}
    \sum_{e\in P_h\cap R_{j+1}} w_e, \qquad \text{and the \emph{deficit},} \qquad D_j:=N_j-W_j. \label{eq:deficit-definition}
\end{equation}
Initially $W_0=N_0=n$, and hence $D_0=0$.

The reason for deactivating one bundle in every special round is that a
special round can then never increase the deficit.  Indeed, suppose the
surviving supporting weight of the bundle deactivated in a special round
is $b$.  Since each MMS bundle has total supporting weight one, $b\le1$.
The special round changes $N_j$ to $N_j-1$ and $W_j$ to $W_j-b$, and so $D_{j+1}=D_j-(1-b)\le D_j$.
If the special item lies in an active bundle, that is precisely the
bundle we deactivate; if it lies outside the active bundles, its
allocation removes no active supporting weight and we may deactivate any
active bundle.



\subsubsection{An exponential-moment bound for the deficit}

Let $r_j:=j-t_j$ denote the number of non-special rounds among the first
$j$ rounds.  For $\lambda\ge0$, define $\psi(\lambda):=(e^\lambda-1)/k$.
We next prove the exponential-moment estimate that will replace the
fixed-set concentration argument.

\begin{lemma}[Deficit exponential moment]
\label{lem:deficit-mgf}
For every $j\in\{0,1,\ldots,i-1\}$ and every $\lambda\ge0$,
\begin{equation}
\label{eq:deficit-mgf}
    \mathbb E\left[
        \exp\bigl(\lambda D_j-\psi(\lambda)r_j\bigr)
    \right]
    \le1.
\end{equation}
The expectation is over the random choices of the algorithm in the first
$j$ rounds.  The bound holds even when the preceding agents choose
special versus non-special rounds, and choose their buckets, as arbitrary
functions of the history so far.
\end{lemma}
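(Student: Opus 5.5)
We prove \eqref{eq:deficit-mgf} by induction on $j$, showing that the process $M_j:=\exp(\lambda D_j-\psi(\lambda)r_j)$ is a supermartingale with respect to the history $\mathcal F_j$ of the first $j$ rounds. Since $D_0=0$ and $r_0=0$, we have $M_0=1$. It therefore suffices to show $\mathbb E[M_{j+1}\mid\mathcal F_j]\le M_j$ for every history, and then take expectations. Whether round $j+1$ is special, and which buckets are used, are functions of $\mathcal F_j$ (and of the randomness within the round revealed so far). We therefore split into two cases according to the type of round $j+1$, conditioning on $\mathcal F_j$.

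\emph{Special round.} Here $r_{j+1}=r_j$. By the deactivation rule discussed above, $D_{j+1}=D_j-(1-b)\le D_j$ deterministically. Hence $M_{j+1}\le M_j$ pointwise and the step is immediate.

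\emph{Non-special round.} Here $r_{j+1}=r_j+1$ and the active set is unchanged, so $N_{j+1}=N_j$. Moreover $D_{j+1}-D_j=W_j-W_{j+1}=X:=\sum_{e\in A_{j+1}}x_e$, where $x_e:=w_e\mathbbm 1[e\in P_h,\,h\in\mathcal B_j]\in[0,1]$ and $A_{j+1}$ is the set allocated in round $j+1$. (Items not in $R_{j+1}$ cannot be allocated, and items outside the active bundles get weight $0$.) Conditioned on $\mathcal F_j$, the round runs buckets of size $k(n-t_j)$, each picking one uniformly random item, and the remainder is left unsampled. The buckets are chosen adaptively, so this is exactly an $\mswr(R_{j+1},p)$ process with $p=1/(k(n-t_j))=1/(kN_j)$; the leftover set $L$ corresponds to sampling probability $0\le p$. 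Applying Lemma~\ref{lem:weighted-mgf} with ground set $R_{j+1}$ and weights $x_e$ gives
\[
\mathbb E[e^{\lambda X}\mid\mathcal F_j]\le\exp\bigl(pW_j(e^\lambda-1)\bigr)\le \exp\Bigl(\tfrac{W_j}{kN_j}(e^\lambda-1)\Bigr)\le e^{\psi(\lambda)},
\]
using $W_j\le N_j$ (each active bundle has total weight at most one). Therefore $\mathbb E[M_{j+1}\mid\mathcal F_j]=M_j e^{-\psi(\lambda)}\mathbb E[e^{\lambda X}\mid\mathcal F_j]\le M_j$.

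The main point needing care is the non-special step: we must justify that Lemma~\ref{lem:weighted-mgf} applies conditionally on an arbitrary adaptive history. This holds because the preceding agents' choices and the bucket choices of agent $j+1$ enter only through $\mathcal F_j$ and through intra-round adaptivity, and $\mswr$ explicitly permits the latter. We also need the per-round probability bound $p=1/(kN_j)$ to match the normalization $W_j\le N_j$; this is precisely why the deficit is measured against the active bundle count $N_j$ rather than $n$. Chaining the supermartingale inequality over $j$ rounds yields $\mathbb E[M_j]\le 1$.
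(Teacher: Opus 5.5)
Your proposal is correct and follows essentially the same route as the paper. Both run an induction (a one-step supermartingale inequality) in which a special round is handled pointwise via $D_{j+1}=D_j-(1-b)\le D_j$, and a non-special round is handled by applying Lemma~\ref{lem:weighted-mgf}, conditionally on the history, to the \mswr\ process with $p=1/(kN_j)$ and weights summing to $W_j\le N_j$. Treating the leftover set as a bucket sampled with probability $0$ is a harmless formal detail that the paper leaves implicit.
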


\begin{proof}
We proceed by induction on $j$.  The claim is immediate for $j=0$, since
$D_0=r_0=0$.

Suppose the claim holds after $j$ rounds, and condition on an arbitrary
history through the beginning of round $j+1$, including the arriving
agent's choice of whether the round is special or non-special.

If round $j+1$ is special, then $r_{j+1}=r_j$ and $D_{j+1}\le D_j$.  Hence pointwise
\[
    e^{\lambda D_{j+1}-\psi(\lambda)r_{j+1}}
    \le
    e^{\lambda D_j-\psi(\lambda)r_j}.
\]

Now suppose round $j+1$ is non-special.  Put $N:=N_j$.  For each good
available at the beginning of the round, define
\[
    x_e:=
    \begin{cases}
        w_e, & \text{if $e$ belongs to a bundle in $\mathcal B_j$},\\
        0,   & \text{otherwise.}
    \end{cases}
\]
Then $x_e\in[0,1]$ and $\sum_{e\in R_{j+1}}x_e=W_j\le N$. 

The adaptive bucketing procedure in this round is an \mswr\  
process with sampling
probability at most $p=1/kN$.
If $A$ denotes the set allocated in this round, then
$D_{j+1}-D_j=\sum_{e\in A}x_e$.
Applying Lemma~\ref{lem:weighted-mgf},
conditional on the current history, gives
\begin{align*}
    \mathbb E\left[
        e^{\lambda(D_{j+1}-D_j)}
        \mid \text{history}
    \right]
    &\le
    \exp\left(
        p\Bigl(\sum_{e\in R_{j+1}}x_e\Bigr)(e^\lambda-1)
    \right)
    \le
    \exp\left(\frac{e^\lambda-1}{k}\right)
    =e^{\psi(\lambda)}.
\end{align*}
Since $r_{j+1}=r_j+1$ in a non-special round, it follows that
\begin{align*}
\mathbb E\left[
    e^{\lambda D_{j+1}-\psi(\lambda)r_{j+1}}
    \mid \text{history}
\right] & = \mathbb E\left[
    e^{\lambda (D_{j+1}-D_j)}e^{\lambda D_j-\psi(\lambda)r_{j}}e^{-\psi(\lambda)}
    \mid \text{history}
\right]\\
&\le
    e^{\lambda D_j-\psi(\lambda)r_j}
    e^{-\psi(\lambda)}
    e^{\psi(\lambda)}
    =e^{\lambda D_j-\psi(\lambda)r_j}.
\end{align*}
Thus the same conditional inequality holds whether the round is special
or non-special.  Taking expectations and applying the induction
hypothesis proves~\eqref{eq:deficit-mgf} for $j+1$.
\end{proof}

\subsubsection{Failure probability for one agent (Counterpart of Lemma~\ref{lem:enough-total-simple})}
We now convert the exponential-moment bound into a failure probability. Let $t_i:=t_{i-1}$ be the number of special rounds before agent $i$
arrives, $N_i:=n-t_i$ be the number of active MMS bundles at the arrival of agent
$i$, and $r_i:=i-1-t_i$ be the number of preceding non-special rounds.

Recall that agent $i$ succeeds if $v_i(A_i)\ge\tau$, and by Lemma~\ref{lem:submod-greedy-bucketing}, it suffices to ensure that the total residual value of the $N_i$ active \MMS\ bundles is at least $(2k+1)\tau N_i$. Furthermore, with $(w_e)_{e\in G}$ being the supporting prices defined previously, and $\mathcal B_{i-1}$ being the indices of agent $i$'s active bundles, we have 
\[\sum_{h\in\mathcal B_{i-1}} v_i(P_h\cap R_i) \ge \sum_{h\in\mathcal B_{i-1}}\sum_{e\in P_h\cap R_i} w_e = N_i-D\]
where $D$ is the deficit when agent $i$ arrives. So, in order for the agent to succeed, it suffices to ensure $D\le (1-(2k+1)\tau) N_i$. We can now apply the exponential moment bound along with Markov's inequality at an appropriate value of $\lambda$ to obtain the following bound.


\begin{lemma}[Single-agent failure probability]
\label{lem:submod-single-agent-failure}
Let $\gamma = 1-(2k+1)\tau$.
Suppose $\lambda>0$ satisfies
\begin{equation}
\label{eq:lambda-condition}
    \psi(\lambda)=\frac{e^\lambda-1}{k}
    \le \lambda\gamma.
\end{equation}
Then the failure probability of agent $i$ is at most
\begin{equation}
\label{eq:qi-bound}
    q_i:=\Pr[v_i(A_i)<\tau]
    \le
    \exp\bigl(-\lambda\gamma(n-i+1)\bigr).
\end{equation}
\end{lemma}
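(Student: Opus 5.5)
The plan is to reduce failure to a large-deficit event and bound that event with the exponential-moment estimate of Lemma~\ref{lem:deficit-mgf} plus Markov's inequality. By the discussion preceding the lemma (Lemma~\ref{lem:submod-greedy-bucketing} together with the supporting-price inequality $\sum_{e\in Q} w_e\le v_i(Q)$), agent $i$ fails only if $D>\gamma N_i$, where $D=D_{i-1}$ is the deficit at her arrival and $N_i=n-t_i$. So it suffices to show $\Pr[D_{i-1}>\gamma N_i]\le \exp(-\lambda\gamma(n-i+1))$. The difficulty is that the threshold $\gamma N_i$ is random, since $t_i$ depends on the history. We handle this by folding the randomness into the exponent together with the $\psi(\lambda) r_{i-1}$ term.

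On the event $D_{i-1}>\gamma N_i$ we have
\[
\lambda D_{i-1}-\psi(\lambda) r_{i-1} > \lambda\gamma (n-t_i) - \psi(\lambda)(i-1-t_i).
\]
Condition~\eqref{eq:lambda-condition} gives $\psi(\lambda)\le\lambda\gamma$, and $i-1-t_i\ge 0$. Hence $\psi(\lambda)(i-1-t_i)\le \lambda\gamma(i-1-t_i)$, and the right-hand side is at least
\[
\lambda\gamma(n-t_i) - \lambda\gamma(i-1-t_i) = \lambda\gamma(n-i+1),
\]
which is deterministic: the dependence on $t_i$ cancels. Thus the failure event is contained in
\[
\Bigl\{\exp\bigl(\lambda D_{i-1}-\psi(\lambda)r_{i-1}\bigr) > e^{\lambda\gamma(n-i+1)}\Bigr\}.
\]

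Applying Markov's inequality to this nonnegative variable and using Lemma~\ref{lem:deficit-mgf} with $j=i-1$ (expectation at most $1$) yields $q_i\le \exp(-\lambda\gamma(n-i+1))$. Lemma~\ref{lem:deficit-mgf} holds for arbitrary adaptive behavior of the preceding agents, so the bound holds regardless of their actions.

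The one point needing care is the cancellation step. It requires $\psi(\lambda)\le\lambda\gamma$, so that bounding the nonspecial-round count $r_{i-1}$ by its worst case still leaves a deterministic exponent. It also implicitly assumes $\gamma>0$, i.e.\ $(2k+1)\tau<1$, which the choice $\tau=1/ck$ with suitable $c$ ensures.
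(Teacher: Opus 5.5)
Your proposal is correct and is essentially the paper's proof. You reduce failure to $D>\gamma N_i$, use $\psi(\lambda)\le\lambda\gamma$ together with the count of preceding non-special rounds to replace the random threshold by the deterministic $\lambda\gamma(n-i+1)$, and finish with Markov's inequality and Lemma~\ref{lem:deficit-mgf} at $j=i-1$; your side remark about $\gamma>0$ is not an extra assumption, since $\psi(\lambda)>0$ for $\lambda>0$ means the hypothesis already forces it.
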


\begin{proof}
If agent $i$ fails, then we must have $D>(1-(2k+1)\tau)N_i=\gamma N_i$. Therefore, using $N_i=n-i+1+r_i$, we get
\begin{align*}
    \lambda D-\psi(\lambda)r_i
    &>\lambda\gamma N_i-\psi(\lambda)r_i\\
    &=\lambda\gamma(n-i+1)
      +\bigl(\lambda\gamma-\psi(\lambda)\bigr)r_i\\
    &\ge\lambda\gamma(n-i+1),
\end{align*}
where the final inequality uses~\eqref{eq:lambda-condition}.  

Applying Markov's inequality and Lemma~\ref{lem:deficit-mgf} with
$j=i-1$ yields
\[
    q_i \le \Pr\Big[\lambda D-\psi(\lambda)r_i \ge \lambda\gamma(n-i+1)\Big]
    \le
    e^{-\lambda\gamma(n-i+1)}
    \mathbb E\left[e^{\lambda D-\psi(\lambda)r_i}\right]
    \le
    e^{-\lambda\gamma(n-i+1)}.
\]
\end{proof}

\subsection{Bounding the total failure probability (Proof of Theorem~\ref{thm:submod-rr-main})}
\label{sec:submod-anal-2}
We now sum the single-agent bounds.  For any parameters $c,k,\lambda$
satisfying $\gamma=1-(2k+1)\tau>0$ and~\eqref{eq:lambda-condition}, a union bound and
Lemma~\ref{lem:submod-single-agent-failure} give
\begin{align*}
\Pr[\text{some agent fails}]
&\le \sum_{i=1}^n q_i 
\le \sum_{i=1}^n e^{-\lambda\gamma (n-i+1)} 
< \frac{1}{e^{\lambda\gamma}-1}.
\label{eq:total-failure-general}
\end{align*}
It is easy to verify that all constraints hold for the choices
\[
    k=4,
    \qquad
    c=6,
    \qquad
    \tau=\frac1{24},
    \qquad
    \gamma=1-9\tau=\frac58,
    \qquad
    \lambda=\ln 5.
\]
With this choice of parameters, the agent's \MMS\ approximation guarantee is $\tau=1/24$, and the total failure probability can be bounded as,
\[
    \sum_{i=1}^n q_i
    <
    \frac{1}{5^{5/8}-1}
    <0.577.
\]
This proves Theorem~\ref{thm:submod-rr-main}.
The constants are not optimized.

\section{Additional Properties of the Single-or-Sample Algorithm}\label{sec:additional}
In this section, we show that small modifications to the Single-or-Sample Algorithm yield additional guarantees. Throughout, we use the same parameter values as in \Cref{sec:submod}. The key observation underlying these extensions is that skipping a small number of agents substantially improves the probability of obtaining a good allocation for all remaining agents.


\remove{
\shuchi{Can we make this work without reproving Lemma 5.5? The idea is that we can renumber the ``unskipped'' agents as $1, \cdots, n-16\log n$. So then the sum of failure probabilities goes up to index $n-16\log n$, and sums up to $1/(e^{16\lambda\gamma\log n}-1)$, which is $1/\poly(n)$.} \Pooja{Not sure if we can skip the proposition formally since where the set of agents to skip lie can change the random choices of the algorithm in the subsequent rounds and we might have similar problems as we had for high valued case. Writing it out of abundance of caution.} \shuchi{But this wouldn't be an issue if the algorithm fixes the set of agents to skip before seeing anyone at all. That's all we need for the following results, so why not make this simplification?}
\begin{proposition}\label{prop:skip-failure-prob-single-agent}
    If Algorithm \ref{alg:submod-rr} is modified so that a fixed set $\A_{s}$ is agents is not allocated any goods by the algorithm, the probability of failure of each of the remaining agents is at most $q_i \leq e^{-\lambda \gamma (1+ |A_s|)}$. 
\end{proposition}
\begin{proof}
    We call any $i \in \A_s$ as the skipped agent and other agents as unskipped agents. For an unskipped agent $i$, let $r_i$ be the total number of non-special rounds and $t_i$ be the total number of special rounds that happen before $i$ arrives. If there are any rounds previously that are skipped by the algorithm, we don't count them towards either $r_i$ or $t_i$. Additionally, we also don't change the deficit after any such rounds. Therefore, the exponential bound on deficit from \Cref{lem:deficit-mgf} continues to hold. In the proof of \Cref{lem:submod-single-agent-failure}, we get the bound on deficit to be $D_i > \gamma N_i = \gamma (n-t_i)$. If agent $i$ is the $i_{\ell}^{th}$ unskipped agent, then $r_i + t_i = i_{\ell} - 1$. Therefore, $D_i > \gamma(n + r_i - i_\ell + 1)$. Then, in the proof of \cref{lem:submod-single-agent-failure}, we only need to replace $i$ with $i_\ell$ in the final failure probability. Therefore, we get $q_i \leq e^{-\lambda \gamma (n-i_\ell + 1)}$. Since we are ignoring $|\A_s|$ agents, $i_{\ell} \leq n - |\A_s|$. Therefore, $q_i \leq e^{-\lambda \gamma (1+ |A_s|)}$ which proves the lemma.
\end{proof} 
}

\begin{lemma}\label{thm:add-fixed-set-to-skip}
Let $\A_{\mathrm{skip}}$ be any fixed set of $\lceil 2\ln n\rceil$ agents. Consider the following modification of the Single-or-Sample Algorithm~\ref{alg:submod-rr}: agents in $\A_{\mathrm{skip}}$ are skipped and receive no items, while all remaining agents participate in the algorithm as usual. Then, with probability at least $1-\frac{1}{n}$, every agent $i\notin \A_{\mathrm{skip}}$ receives value at least $\frac{1}{24}\MMS_i$.
\end{lemma}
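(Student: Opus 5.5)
The plan is to rerun the single-agent analysis of \Cref{sec:submod-anal} with skipped rounds treated as no-ops, and then sum failure probabilities only over unskipped agents. Since $\A_{\mathrm{skip}}$ is fixed before any randomness is drawn, the unskipped agents can be renumbered in arrival order as $1,\dots,n'$, where $n':=n-s$ and $s:=\lceil 2\ln n\rceil$. A skipped round removes no items, so $R$, the active collection $\mathcal B_j$, and the deficit $D_j$ are all unchanged across it. The nominal count $N_j=n-t_j$ is still computed with the original $n$, and the algorithm's bucket size $k(n-t)$ is unchanged.

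First, fix an unskipped agent $i$ who is the $i_\ell$-th unskipped agent. Lemma~\ref{lem:deficit-mgf} carries over verbatim if $r_j$ counts only non-special, unskipped rounds. A skipped round leaves $D$ and $r$ unchanged, so the supermartingale inequality holds trivially in that step. Special and non-special rounds are handled exactly as before, using Lemma~\ref{lem:weighted-mgf} with $p=1/(kN)$ and $W_j\le N_j$. Lemma~\ref{lem:submod-greedy-bucketing} needs no change, because it concerns only agent $i$'s own round. The only edit is in the proof of Lemma~\ref{lem:submod-single-agent-failure}. There we now have $r_i+t_i=i_\ell-1$, so
\[
N_i=n-t_i=n-i_\ell+1+r_i .
\]
The same Markov argument at the same $\lambda$ then gives $q_i\le e^{-\lambda\gamma(n-i_\ell+1)}$.

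Second, sum over the unskipped agents. Since $i_\ell$ ranges over $1,\dots,n-s$, the exponent $n-i_\ell+1$ ranges over $s+1,\dots,n$. Hence
\[
\sum_{i\notin \A_{\mathrm{skip}}} q_i\le \sum_{u\ge s+1} e^{-\lambda\gamma u}=\frac{e^{-\lambda\gamma s}}{e^{\lambda\gamma}-1}.
\]
With $\lambda=\ln 5$ and $\gamma=5/8$, we get $\lambda\gamma\approx 1.006>1$. Therefore $e^{-\lambda\gamma s}\le n^{-2\lambda\gamma}\le n^{-2}$, and the prefactor satisfies $1/(e^{\lambda\gamma}-1)<0.577$. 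The total is below $0.577/n^2\le 1/n$, and a union bound gives the claim.

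The main obstacle is justifying that the skipped rounds do not break the stopping-time and adaptivity structure. A skipped round changes no state, and because $\A_{\mathrm{skip}}$ is fixed in advance, skipping is simply a deterministic no-op step in the filtration. The induction of Lemma~\ref{lem:deficit-mgf} therefore goes through unchanged. The one check needed is that using the original $n$, rather than $n-s$, in $N_j$ and in the bucket size is consistent with $W_0=N_0=n$; it is.
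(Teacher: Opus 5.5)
Your proposal is correct and takes essentially the same route as the paper. You renumber the unskipped agents in arrival order and apply Lemma~\ref{lem:submod-single-agent-failure} with the original $n$, which gives $q_i\le e^{-\lambda\gamma(n-i_\ell+1)}\le e^{-\lambda\gamma(1+s)}$, and then take a union bound. The only differences are minor: you explicitly justify treating skipped rounds as no-ops in the induction of Lemma~\ref{lem:deficit-mgf}, which the paper leaves implicit in the word ``renumber''; and you sum the geometric tail to get $0.577/n^2$, whereas the paper bounds each $q_i$ by $1/n^2$ and multiplies by $n$.
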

\begin{proof}
Renumber the agents in $[n]\setminus \A_{\mathrm{skip}}$ as $1,\ldots,n-\lceil 2\ln n\rceil$ in order of arrival. By \Cref{lem:submod-single-agent-failure}, agent $i$ receives value at least $\frac{1}{24}\MMS$ except with probability
$$
q_i\le \exp\!\bigl(-\lambda\gamma(n-i+1)\bigr)
\le \exp\!\bigl(-\lambda\gamma(1+|\A_{\mathrm{skip}}|)\bigr)
\le e^{-\frac58\ln 5\,(1+2\ln n)}
\le \frac{1}{n^2},
$$
where we use the parameter values from \Cref{sec:submod}. A union bound over the at most $n$ unskipped agents therefore gives total failure probability at most $1/n$.
\end{proof}
We now derive two consequences. The first improves the guarantee for each individual agent: every agent receives a constant fraction of her \MMS\ with probability $1-o(1)$. In contrast, under the Single-or-Sample Algorithm, the last agent has a constant failure probability. The modification below reduces this to $o(1)$ by slightly increasing the failure probabilities of the other agents, at the cost of a larger probability of the overall failure event, namely that {\em some} agent receives a low-value allocation.

\begin{corollary}
    For submodular valuations, Algorithm \ref{alg:submod-rr} can be modified so that every agent $i \in [n]$ receives a $\frac{1}{24}$-MMS allocation with probability at least $1-o(1)$.
\end{corollary}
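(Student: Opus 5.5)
The modification is to skip a uniformly random set of $\lceil 2\ln n\rceil$ agents. Before any agent arrives, the algorithm samples a set $\A_{\mathrm{skip}}$ of $\lceil 2\ln n\rceil$ agent indices uniformly at random from $[n]$, independently of everything else. Agents in $\A_{\mathrm{skip}}$ receive nothing, and all other agents participate in \Cref{alg:submod-rr} as usual. Because the adversary is oblivious, the instance is fixed before the algorithm's coins are tossed, so conditioning on any realization of $\A_{\mathrm{skip}}$ leaves us exactly in the setting of \Cref{thm:add-fixed-set-to-skip}.

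Fix an agent $i$. I would split its failure probability into two cases.

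\emph{Case 1: $i\in\A_{\mathrm{skip}}$.} Agent $i$ gets nothing. This happens with probability $\lceil 2\ln n\rceil/n = o(1)$.

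\emph{Case 2: $i\notin\A_{\mathrm{skip}}$.} Condition on the realized set $\A_{\mathrm{skip}}=S$. By \Cref{thm:add-fixed-set-to-skip}, and more precisely by the per-agent bound inside its proof, agent $i$ fails with probability at most $q_i\le \exp(-\lambda\gamma(1+|S|))\le 1/n^2$. This uses \Cref{lem:submod-single-agent-failure} after renumbering the unskipped agents. Averaging over $S$ gives a failure probability of at most $1/n^2$ in this case.

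Summing the two cases, agent $i$ fails with probability at most $\lceil 2\ln n\rceil/n + 1/n^2 = o(1)$, uniformly over all $i$. Moreover, all unskipped agents succeed simultaneously with probability at least $1-1/n$. This quantifies the stated trade-off: the global failure event now includes the $2\ln n$ skipped agents.

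The one step needing care is justifying that \Cref{lem:submod-single-agent-failure} still applies once skipped rounds are inserted. I would handle it by treating a skipped round as a no-op. Such a round changes neither $R$, $t$, $r$, nor the deficit, so the induction in \Cref{lem:deficit-mgf} goes through unchanged over the unskipped rounds only. Concretely, an unskipped agent that is the $i_\ell$-th unskipped arrival satisfies $N=n-t$ and $r+t=i_\ell-1$. Hence the exponent becomes $\lambda\gamma(n-i_\ell+1)\ge\lambda\gamma(1+|S|)$. The randomness of $S$ causes no trouble because $S$ is drawn independently and up front, before any valuations interact with it.
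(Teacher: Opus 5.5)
Your proposal is correct and follows the paper's proof: you skip a uniformly random set of $\lceil 2\ln n\rceil$ agents chosen in advance, charge $\lceil 2\ln n\rceil/n$ for agent $i$ being skipped, and invoke \Cref{thm:add-fixed-set-to-skip} (via the renumbered form of \Cref{lem:submod-single-agent-failure}) when $i$ is not skipped. The only differences are cosmetic: you use the per-agent bound $1/n^2$ instead of the paper's simultaneous $1/n$ bound, and you state explicitly that skipped rounds are no-ops for the deficit argument, which the paper leaves implicit in its renumbering step.
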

\begin{proof}
    Choose $\A_{\mathrm{skip}}$ uniformly at random among all sets of $\lceil 2\ln n\rceil$ agents, and skip these agents while running Algorithm~\ref{alg:submod-rr} unchanged on the rest. By \Cref{thm:add-fixed-set-to-skip}, every unskipped agent receives at least $\frac{1}{24}$-\MMS\ with probability at least $1-\frac1n$. Thus, for any agent, the failure probability is at most
$$
\Pr[i\in \A_{\mathrm{skip}}]+\frac1n
\le \frac{\lceil 2\ln n\rceil}{n}+\frac1n
=O\!\left(\frac{\log n}{n}\right).
$$
\end{proof}
The second consequence is that if we relax the notion of \MMS\ to $1$-out-of-$N$-\MMS\ instead of $1$-out-of-$n$-\MMS\ for $N$ slightly larger than $n$, then the probability of the overall failure event, namely that some agent receives a low-value allocation, can be made polynomially small.

\begin{corollary}
    Let $N$ satisfy $N\le n+\lceil 2\ln N\rceil$. For submodular valuations, Algorithm \ref{alg:submod-rr} can be modified so that all agents receive a $\frac{1}{24}$-$1$-out-of-$N$-\MMS\ simultaneously with probability $1-O\left(\frac{1}{n}\right)$. 
\end{corollary}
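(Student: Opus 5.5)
The plan is to reduce to \Cref{thm:add-fixed-set-to-skip} by padding the instance with $N-n$ dummy agents. Let $s:=N-n\le\lceil 2\ln N\rceil$. Given the online instance with $n$ real agents, imagine an instance with $N$ agents in which the $n$ real agents arrive first (in their given order) and are followed by $s$ fictitious agents. The modified algorithm runs \Cref{alg:submod-rr} with the number of agents set to $N$: bucket sizes are $k(N-t)$ and the safe-strategy threshold is $\tau=1/24$ relative to each agent's $1$-out-of-$N$ \MMS\ value. The fictitious agents never arrive, so effectively they are skipped. Since $s\le\lceil 2\ln N\rceil$, we may also pad the skipped set with additional fixed real positions if needed to reach exactly $\lceil 2\ln N\rceil$; more simply, we reprove the bound directly, as described next.

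The key observation is that the entire analysis in \Cref{sec:submod-anal} only uses that each agent $i$ has an \MMS-defining partition into $N$ bundles each of value at least her (normalized) share. Normalize so that agent $i$'s $1$-out-of-$N$ \MMS\ equals $1$, and take the $N$-bundle partition $(P_1,\dots,P_N)$. Then \Cref{lem:submod-greedy-bucketing}, the deficit definition, and \Cref{lem:deficit-mgf} go through verbatim with $n$ replaced by $N$: we start with $N$ active bundles, $N_j=N-t_j$, and the sampling probability in each non-special round is $1/k(N-t)$. Hence \Cref{lem:submod-single-agent-failure} gives, for the $i$-th real agent ($i\le n$),
\[
q_i\le \exp\bigl(-\lambda\gamma(N-i+1)\bigr)\le \exp\bigl(-\lambda\gamma(N-n+1)\bigr).
\]

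For the parameters $k=4$, $\lambda=\ln 5$, $\gamma=5/8$, we have $\lambda\gamma>1$. Summing the failure probabilities over real agents gives
\[
\sum_{i=1}^n q_i\le \sum_{u\ge N-n+1} e^{-\lambda\gamma u}=O\bigl(e^{-\lambda\gamma(N-n)}\bigr).
\]
If $N-n$ is about $2\ln N$, this is $O(N^{-2\lambda\gamma})=O(1/n)$, matching the calculation in \Cref{thm:add-fixed-set-to-skip}.

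The main obstacle is the quantifier in the hypothesis. As written, $N\le n+\lceil 2\ln N\rceil$ allows $N=n$, where the bound is only constant. I would therefore read the intended statement as $N=n+\lceil 2\ln N\rceil$, i.e.\ $N$ is at least $n+2\ln N$. The monotonicity $q_i\le e^{-\lambda\gamma(N-n+1)}$ is what carries the result, so a gap of order $\ln N$ between $N$ and $n$ is essential. I would state explicitly that the modification chooses $N$ with $N-n\ge 2\ln N$ (possible with $N=n+O(\log n)$), and then run the above argument. A secondary point to check is that $1$-out-of-$N$ \MMS\ normalization does not break \Cref{lem:submod-greedy-bucketing}. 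That lemma allows arbitrary disjoint sets $Q_h$ and the $N$ active bundles, so it does not.
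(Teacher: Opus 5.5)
Your proposal is correct and is essentially the paper's argument: the paper adds $N-n$ zero-valuation dummy agents as the skipped set and invokes \Cref{thm:add-fixed-set-to-skip} with $N$ agents. That lemma rests on exactly the bound $q_i\le\exp\bigl(-\lambda\gamma(N-i+1)\bigr)$ that you re-derive from \Cref{lem:submod-single-agent-failure} with the dummies placed last.

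Your reading of the hypothesis is also right. The lemma needs a skipped set of size $\lceil 2\ln N\rceil$, so one needs $N-n=\lceil 2\ln N\rceil$ (or at least $2\ln N$), not merely $\le$. For $N=n$ the claim would contradict \Cref{thm:lb_c}. Your aside about padding the skipped set with real agents should be dropped, since skipped real agents receive nothing.
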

\begin{proof}
    Consider adding $N-n$ ``dummy'' agents with $0$ valuations to
    Algorithm \ref{alg:submod-rr}, and let $\A_{\mathrm{skip}}$ denote the set of dummy agents. Then \Cref{thm:add-fixed-set-to-skip} implies that all of the real agents are simultaneously satisfied with probability at least $1-1/N$.
\end{proof}
\section{MMS Guarantee for XOS Valuations}
\label{sec:xos-lower-bound}

Our positive results give a constant MMS approximation with constant probability
for submodular valuations. In this section, we show that this guarantee is impossible
for XOS valuations, even we are aiming for a $\frac{1}{\log\log{n}}$ MMS approximation for \emph{binary-XOS} valuations: maxima of additive
clauses whose item values are all zero or one.
If we limit ourself to aim for a $O(\frac{1}{\log{n}})$-MMS guarantee, we then show a matching upper and lower bound where the probability of success is $1-\Theta(\frac{1}{\log{n}})$. 

\subsection{Ruling out $1/\log\log n$-\MMS}
\label{sec:xos-hard-distribution}

We first present our lower bound.

\begin{theorem}
\label{thm:xos-lower-bound}
No randomized online allocation
algorithm can guarantee an $\Omega(\frac{1}{\log\log{n}})$-MMS allocation with probability $\Theta(1)$ 
against an oblivious adversary, even for binary-XOS valuations.
\end{theorem}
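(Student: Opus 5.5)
We prove the theorem through Yao's principle. We construct a distribution over binary-XOS instances that does not depend on the algorithm. We then show that every deterministic online algorithm fails to give some agent $\alpha$-MMS, with $\alpha = C/\log\log n$ for a suitable large constant $C$, with probability $1-o(1)$. Since a randomized algorithm facing an oblivious adversary is a mixture of deterministic algorithms, some fixed instance in the support defeats it with probability $1-o(1)$. That rules out success with constant probability. The lower bound of \Cref{thm:lb_c} only loses a constant probability per ``level''. The plan is to nest such constructions over $L=\Theta(\log\log n)$ levels, so that each level forces the algorithm to guess with constant error and the errors compound.

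\textbf{Nested instance.} Build a random chain of item sets $[m]=S_0\supset S_1\supset\cdots\supset S_L$, where each $S_{\ell+1}$ is a uniformly random subset of $S_\ell$ of relative size $1/b$. Agents arrive in $L$ phases. The agents of phase $\ell$ have the binary-XOS valuation $v(X)=\max_{T\in\mathcal F_\ell}|X\cap T|$, where the clauses $\mathcal F_\ell$ are (say) all subsets of $S_\ell$ of a prescribed size $s_\ell$. Their MMS then equals $s_\ell$, provided $S_\ell$ can be split into $n$ parts each containing a clause, i.e. $|S_\ell|\ge n s_\ell$.

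The XOS structure is what lets a phase-$\ell$ agent's value saturate. Her bundle is worth $\min(|A\cap S_\ell|, s_\ell)$, so she is indifferent among which items of $S_\ell$ she receives. The algorithm, however, must give each such agent $\alpha s_\ell$ items of $S_\ell$, and it cannot distinguish $S_{\ell+1}$ from the rest of $S_\ell$. The phase sizes $n_\ell$ (number of agents) are tuned so that phase $\ell$ consumes a constant fraction of $S_\ell$ in total, while the later phases (with $\sum_{\ell'>\ell} n_{\ell'}\le n$ agents) still need enough of $S_{\ell+1}$. Later phases have fewer agents but higher scarcity, so the sequence is roughly $b$ doubly-exponential in $\ell$: choosing $b_\ell=2^{2^\ell}$ gives $L=\log\log n$ levels before the sizes exhaust $n$.

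\textbf{Per-level loss.} Condition on the history through phase $\ell$, which reveals $S_\ell$ but not $S_{\ell+1}$. Any deterministic allocation of phase-$\ell$ agents removes items of $S_\ell$ that are symmetric with respect to the hidden $S_{\ell+1}$. By a hypergeometric/concentration argument (using \Cref{thm:conc} style bounds, or direct computation), an expected fraction of $S_{\ell+1}$ proportional to the consumed fraction of $S_\ell$ is destroyed. This fraction is at least $\alpha$ times a constant.

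The resulting potential argument is as follows. Track $\Phi_\ell = |S_\ell\cap R|/(\text{demand of remaining agents on } S_\ell)$. Each level multiplies the slack by $(1-\Omega(\alpha\cdot\text{const}))$, or else costs an additive $\Omega(\alpha)$ in the normalized resource. After $L$ levels, the total removed exceeds $1$ when $\alpha L\gg 1$, i.e. $\alpha\gg 1/\log\log n$. At that point the last phase's agents cannot all get $\alpha$-MMS, except with probability $o(1)$ coming from deviations in the concentration.

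\textbf{Main obstacle.} The hard step is the parameter bookkeeping. We must choose $n_\ell$, $s_\ell$, and $b_\ell$ so that three things hold simultaneously: MMS of every agent is exactly as intended (the chain is divisible into $n$ clause-containing parts); each phase must genuinely consume a constant fraction of its level (so the algorithm cannot simply hoard items for later phases); and the losses add up across levels rather than being absorbable. I would first try to get additivity via a charging argument in which a normalized ``budget'' of $1$ is shared among levels and each level irreversibly burns $\Theta(\alpha)$ of it in expectation. That yields failure once $\alpha\log\log n$ exceeds a constant, and concentration across the $\log\log n$ levels upgrades expectation to probability $1-o(1)$.
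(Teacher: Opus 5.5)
There is a genuine gap, and it lies in the construction itself, not in the parameter bookkeeping. With $\mathcal F_\ell$ equal to all $s_\ell$-subsets of $S_\ell$, your valuation is $v(X)=\min(|X\cap S_\ell|,s_\ell)$. This is a matroid rank function and hence submodular. By Theorem~\ref{thm:submod-rr-main}, the Single-or-Sample algorithm (which can run each agent's safe strategy on her behalf) gives every agent $\frac1{24}$-MMS simultaneously with probability at least $0.423$ on \emph{every} such instance. So for $\alpha=C/\log\log n<\frac1{24}$, no distribution over your instances can push every deterministic algorithm's success probability to $o(1)$.

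You can also see directly where your per-level step breaks. Since $|S_\ell|\ge ns_\ell$, each phase-$\ell$ agent needs only about $\alpha s_\ell\le\alpha|S_\ell|/n$ items, so phase $\ell$ consumes at most roughly an $\alpha n_\ell/n$ fraction of $S_\ell$. Forcing a constant fraction would need $n_\ell=\Omega(n/\alpha)>n$ agents. Because all items of $S_\ell$ are interchangeable for these agents, an algorithm can spread each phase's allocation uniformly over what remains of $S_\ell$. It then removes from the hidden $S_{\ell+1}$, and from every deeper set, only about the same fraction. The innermost set loses about $\sum_\ell\alpha n_\ell/n\le\alpha$ of its items in total, so nothing accumulates toward $1$. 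Saturating, symmetric clauses are exactly what let the algorithm dilute its damage.

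The paper avoids this with clauses that force concentration, which is the genuinely non-submodular feature. Each round-$r$ pool is randomly partitioned into $n$ disjoint bins, and its agents have $v(S)=\max_j|S\cap B_{P,j}|$, so spreading earns nothing. A successful group of $s_r$ agents must therefore remove an $\alpha$ fraction from each of at least $\alpha s_r$ distinct bins (Lemma~\ref{lem:xos-pool-forcing}). The adversary does not hide a single next-level set. Instead it hides a uniformly random set $Q_r$ of $q_{r+1}=Kq_r^2$ of the $nq_r$ current bins, each of which becomes a next-round pool. The quadratic growth ensures that some damaged bin is retained, except with probability at most $\exp(-\alpha K/(2(d+1)))$ (Lemma~\ref{lem:xos-continuation}).

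The analysis then follows this single witness pool. Its fresh random repartition spreads the inherited deletions nearly evenly over its new bins (Lemma~\ref{lem:xos-balance}), so the deleted fraction grows by at least $\alpha-\varepsilon$ per level. After $d=\lceil 1/\alpha\rceil$ levels, every bin of the witness pool is nearly empty (Lemma~\ref{lem:xos-accumulation}). The per-level loss is $\alpha$ rather than $\alpha n_\ell/n$ precisely because one tracks one heavily damaged bin instead of an average over $S_\ell$. The doubly exponential growth you anticipated belongs to the number of pools, $q_r=K^{2^r-1}$, not to the nesting ratio $b_\ell$. The requirement $q_d\le n/(d+1)$, so that every pool still receives an agent, is what caps $d$, and hence $1/\alpha$, at about $\log\log n$.
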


By Yao's minimax principle~\cite{yao1977probabilistic}, it suffices
to construct one input distribution under which every deterministic online
algorithm succeeds with probability $o(1)$, with a bound independent of the
algorithm. 
Before describing our general construction, we begin with a simple construction that rules out $2/3$-\MMS\ over binary XOS valuations.

\subsubsection{Warm-up: a two-round construction.}
 There are $n$ agents and $m$ goods.
Assume for simplicity that all quantities below are integral. All random
choices defining the instance are made in advance, independently of
the algorithm, so the adversary is oblivious. The agents arrive in two rounds of $n/2$ agents each.
\begin{itemize}[leftmargin=*]
    \item \textbf{First round.}
    Partition the $m$ goods uniformly at random into $n$ equal-sized
    bins $B_1,\ldots,B_n$. All first-round agents have valuation
    \(
        v(S)=\max_{j\in[n]} |S\cap B_j|
    \)
    and hence \MMS\ value $m/n$.

    \item \textbf{Second round.}
    Fix $k\le n/2$, to be chosen below. Select a uniformly random
    $k$-subset $Q=\{i_1,\ldots,i_k\}\subseteq[n]$; the corresponding
    bins form the \emph{pool}. Divide the second-round agents into
    $k$ groups of $n/(2k)$ agents. Independently partition each
    selected bin $B_{i_r}$ uniformly at random into $n$ equal-sized
    parts $B_{i_r}^1,\ldots,B_{i_r}^n$. Every agent in group $r$
    has valuation
    \(
        v_r(S)=\max_{j\in[n]} |S\cap B_{i_r}^j|
    \)
    and hence \MMS\ value $m/n^2$.
\end{itemize}
The pool and its internal partitions remain hidden until the second round.

If every first-round agent receives a $2/3$-\MMS\ allocation, each
must receive at least $2m/(3n)$ goods from a single bin. These bins must be distinct, since no bin can support two such
allocations. Thus at least $n/2$ bins each lose at least $2m/(3n)$
goods. Call these bins \emph{damaged}, and let $D\subseteq[n]$
index them. 

\paragraph{A damaged bin in the pool forces failure.}
Suppose the algorithm succeeds in the first round. Each damaged bin contains at most $m/(3n)$ unallocated goods. The internal partition of the $m/n$ goods is determined by the adversary independently of which $m/3n$ goods the algorithm leaves untouched, for an appropriate choice of $m$ and $n$, we can show that with probability $1-o(1)$, every part of every damaged bin in the pool contains fewer than $2m/(3n^2)$ unallocated goods. Consequently, no
second-round agent associated with such a bin can receive a
$2/3$-MMS\ allocation.

\paragraph{The pool contains a damaged bin with high probability.}
If every first-round agent succeeds, at least \(n/2\) bins are damaged. The set \(D\) is determined by the first-round partition and the algorithm’s first-round allocations. Since the pool and its internal partitions remain hidden during the first round, \(Q\) remains a uniformly random \(k\)-subset independent of \(D\). Therefore, the probability that all first-round agents succeed but the pool contains no damaged bin is at most \(2^{-k}\). Taking \(k=\Theta(\log n)\), and combining this with the concentration bound, gives an overall success probability of \(o(1)\).

\subsubsection{The complete construction}
We now extend these ideas to a multi-round allocation process to rule out an $\alpha$ approximation to \MMS\ for $\alpha=\Omega(1/\log\log n)$, as $n$ tends to infinity.
 See
\Cref{fig:pool-lottery-instance} for an illustration of the construction.

\paragraph{Binary-XOS agents and good allocations.} Each agent's valuation in our construction is defined by $n$ pairwise disjoint, equal-sized sets of items, which we call \emph{bins}: $B_1,\ldots,B_n$. Her value for a set $S$ is $v(S)=\max_{j\in[n]} |S\cap B_j|$. Her MMS value is therefore the size of a bin, and she receives an $\alpha$-MMS allocation precisely when she receives at least an $\alpha$ fraction of the items in some bin. 

\begin{figure}[ht]
\centering
\begin{tikzpicture}[
  x=1cm, y=1cm, >=Latex,
  pool/.style={draw, thick},
  witness/.style={->, very thick, red!70!black},
  otherpath/.style={->, thick},
  roundlabel/.style={font=\large, anchor=east},
  smalllabel/.style={font=\small, align=center},
  process/.style={smalllabel, text width=3.2cm},
  pooltitle/.style={smalllabel, anchor=south west, inner sep=0pt},
  dots/.style={inner sep=0pt},
]

\node[roundlabel] at (1.65,10.60) {Round $0$};
\node[smalllabel] at (6.20,11.60) {one pool $P_0=G$};
\begin{scope}[shift={(2.2,10)}]
  \fill[pattern=north east lines, pattern color=red!75]
    (1,0) rectangle (2,0.55);
  \fill[pattern=north east lines, pattern color=red!50]
    (6,0) rectangle (7,0.32);
  \draw[pool] (0,0) rectangle (8,1.2);
  \foreach \x in {1,...,7}
    \draw (\x,0) -- (\x,1.2);
  \node[smalllabel] at (0.5,0.92) {$B_1$};
  \node[smalllabel] at (1.5,0.92) {$B_2$};
  \node[dots] at (4.5,0.65) {$\cdots$};
  \node[smalllabel] at (7.5,0.92) {$B_n$};
\end{scope}

\node[smalllabel,anchor=east,
      text width=5cm] at (9.2,9.15)
  {Global lottery:\\uniformly choose $q_1$ bins};
\node[process] at (6.65,7.80)
  {Freshly partition each retained bin\\into $n$ bins};

\draw[witness]
  (3.70,10.00) .. controls (3.70,8.95) and (4.90,8.05) .. (4.90,6.99);
\node[smalllabel, red!70!black, anchor=east,
      text width=2.5cm] at (3.10,8.60)
  {a damaged bin\\is retained};

\draw[otherpath]
  (8.70,10.00) .. controls (8.70,8.95) and (10.15,8.05) .. (10.15,6.99);
\node[smalllabel, anchor=west, text width=2.3cm] at (10.35,8.60)
  {another retained\\bin};

\node[roundlabel] at (1.65,6.48) {Round $1$};
\node[smalllabel, anchor=east] at (1.65,6.03) {$q_1$ pools};
\node[pooltitle] at (2.20,7.20) {witness pool $P_1$};
\node[pooltitle] at (8.00,7.20) {another pool};

\begin{scope}[shift={(2.2,5.9)}]
  \fill[pattern=north east lines, pattern color=red!35]
    (0,0) rectangle (3.6,0.28);
  \fill[pattern=north east lines, pattern color=red!75]
    (1.8,0.28) rectangle (2.4,0.77);
  \draw[pool, red!70!black] (0,0) rectangle (3.6,1.05);
  \foreach \x in {0.6,1.2,1.8,2.4,3.0}
    \draw (\x,0) -- (\x,1.05);
  \node[dots] at (0.9,0.65) {$\cdots$};
\end{scope}

\begin{scope}[shift={(8.0,5.9)}]
  \fill[pattern=north east lines, pattern color=red!30]
    (0,0) rectangle (2.6,0.18);
  \draw[pool] (0,0) rectangle (2.6,1.05);
  \foreach \x in {0.65,1.3,1.95}
    \draw (\x,0) -- (\x,1.05);
  \node[dots] at (0.975,0.60) {$\cdots$};
\end{scope}

\draw[witness] (4.30,5.90) -- (4.30,4.50);
\node[red!70!black] at (4.30,4.08) {$\vdots$};
\draw[witness]
  (4.30,3.65) .. controls (4.30,3.15) and (5.20,3.15) .. (5.20,2.34);

\node[smalllabel,anchor=east,
      text width=5cm] at (9.4,4.10)
  {Repeat the lottery\\and fresh random partition};

\draw[otherpath] (9.30,5.90) -- (9.30,4.50);
\node at (9.30,4.08) {$\vdots$};
\draw[otherpath]
  (9.30,3.65) .. controls (9.30,3.15) and (10.15,3.15) .. (10.15,2.34);

\node[roundlabel] at (1.65,1.83) {Round $d$};
\node[smalllabel, anchor=east] at (1.65,1.38) {$q_d$ pools};
\node[pooltitle] at (2.20,2.55) {witness pool $P_d$};
\node[pooltitle] at (8.00,2.55) {another pool};

\begin{scope}[shift={(2.2,1.25)}]
  \fill[pattern=north east lines, pattern color=red!70]
    (0,0) rectangle (3.6,0.78);
  \draw[pool, red!70!black] (0,0) rectangle (3.6,1.05);
  \foreach \x in {0.6,1.2,1.8,2.4,3.0}
    \draw (\x,0) -- (\x,1.05);
  \node[dots] at (0.9,0.92) {$\cdots$};
\end{scope}

\begin{scope}[shift={(8.0,1.25)}]
  \draw[pool] (0,0) rectangle (2.6,1.05);
  \foreach \x in {0.65,1.3,1.95}
    \draw (\x,0) -- (\x,1.05);
  \node[dots] at (0.975,0.60) {$\cdots$};
\end{scope}

\node[smalllabel, text width=6cm] at (4.00,0.5)
  {Every bin of the witness pool\\has $<\alpha n^2$ goods left.};

\draw[pattern=north east lines, pattern color=red!70]
  (2.20,-0.40) rectangle (2.60,-0.10);
\node[smalllabel, anchor=west] at (2.80,-0.25)
  {already allocated goods};
\draw[witness] (8.00,-0.25) -- (8.75,-0.25);
\node[smalllabel, anchor=west] at (8.95,-0.25)
  {witness sequence};

\end{tikzpicture}
\caption{
An illustration of the random-pool lower-bound instance.
In each round, every pool is partitioned into $n$ bins.
A global random selection process retains some bins, and each retained bin becomes
a pool in the next round and is freshly repartitioned.
Red hatching denotes goods already allocated.
The red arrows follow the single witness sequence used in the analysis:
successful service creates new damage, while each fresh random partition
spreads the accumulated damage among the next level's bins.
By the final round, every bin of the witness pool has fewer than
$\alpha n^2$ available items.
}
\label{fig:pool-lottery-instance}
\end{figure}

\paragraph{Rounds and pool counts.}
Let $d=\lceil1/\alpha\rceil$. There are $d+1$ rounds, indexed by
$r=0,\ldots,d$. Let $n$ be a multiple of $d+1$ and denote $T=\frac{n}{d+1}$. Round $r$ has $q_r$ pools where
\begin{equation*}
\label{eq:xos-pool-counts}
    q_0=1,
    \qquad q_{r+1}=Kq_r^2 \; \text{ for } \; 0\le r<d, \quad \text{and} \quad K=T^{1/(2^d-1)}.
\end{equation*}
Thus $q_r=K^{2^r-1}$ and $q_r\le q_d=T$. We take $n$ sufficiently large so that, with $\alpha=\Omega(1/\log\log n)$, we have $K\ge 2\alpha^2 \log{n}$.

Each round has $T$ agents. The $T$ round-$r$ agents are divided evenly among the $q_r$ pools. Let $s_{r} := T/q_r=K^{2^d-2^r}$ denote
the number of agents assigned to a single pool $P$ in this round. 


\paragraph{Items, partitions, and valuations.} There are $n^{d+3}$ items. Each pool $P$ in a round $r$ is associated with a set of $nM_r$ items where $M_r=n^{d-r+2}$ for $0\le r\le d$. In particular, there is a single pool in round $0$ containing all $G=[nM_0]$ items. 

Each round-$r$ pool $P$ is partitioned uniformly at random into $n$ labeled bins
$B_{P,1},\ldots,B_{P,n}$ of size $M_r$ each. All $s_{r}$ agents assigned to this pool have valuation $v_P(S)=\max_{j\in[n]}|S\cap B_{P,j}|$. 

\paragraph{The random instance.}
We now describe how the adversary randomizes the instance. For each $r<d$, the adversary considers all of the $nq_r$ round-$r$ bins corresponding to the round-$r$ pools and chooses a uniformly random subset $Q_r$ of $q_{r+1}$ bins. These bins become
the round-$(r+1)$ pools and are randomly subdivided into $n$ bins each for round $r+1$. Observe that the size of the round-$(r+1)$ bins, $M_{r+1}$, is accordingly a $1/n$ fraction of the size $M_r$ of the round-$r$ bins.


\paragraph{Arrival order and information.}
Agents arrive in increasing round order, with consecutive pool groups in the
support order specified above. All partitions and pools are sampled
beforehand, defining an input distribution $\mathcal D_{n,\alpha}$ independent
of the algorithm. The algorithm knows $n$ and $G$, but learns valuations (and, accordingly, the selection and partition of items into pools and bins) only upon the arrival of the corresponding agents. 

\subsubsection{Analysis of deterministic algorithms}
\label{sec:xos-analysis}

We will now outline the proof of Theorem~\ref{thm:xos-lower-bound} through a sequence of lemmas. Full proofs of the lemmas can be found in Appendix~\ref{app:xos-details}.

We first observe that in any instance generated as above, it is possible to guarantee each agent their \MMS\ value. An offline allocator can in every round allocate items to agents in that round from bins that are not chosen as pools for the following round. As there are $n$ bins in every pool in round $r$, $T/q_r\le \alpha n$ agents in the pool, and at most $q_{r+1}\ll n$ bins move to the next round, it is always possible to satisfy every agents through such an allocation. The online algorithm, however, must allocate before learning which bins are retained for the following rounds.


\begin{lemma}
\label{lem:xos-instances}
For every instance in the support of $\mathcal D_{n,\alpha}$, there exists an allocation $A_1, \cdots, A_n$ that guarantees to each agent $i$ a value of $v_i(A_i)\ge\MMS_i$.
\end{lemma}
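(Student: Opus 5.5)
Each agent's valuation is $v_P(S)=\max_j|S\cap B_{P,j}|$, so its \MMS\ value equals the bin size $M_r$: for any partition into $n$ bundles some bundle meets every bin in at most $M_r$ items, and the bins themselves form a partition with every bundle worth $M_r$. It therefore suffices to give each round-$r$ agent assigned to pool $P$ an entire bin $B_{P,j}$ of $P$, with all assigned bins pairwise disjoint across all agents.

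We construct the allocation round by round. In round $r<d$, the $q_{r+1}$ bins in $Q_r$ become round-$(r+1)$ pools, and all later items lie inside these retained bins. Call a round-$r$ bin \emph{free} if it is not in $Q_r$; in round $d$ every bin is free. We give each of the $s_r$ agents of pool $P$ a distinct free bin of $P$. Disjointness is then immediate. Two bins of the same pool are disjoint. Bins of different round-$r$ pools are disjoint, since distinct pools are distinct (disjoint) bins of the previous round. A free round-$r$ bin is disjoint from everything used in later rounds, because later items lie in retained bins, which are disjoint from free ones. Finally, a round-$r'$ bin with $r'>r$ lies inside some retained round-$r$ bin, hence outside every free round-$r$ bin.

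The only thing to verify is counting: pool $P$ needs at least $s_r$ free bins, i.e.\ $n-|Q_r\cap\{\text{bins of }P\}|\ge s_r$. In the worst case all retained bins lie in $P$, so it suffices that $n-q_{r+1}\ge s_r$. We have $s_r\le s_0=T=n/(d+1)$ and $q_{r+1}\le q_d=T$, so
\[
n-q_{r+1}\ge n-T=\tfrac{d}{d+1}n\ge \tfrac{n}{d+1}\ge s_r,
\]
using $d\ge1$. Integrality of $T$, $q_r$ and $s_r$ is assumed as in the construction. The free bins chosen this way give every agent exactly her \MMS\ value.

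This counting step is the only place where the parameters matter, and it goes through with room to spare.
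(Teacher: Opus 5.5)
Your proof is correct and follows the paper's argument essentially verbatim: you give each round-$r$ agent a distinct bin of her pool that is not retained by $Q_r$ (any bin in round $d$), and you check capacity via $s_r+q_{r+1}\le 2T\le n$ using $d\ge 1$, with disjointness coming from later pools being nested inside retained bins. Two cosmetic points: for a round-$r\ge 1$ pool the bins partition only $P$, not all of $G$, so to show $\MMS_i\ge M_r$ you pad the $n$ bins with the goods outside $P$; and if an allocation must be a full partition, the leftover goods can be handed out arbitrarily.
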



Fix a deterministic online algorithm. Call an agent \emph{successful} if she
receives at least $\alpha$ times her MMS, and call a group successful if all
its agents are successful. 
Since an agent's value is the maximum contribution from a single bin, success
requires at least $\alpha M_r$ goods allocated from one bin in the agent's pool. Charge each successful agent at $P$ to the least-indexed bin
from which her allocation contains at least $\alpha M_r$ goods. Let $D_P$
be the set of charged bins; we call these bins {\em damaged}. The next lemma shows that each successful service of a pool $P$ will charge a lot of bins. 

\begin{lemma}
\label{lem:xos-pool-forcing}
If the group at a round-$r$ pool $P$ succeeds, then $|D_P|\ge\alpha\, s_{r}$.
Each bin in $D_P$ loses at least $\alpha M_r$ goods to this group, in addition to any goods allocated before the group arrived.
\end{lemma}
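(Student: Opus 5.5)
The plan is a pigeonhole argument on how many agents can be charged to a single bin. Each successful agent at $P$ is charged to one bin $B_{P,j}$, and her allocation contains at least $\alpha M_r$ goods of that bin. Since allocations to distinct agents are disjoint and each bin has only $M_r$ goods, at most $\lfloor M_r/(\alpha M_r)\rfloor \le 1/\alpha$ agents can be charged to the same bin.

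With this in hand, if the group succeeds, all $s_r$ agents are charged. Hence
\[
  |D_P| \ge \frac{s_r}{1/\alpha} = \alpha\, s_r .
\]

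For the second statement, fix a bin $B\in D_P$. At least one agent of the group was charged to $B$. That agent alone received at least $\alpha M_r$ goods of $B$ during the group's service.

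These goods were available when the group arrived, because allocations are irrevocable and disjoint. So they are disjoint from any goods of $B$ allocated earlier, whether by previous groups, previous rounds, or agents of other pools whose bins overlap $B$. The loss of $\alpha M_r$ therefore adds to any prior loss.

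There is no real obstacle. The only points needing care are:
\begin{itemize}
  \item the counting bound uses disjointness of the agents' bundles, and the charging rule to a single least-indexed bin ensures each agent is counted exactly once;
  \item integrality is handled by the bound $1/\alpha$ rather than $\lfloor 1/\alpha\rfloor$, which is still enough to give $|D_P|\ge \alpha s_r$.
\end{itemize}
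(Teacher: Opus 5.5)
Your proof is correct and follows the paper's argument. Disjointness of the bundles caps the number of agents charged to a single bin at $1/\alpha$, which gives $|D_P|\ge\alpha s_r$, and the charged agent's $\ge\alpha M_r$ goods from each bin were still available when allocated, so they add to any earlier losses.
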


The next lemma shows that at least one of the damaged bins $D_P$ is retained for the next round with high probability.


\begin{lemma}
\label{lem:xos-continuation}
Let $r<d$, and consider any fixed round-$r$ pool $P$. Conditioned on the group at $P$ succeeding, we have
\[
    \Pr[D_P\cap Q_r=\varnothing]
    \le \exp\!\left(-\frac{\alpha K}{2(d+1)}\right).
\]
\end{lemma}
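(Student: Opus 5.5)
Conditioning on success of the group at $P$ fixes nothing about $Q_r$. The plan is to show that $Q_r$ is a uniformly random $q_{r+1}$-subset of the $nq_r$ round-$r$ bins, independent of $D_P$, and then bound a hypergeometric miss probability.

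\textbf{Step 1: independence.} The set $D_P$ is determined by the deterministic algorithm's allocations to the agents at $P$ and before them. These allocations depend only on the valuations revealed so far, namely the partitions of pools up to round $r$ and the selections $Q_0,\dots,Q_{r-1}$. By construction, $Q_r$ is drawn uniformly and independently of all of these, and it is revealed only when round-$(r+1)$ agents arrive. So, conditioned on the full history through the group at $P$, including the event that this group succeeds, $Q_r$ is still uniform among $q_{r+1}$-subsets of the $nq_r$ bins. It therefore suffices to bound $\Pr[D\cap Q_r=\varnothing]$ for a fixed set $D$ of bins of $P$ with $|D|\ge \alpha s_r$, which is guaranteed by Lemma~\ref{lem:xos-pool-forcing}.

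\textbf{Step 2: the hypergeometric bound.} For a uniform $q_{r+1}$-subset of $N:=nq_r$ items missing a fixed set of size $a:=|D|$,
\[
\Pr[D\cap Q_r=\varnothing]=\prod_{\ell=0}^{q_{r+1}-1}\Bigl(1-\frac{a}{N-\ell}\Bigr)\le\Bigl(1-\frac{a}{N}\Bigr)^{q_{r+1}}\le\exp\Bigl(-\frac{a\,q_{r+1}}{nq_r}\Bigr).
\]

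\textbf{Step 3: the arithmetic.} Substitute $a\ge\alpha s_r=\alpha T/q_r$, $q_{r+1}=Kq_r^2$, and $T=n/(d+1)$. The exponent becomes
\[
\frac{\alpha T}{q_r}\cdot\frac{Kq_r^2}{nq_r}=\frac{\alpha TK}{n}=\frac{\alpha K}{d+1}.
\]
This is already at least the claimed $\alpha K/(2(d+1))$; the slack factor $2$ can absorb integrality issues, for instance using $\lceil\alpha s_r\rceil$ or the case where $s_r$ is not integral.

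The main point needing care is Step 1: the conditioning on success must not bias $Q_r$. This holds because the algorithm is deterministic, the input distribution is fixed in advance, and $Q_r$ is sampled independently of everything the algorithm has observed up to that point. I would state this as an explicit conditioning on the sigma-algebra generated by the partitions of rounds $\le r$ and by $Q_0,\dots,Q_{r-1}$.
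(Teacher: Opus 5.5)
Your proposal is correct and follows the same route as the paper. You condition on the history through the group at $P$, so that $D_P$ is fixed while $Q_r$ stays a uniform $q_{r+1}$-subset of the $nq_r$ round-$r$ bins; you bound the miss probability by $\bigl(1-|D_P|/(nq_r)\bigr)^{q_{r+1}}\le\exp\bigl(-|D_P|q_{r+1}/(nq_r)\bigr)$; and you substitute $|D_P|\ge\alpha s_r$ from Lemma~\ref{lem:xos-pool-forcing} together with $q_{r+1}=Kq_r^2$. The paper's factor $2$ comes from the rounding bound $s_r\ge\lfloor T/q_r\rfloor\ge T/(2q_r)$, valid since $q_r\le T$, which is the precise form of the integrality slack you only gesture at.
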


We now consider the situation where a damaged round-$r$ bin becomes a pool in round $r+1$. The next lemma shows that when this pool is further partitioned randomly into $n$ bins, the items that it has lost to previous allocations also spread out near-uniformly across the partition with high probability. As a result, all of the bins in the pool are significantly depleted.


\begin{lemma}
\label{lem:xos-balance}
Let $P$ contain $nM$ goods, let $E\subseteq P$ be a fixed set of previously allocated
goods, and put $\rho=|E|/(nM)$. A uniform partition of $P$ into $n$ bins of
size $M$ satisfies, for every $\varepsilon>0$,
\[
    \Pr\!\left[\text{some bin }B\text{ has }
       |E\cap B|<(\rho-\varepsilon)M\right]
    \le \frac{n}{4\varepsilon^2M}.
\]
\end{lemma}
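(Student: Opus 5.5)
The bound has the shape of a Chebyshev estimate followed by a union bound over the $n$ bins: the factor $n$ comes from the union bound and $1/(\varepsilon^2 M)$ from a per-bin variance bound. I will therefore fix a single bin, compute the mean and variance of $|E\cap B|$ under the uniform partition, apply Chebyshev's inequality, and then take a union bound.

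Under a uniformly random partition of $P$ into $n$ labeled bins of size $M$, each fixed bin $B$ has the distribution of a uniformly random $M$-subset of $P$. Hence $Y:=|E\cap B|$ is hypergeometric: population $nM$, $|E|=\rho nM$ marked goods, and $M$ draws. Its mean is $\E[Y]=\rho M$. Its variance is
\[
\mathrm{Var}(Y)=M\rho(1-\rho)\frac{nM-M}{nM-1}\le M\rho(1-\rho)\le \frac{M}{4}.
\]
This holds because sampling without replacement has variance at most that of the corresponding binomial, and $\rho(1-\rho)\le 1/4$. One can verify this directly by writing $Y=\sum_{e\in E}\mathbbm{1}[e\in B]$. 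Each indicator has probability $1/n$, and distinct pairs have joint probability $\frac{M(M-1)}{nM(nM-1)}\le 1/n^2$, so all pairwise covariances are nonpositive. The variance is then at most $|E|\cdot\frac1n(1-\frac1n)=\rho M(1-\frac1n)$, which is also fine. However, the stated constant $\frac14$ suggests using the $\rho(1-\rho)$ form.

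Chebyshev's inequality now gives
\[
\Pr\bigl[Y<(\rho-\varepsilon)M\bigr]\le \Pr\bigl[|Y-\rho M|>\varepsilon M\bigr]\le \frac{M/4}{\varepsilon^2M^2}=\frac{1}{4\varepsilon^2 M}.
\]
A union bound over the $n$ bins yields $\frac{n}{4\varepsilon^2M}$.

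No step here is a real obstacle. The only point needing care is the variance bound with the constant $\frac14$, which rests on the nonpositive pairwise covariances of the sampling-without-replacement indicators. I will use the exact hypergeometric variance formula to get $\rho(1-\rho)$ rather than the cruder bound $\rho$. The independence of $E$ from the partition, given because $E$ is fixed, is what justifies treating $B$ as a uniform $M$-subset.
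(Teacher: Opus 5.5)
Your proposal is correct and follows the paper's proof: Chebyshev on a single bin using $\mathrm{Var}(|E\cap B|)\le M/4$, then a union bound over the $n$ bins. The paper gets the $M/4$ bound from the $M$ draws (each Bernoulli$(\rho)$ with nonpositive pairwise covariances) instead of quoting the hypergeometric variance formula. Your side bound $\rho M(1-\frac1n)$ alone would not give the constant $\frac14$ when $\rho>\frac14$, so relying on the $\rho(1-\rho)$ form was the right call.
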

Set $\varepsilon=\alpha/(2d)$. Motivated by the above lemma, for a round-$r$ pool $P$, with $E_P$ denoting the subset of $P$'s goods allocated by the end of round $r-1$, we say that the partition of $P$ into bins is {\em balanced} if in every bin at least a $(|E_r|/|P_r|-\varepsilon)$ fraction of the goods belong to $E_r$.  

Armed with the above lemmas, we will construct a sequence of pools, $P_0, P_1, \cdots$, that witness a failure event for the algorithm, namely an agent receiving less than $\alpha$ times her \MMS\ value. We start with $P_0=G$. In each round $r$, there are three possibilities. First, if the group at $P_r$ fails, that is, at least one agent in the group is unsuccessful, then we are already done. Second, if $D_{P_r}\cap Q_r=\emptyset$, then we terminate the witness sequence early without identifying a failure; we will bound the probability of such an event. Third, if $D_{P_r}\cap Q_r$ is non-empty, then we set $P_{r+1}$ to be the least-indexed bin in
$D_{P_r}\cap Q_r$ and continue to the next round.




The next lemma shows that if we succeed in constructing a witness sequence of length $d$ and every partition of the pools in the witness sequence is balanced, then the last witness pool in the sequence, $P_d$, is left with too few items to satisfy its last round agents. 


\begin{lemma}
\label{lem:xos-accumulation}
If the selected witness sequence reaches round $d$ and the bin partitions of
$P_0,\ldots,P_d$ are balanced, then every bin of $P_d$ has at most
$(\alpha/2)n^2$ goods available at the start of round $d$. Consequently,
no agent assigned to $P_d$ can receive $\alpha$ times her \MMS\ value.
\end{lemma}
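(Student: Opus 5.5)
We track the fraction of allocated goods in the witness pools. For $r=0,\dots,d$, let $\rho_r$ denote the fraction of the goods of $P_r$ that have been allocated by the start of round $r$, that is, before the group at $P_r$ arrives. Let $\beta_r$ be the minimum, over the bins of $P_r$, of the allocated fraction of that bin at the same time. The plan is to prove by induction on $r$ that
\[
\rho_r \ge r\left(\alpha-\varepsilon\right) \qquad \text{with } \varepsilon=\alpha/(2d).
\]
Then, at $r=d$, every bin of $P_d$ has lost at least a $d(\alpha-\varepsilon)=d\alpha-\alpha/2\ge 1-\alpha/2$ fraction of its goods. Here we use $d=\lceil 1/\alpha\rceil\ge 1/\alpha$ and apply balance once more at round $d$. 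Hence at most $(\alpha/2)M_d=(\alpha/2)n^2$ goods remain in each bin, since $M_d=n^{2}$.

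The base case is trivial: $\rho_0\ge 0$. For the inductive step, fix round $r<d$. Balance of the partition of $P_r$ gives that every bin of $P_r$ has allocated fraction at least $\rho_r-\varepsilon$ at the start of round $r$. The witness bin $P_{r+1}$ lies in $D_{P_r}$. So by Lemma~\ref{lem:xos-pool-forcing}, the round-$r$ group additionally removes at least $\alpha M_r$ goods from it, beyond those already allocated before the group arrived. These two contributions are disjoint, the first consisting of goods gone before round $r$ and the second of new goods. Hence the allocated fraction of $P_{r+1}$ as a set, measured at the start of round $r+1$, is at least
\[
\rho_r-\varepsilon+\alpha \ge (r+1)(\alpha-\varepsilon).
\]
Any allocations by other groups in rounds $\le r$ only increase this quantity, and monotonicity of ``allocated'' in time lets us ignore them. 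This gives $\rho_{r+1}\ge(r+1)(\alpha-\varepsilon)$. Finally, balance of $P_d$'s partition transfers the set-level fraction $\rho_d$ to each bin, losing only another $\varepsilon$. Counting precisely, the accumulated loss from $d+1$ applications of balance is at most $(d+1)\varepsilon$, which may need $\varepsilon=\alpha/(2(d+1))$, or else careful counting showing only $d$ balance applications are really needed. I expect this bookkeeping of the $\varepsilon$ losses against the slack $d\alpha-1$ to be the main (minor) obstacle. The resolution is that $P_0=G$ has $\rho_0=0$, so its balance step costs nothing useful and can be dropped, leaving exactly $d$ lossy steps totaling $d\varepsilon=\alpha/2$.

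For the consequence, an agent assigned to $P_d$ has \MMS\ value $M_d=n^2$, and her value equals her allocation's largest intersection with a single bin of $P_d$. Each such bin has at most $(\alpha/2)n^2<\alpha n^2$ goods still available at the start of round $d$. So no allocation made during round $d$ can give her $\alpha$ times her \MMS, and this holds regardless of the algorithm's round-$d$ decisions.
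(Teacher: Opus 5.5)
Your proof is correct and takes the same approach as the paper. Both run an induction on the deleted fraction of the witness pools: apply balance of $P_r$, add the $\alpha M_r$ fresh goods supplied by Lemma~\ref{lem:xos-pool-forcing}, and finish with one last balance step at $P_d$. Your bookkeeping fix is exactly how the paper sets it up. It starts at $\rho_1\ge\alpha$, since nothing is allocated before round $0$, so the invariant is $\rho_r\ge r\alpha-(r-1)\varepsilon$. Your initially stated $r(\alpha-\varepsilon)$ would, after the final balance step, only give the per-bin bound $d\alpha-(d+1)\varepsilon$, so the first-paragraph claim of $d(\alpha-\varepsilon)$ per bin holds only after that fix.
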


To complete our argument, we show that conditioned on the algorithm succeeding in rounds $1$ through $d-1$, with high probability, we can construct a complete witness sequence $P_0, \cdots, P_d$ such that the bin partitions of $P_0,\ldots,P_d$ are balanced.


\begin{theorem}
\label{thm:xos-distributional}
For every deterministic online allocation algorithm $A$,
\begin{equation*}
\label{eq:xos-pool-success}
\begin{split}
    & \Pr_{I\sim\mathcal D_{n,\alpha}}
       [A\text{ produces an }\alpha\text{-MMS allocation}]
    \le \eta_{n,\alpha},\\
    & \qquad \text{where} \qquad \eta_{n,\alpha}
    :=d\exp\!\left(-\frac{\alpha K}{2(d+1)}\right)
      +\frac{d^3}{\alpha^2n}
      =o(1).
\end{split}
\end{equation*}
\end{theorem}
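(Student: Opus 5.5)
We will bound the probability of success by a union bound over the ways the witness-sequence construction can fail to exhibit an unsuccessful agent. Fix the deterministic algorithm $A$ and run the construction of $P_0,P_1,\ldots$ on $I\sim\mathcal D_{n,\alpha}$. If $A$ produces an $\alpha$-MMS allocation, then every group at every witness pool succeeds. Under that event, the construction can only stop in one of two ways.

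First, it may terminate early: for some $r<d$ the group at $P_r$ succeeds but $D_{P_r}\cap Q_r=\varnothing$. Second, it may reach $P_d$ with some partition among $P_0,\ldots,P_d$ unbalanced. Indeed, if the sequence reaches round $d$ and all partitions are balanced, \Cref{lem:xos-accumulation} says that the group at $P_d$ fails, which contradicts success. So
\[
\Pr[\text{success}]\le \sum_{r<d}\Pr[\text{early stop at } r]+\sum_{r\le d}\Pr[\text{partition of }P_r\text{ unbalanced}].
\]

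For the first sum we use \Cref{lem:xos-continuation}. The key point is that $P_r$ is a random pool determined by the history, while the lemma is stated for a fixed pool. We handle this by conditioning on everything revealed up to the end of round $r$: the pools, partitions, and allocations through round $r$. The adversary's choice of $Q_r$ is a uniform $q_{r+1}$-subset of the round-$r$ bins and is hidden from the algorithm. It is therefore independent of $D_{P_r}$ given this history, and the argument of \Cref{lem:xos-continuation} applies conditionally. This gives a bound of $\exp(-\alpha K/(2(d+1)))$ per round, and at most $d$ rounds contribute the first term of $\eta_{n,\alpha}$.

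For the second sum we use \Cref{lem:xos-balance} with $\varepsilon=\alpha/(2d)$. Condition on the history up to the arrival of the round-$r$ agents, excluding the fresh partition of $P_r$. The set $E_{P_r}$ of already-allocated goods in $P_r$ is then fixed. The internal partition of $P_r$ is uniform and independent of it, because it is only revealed at round $r$ and the bin $P_r$ was selected using $Q_{r-1}$ and earlier information. \Cref{lem:xos-balance} then gives a failure probability of at most $n/(4\varepsilon^2 M_r)=nd^2/(\alpha^2M_r)$. Since $M_r\ge M_d=n^2$, summing over $d+1\le 2d$ rounds gives at most $2d^3/(\alpha^2 n)$. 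This matches the second term $d^3/(\alpha^2 n)$ up to a constant; we would tighten the constant or absorb it into the same asymptotics.

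Finally, we check that $\eta_{n,\alpha}=o(1)$. We have $d=\Theta(1/\alpha)=O(\log\log n)$, so $d^3/(\alpha^2 n)=O((\log\log n)^5/n)\to0$. Using $K\ge 2\alpha^2\log n$ only gives $\alpha K/(2(d+1))=\Omega(\alpha^3\log n)$, which is already $\gg\log d$ when $\alpha=\Omega(1/\log\log n)$. More directly, $K=T^{1/(2^d-1)}$ with $2^d=\operatorname{polylog}(n)$ of small degree makes $K$ superpolylogarithmic, so the first term vanishes as well. Yao's principle then yields \Cref{thm:xos-lower-bound}.

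The main obstacle is making the conditioning rigorous. The witness pools are adaptively selected, so we must verify that $Q_r$ and the fresh partitions are independent of the fixed-pool quantities $D_{P_r}$ and $E_{P_r}$ given the history. This holds because the algorithm is deterministic and sees none of these random choices before the corresponding arrivals.
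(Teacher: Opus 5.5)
Your argument is essentially the paper's proof. It uses the same witness sequence and the same exposure scheme: partitions are exposed at the start of a round, and the lottery only after the round has been allocated. It also uses the same union bound, over at most $d$ lottery misses via \Cref{lem:xos-continuation} and over refinement imbalances via \Cref{lem:xos-balance}.

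\textbf{The constant in the second term needs repair.} As written you obtain $2d^3/(\alpha^2 n)$, which does not prove the stated inequality, and ``absorbing it into the asymptotics'' is not enough. The fix is that the round-$0$ refinement never needs checking. Nothing is allocated before round $0$, so $E_{P_0}=\varnothing$. The required fraction $|E_{P_0}|/|P_0|-\varepsilon=-\varepsilon$ is negative, so $P_0$ is always balanced. Indeed, the base case of \Cref{lem:xos-accumulation} uses only that $P_1$ is a charged bin of the successful round-$0$ group. Hence only the $d$ refinements of $P_1,\ldots,P_d$ contribute. Each contributes at most $nd^2/(\alpha^2 M_r)\le d^2/(\alpha^2 n)$, which gives exactly $d^3/(\alpha^2 n)$.

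\textbf{Two small slips in your $o(1)$ check.}
\begin{itemize}
\item The first term satisfies $\alpha K/(2(d+1))\ge \alpha^3\log n/(d+1)=\Omega(\alpha^4\log n)$, not $\Omega(\alpha^3\log n)$. This is still $\gg \log d$ for $\alpha=\Omega(1/\log\log n)$, so your conclusion stands.
\item Your ``more direct'' remark that $K$ is superpolylogarithmic depends on the constant hidden in $\alpha=\Omega(1/\log\log n)$. For $\alpha=1/\log_2\log_2 n$ one gets $2^d\ge\log_2 n$ and hence $K=O(1)$. You should rely on the standing assumption $K\ge 2\alpha^2\log n$ instead.
\end{itemize}
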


By plugging in $\alpha=\Omega(\frac{1}{\log\log n})$, we obtain the following corollary.

\begin{corollary}
\label{cor:xos-distributional-loglogn}
For every deterministic online allocation algorithm $A$, there exists a distribution over input instances such that the probability that $A$ allocates an $\Omega(\frac{1}{\log\log n})$ fraction of the \MMS\ value to all agents simultaneously is $o(1)$.
\end{corollary}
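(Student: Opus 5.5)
The corollary follows from Theorem~\ref{thm:xos-distributional} once we plug in parameters and check that $\eta_{n,\alpha}=o(1)$ when $\alpha=c_0/\log\log n$ for a suitable constant $c_0$. Yao's minimax principle then turns the distributional statement into the claim about randomized algorithms. Fix a deterministic online algorithm $A$ and take the distribution $\mathcal D_{n,\alpha}$. Set $d=\lceil 1/\alpha\rceil=\Theta(\log\log n)$ and $T=n/(d+1)$. We may restrict $n$ to multiples of $d+1$, or pad the instance with dummy agents whose valuations are identically zero; such agents are trivially satisfied and do not change the asymptotics.

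The key step is bounding $K=T^{1/(2^d-1)}$ from below. We have $2^d\le 2^{1/\alpha+1}=2(\log n)^{1/c_0}$ up to the base of the logarithm. Hence
\[\ln K\ge \frac{\ln T}{2^d}\ge \frac{\ln n-\ln(d+1)}{2(\log n)^{1/c_0}}.\]
Choose $c_0$ large, say $c_0\ge 2$ with logarithms to base $2$. Then the right-hand side is $\Omega(\sqrt{\log n})$, so $K\ge \exp(\Omega(\sqrt{\log n}))$. This is far larger than $2\alpha^2\log n$, which meets the construction's requirement $K\ge 2\alpha^2\log n$ for large $n$.

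Next we bound the two terms of $\eta_{n,\alpha}$. For the first term,
\[\frac{\alpha K}{2(d+1)}=\frac{K}{\Theta((\log\log n)^2)}=\exp\bigl(\Omega(\sqrt{\log n})\bigr),\]
so $d\exp(-\alpha K/(2(d+1)))=O(\log\log n)\exp(-\exp(\Omega(\sqrt{\log n})))=o(1)$. For the second term, $d^3/(\alpha^2 n)=O((\log\log n)^5/n)=o(1)$. Therefore $A$ produces an $\alpha$-MMS allocation with probability at most $\eta_{n,\alpha}=o(1)$ under $\mathcal D_{n,\alpha}$. The same holds for any smaller constant multiple of $1/\log\log n$, since increasing $c_0$ only helps. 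This covers every function in $\Omega(1/\log\log n)$ up to the choice of constant.

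The only delicate point is the growth of $K$. The doubly exponential recursion $q_r=K^{2^r-1}$ forces $2^d$ to be at most a small power of $\log n$. That is exactly why $\alpha$ cannot be taken smaller than order $1/\log\log n$, and why the constant $c_0$ must be large enough that $(\log n)^{1/c_0}$ stays polynomially below $\log n$. I will state the choice of $c_0$ explicitly and verify this inequality. Everything else is direct substitution into Theorem~\ref{thm:xos-distributional}.
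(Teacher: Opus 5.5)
Your proposal is correct and follows the paper's argument: substitute $\alpha=c_0/\log\log n$ into Theorem~\ref{thm:xos-distributional}. Your explicit bound $\ln K\ge \ln T/2^d=\Omega(\sqrt{\log n})$ for $c_0\ge 2$ is more careful than the paper's remark that ``$d$ is fixed,'' since here $d=\Theta(\log\log n)$ grows with $n$.

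Your closing remark has the direction backwards, though. Monotonicity extends the bound to \emph{larger} multiples, not smaller ones: an $\alpha'$-MMS allocation with $\alpha'\ge\alpha$ is also $\alpha$-MMS, so $\mathcal D_{n,\alpha}$ still works. For $c_0\le 1$ (base $2$) the argument breaks. Then $2^d\ge\log_2 n$, so $K\le T^{1/(\log_2 n-1)}=O(1)$, and both $K\ge 2\alpha^2\log n$ and $\eta_{n,\alpha}=o(1)$ fail. So the corollary holds only for $\alpha\ge C/\log\log n$ with $C$ a suitable constant, which is also all the paper's construction gives.
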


The Yao reduction now proves Theorem~\ref{thm:xos-lower-bound}.

\subsection{Tight Logarithmic MMS Approximation for XOS valuations}
We now provide an algorithm which is $(\Theta(\frac{1}{\log{n}}),1-O(\frac{1}{\log{n}}))$-MMS competitive for sufficiently large $n$. Like the Single-or-Sample algorithm, it allocates either a single high-valued good or a bundle obtained through random sampling. The key difference lies in the sampling rule. For submodular valuations, adaptive greedy bucketing guarantees each agent a constant fraction of her MMS whenever a sufficiently high value remains across her MMS-defining bundles. This guarantee, however, does not extend to XOS valuations.

For example, consider an instance with \(n\) agents and \(n^2\) goods. Fix an agent \(i\), and partition the goods into \(n\) sets \(P_1,\ldots,P_n\), each of size \(n\). Let $v_i(S)=\max_{j\in[n]}|S\cap P_j|.$ Then \(\mathrm{MMS}_i=n\), while every good has singleton value \(1\). Consider a greedy bucketing procedure with buckets of size \(n\), where each bucket consists of goods with the highest marginal values relative to the agent’s current allocation. Initially, all goods have equal marginal value, so a valid tie-breaking rule can select \(P_1\) as the first bucket. After allocating one uniformly random good from \(P_1\) and setting aside the rest of that bucket, every good in an unprocessed part has zero marginal value. The procedure can therefore continue by selecting each remaining part as a bucket. The resulting allocation contains exactly one good from each part and consequently has value only \(1\), or a \(1/n\) fraction of the agent’s MMS, even though all goods were initially available.

This example motivates a sampling process that has more correlation among the selected goods. We achieve this through a two-stage sampling process. First, we consider the agent’s MMS-defining partition and identify the bundles whose remaining value is at least a constant fraction of her MMS. Provided sufficiently many such bundles exist, we select one uniformly at random. We then sample only from the remaining goods in that bundle, using a procedure that guarantees the agent a \(\Theta(1/\log n)\) fraction of her MMS. Algorithm \ref{alg:XOS} formalizes this approach. Note that the algorithm allocates high valued goods using a fixed order. We require this for a technical reason that will be clear in the analysis.

\begin{algorithm}[ht!]
\caption{Random Bundle Allocation Process for XOS}
\label{alg:XOS}
\begin{algorithmic}[1]
\Require Number of agents $n$, ground set $[m]$
\State $R_1 \gets [m]$; $t \gets 0$
\State Fix an arbitrary ordering of the goods $g_1, \ldots, g_m$
\For{$i = 1,2,\dots,n$}
    \If{there's an item $e\in R_i$ with $v_i(\{e\})\ge \frac{1}{4\log{n}}$}
        \State Allocate the first available good $e^*$ such that $v_i(\{e^*\}) \ge \frac{1}{4 \log n}$ from the global ordering
        \State $R_{i+1} \gets R_i \setminus \{e^*\}$;  $t \gets t+1$
    \Else
        \State Suppose $i$'s MMS partition of the ground set $[m]$ is $P^i_1,\dots,P^i_n$
        \State Let $K$ denotes the set of indices $k$ with $v_i(P^i_k\cap R_{i})\geq \frac{1}{2}$
        \If{$|K|<\frac{n-t}{2}$} 
            \State report failure
        \Else
            \State Pick a uniformly random index $k$ from $K$
            \State Initialize $G=P_k\cap R_i$; $j \gets 1$; $s \gets \lfloor \frac{\log n}{16} \rfloor$
            \State Order goods in $G = (e_1, \ldots, e_{|G|})$  according to its supporting additive clause
            \For{$j \in \big\lfloor \frac{|G|}{s} \big\rfloor$}
                \State Select one good $e_{ij}$ uniformly at random from $(e_{(j-1)s+1}, \ldots, e_{js})$
            \EndFor
        \State Allocate $A_i=\cup_j\{e_{ij}\}$ to agent $i$; $R_{i+1} \gets R_i \setminus A_i$ 
        \EndIf
    \EndIf
\EndFor
\end{algorithmic}
\end{algorithm}
Our main result for Algorithm \ref{alg:XOS} is:
\begin{theorem}\label{thm:xos-alg}
    If all agents $i\in[n]$ have XOS valuation functions, then for sufficiently large $n$, \Cref{alg:XOS} is $\left(\frac{1}{4\log{n}},1-O(\frac{1}{\log{n}})\right)$-competitive.
\end{theorem}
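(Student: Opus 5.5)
Normalize every agent's \MMS\ to $1$. Agent $i$ succeeds whenever the algorithm does not report failure at her step. Two deterministic facts give this. First, if $R_i$ contains a good of value at least $\frac{1}{4\log n}$, she receives such a good. Second, suppose she reaches the sampling step and picks $k\in K$, so $v_i(P^i_k\cap R_i)\ge 1/2$. Fix the additive clause $a$ supporting $G=P^i_k\cap R_i$, and order $G$ by decreasing $a$-value. Each sampled good $e_{ij}$ has $a$-value at least the average $a$-value of the next block of size $s$. The same block-charging argument as in Lemma~\ref{lem:frac-alloc-simple} then gives
\[
v_i(A_i)\ \ge\ a(A_i)\ \ge\ \tfrac1s\bigl(a(G)-s\cdot\max_e a_e\bigr)\ \ge\ \tfrac{16}{\log n}\Bigl(\tfrac12-\tfrac{s}{4\log n}\Bigr),
\]
where the last step uses that every good has value below $\frac{1}{4\log n}$. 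This bound is comfortably at least $\frac{1}{4\log n}$. So the whole task is to bound the probability that some agent finds $|K|<\frac{n-t}{2}$.

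\textbf{Bounding the failure probability.} Fix agent $i$ and her partition $P^i_1,\dots,P^i_n$. As in the submodular analysis, maintain active bundles and deactivate one bundle per special round, so that $N=n-t$ bundles are active. Call an active bundle \emph{spoiled} if its residual value drops below $1/2$. Failure requires that more than $N/2$ active bundles be spoiled.

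The key point is the two-stage sampling. In each earlier non-special round $i'$, agent $i'$ chooses one of her own bundles uniformly at random from a set of at least $(n-t)/2$ candidates, and takes only about a $1/s$ fraction of it. The adversary is oblivious, so the partitions of later agents are fixed in advance. Hence each good is taken in round $i'$ with probability at most $\frac{2}{(n-t)s}$, where $s=\Theta(\log n)$.

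Measured by agent $i$'s supporting weights, the expected loss per round is then at most $\frac{2N}{Ns}=O(1/\log n)$ in total. Summed over at most $n$ rounds, the expected total loss of supporting weight is $O(n/\log n)\cdot\frac1N$ per bundle on average. I would repeat the deficit/exponential-moment argument of Lemma~\ref{lem:deficit-mgf}, with $p=\frac{2}{Ns}$ and weights normalized per bundle, to get
\[
\Pr[D>N/4]\le \exp(-\Omega(\log n\cdot(n-i+1)))
\]
or a similar bound. Summing over $i$ would then give failure probability $O(1/\log n)$. This is consistent with the target $1-O(1/\log n)$: the last few agents dominate, and their failure probabilities are $e^{-\Omega(\log n)}$-ish, or $O(1/\log n)$ after accounting for constants.

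\textbf{Main obstacle.} The hard step is the concentration. Within one round, the removed goods are no longer a multi-bucket SwR over agent $i$'s goods. A whole bundle of agent $i'$ is chosen at random, which creates strong positive correlation: one round can remove up to $1/s$ of many goods in one of $i$'s bundles simultaneously. The upper-cylinder bound of Lemma~\ref{lem:adaptive-cylinder} therefore fails.

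I would handle this at the level of increments instead. For agent $i$'s deficit $D$, the per-round increment is bounded by $|G|/s$ times the max weight, which is at most $\frac{1}{4\log n}\cdot\frac{|P|}{s}$. I would aim for an MGF bound $\E[e^{\lambda\Delta}]\le\exp\bigl((e^{\lambda c}-1)\E\Delta/c\bigr)$ for increments bounded by $c$, and then reuse the supermartingale induction of Lemma~\ref{lem:deficit-mgf}.

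If increments in value units are too large, I would instead use the fixed global ordering of high-value goods, which is what makes special rounds deterministic and charge-able. I would then fall back to Markov's inequality on the expected number of spoiled bundles. That yields failure probability $O(1/\log n)$ per late agent, and I would sum carefully, using that early agents have exponentially small failure probability.
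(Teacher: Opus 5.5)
Your first step is correct and matches Lemma~\ref{lem:xos_value}. You also correctly isolate the key per-good fact: the arriving agent picks one of at least $(n-t)/2$ eligible bundles uniformly and then takes one good per block of size $s$, so each good is removed in a non-special round with probability at most $\frac{2}{(n-t)s}$.

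The gap is the failure-probability bound. Your proposal does not prove it, and the bound you aim for is false. You put high-valued goods, with supporting weight up to $1$, into a single aggregate deficit, and one non-special round can remove many of them at once. Take the tightness instance of Theorem~\ref{thm:lb_xos_c} with $c=\frac{1}{4\log n}$:
\begin{itemize}
\item Agent $1$ has no high-valued good. She samples a uniformly random bin of size about $4\log n$ and takes at least $64$ goods from it.
\item With probability $\Theta(1/\log n)$ that bin consists of goods valued by all later agents.
\item Those agents are binary additive, so each takes a special round. The last ones find nothing and fail.
\end{itemize}
So late agents fail with probability $\Theta(1/\log n)$, not $\exp(-\Omega(\log n\,(n-i+1)))$. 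More generally, that theorem forces the total failure probability to be $\Omega(1/\log n)$, so no per-agent bounds summing to $n^{-\Omega(1)}$ can hold.

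Your proposed increment bound $\frac{1}{4\log n}\cdot\frac{|P|}{s}$ ignores high-valued goods. It is also not small, since $|P|$ is unbounded, so a bounded-increment MGF bound does not give the tails you need. Your Markov fallback gives $O(1/\log n)$ per agent, and that sums to $O(1/\log n)$ only over $O(1)$ agents. What you actually need are per-agent bounds with an $O(1/\log n)$ prefactor that decay geometrically in $n-i+1$, and nothing in your sketch supplies them.

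The missing idea is the paper's case split on the number of high-valued goods in the bundles untouched by special rounds.

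\textbf{Bundles without high-valued goods.} These need no aggregate deficit. Each such bundle is a fixed set $S$ with weights at most $L=\frac{1}{4\log n}$. Condition on the chosen bundle and multiply over blocks. Then average over the uniform choice among $h\ge (n-t_r)/2$ disjoint bundles. With $a=(e^{\lambda L}-1)/L$ this gives $\mathbb{E}[e^{\lambda\Delta_r}\mid\text{history}]\le 1+\frac{e^{paW}-1}{h}$. Because $\sum_{s}\frac{1}{n-t_s}\le 1$ over non-special rounds, these factors multiply to $\exp(2(e^{paW}-1))$ (Lemma~\ref{lem:conc_XOS}). Taking $W=1$ and $\lambda=10\log n$, each such bundle loses half its value with probability at most $n^{-4}$. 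A union bound over all agent--bundle pairs then handles every agent with fewer than $(n-t_i)/2$ high-valued goods in those bundles.

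\textbf{Agents with many high-valued goods.} For the remaining agents, failure requires every one of their high-valued goods to be gone. Which such goods special rounds remove depends on the coins. The paper therefore compares with a virtual process, which is deterministic because of the fixed global order. It bounds the coupling discrepancy by Markov, $\Pr[Z\ge Q/8]=O(1/\log n)$. It then applies the fixed-set bound $\frac{Cpk}{2^{k/2}-1}$ of Lemma~\ref{xos:set_remain} with $k\gtrsim n-i+1$, and these bounds sum to $O(1/\log n)$.

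Your remark about the fixed ordering points in this direction. Without the coupling and this geometric decay, however, the argument does not reach $1-O(1/\log n)$.
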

 \subsubsection{Analysis}
The following lemma shows that the algorithm either reports failure or allocates each agent the guaranteed approximation to MMS. Recall from the algorithm that $(P^i_1, \ldots, P^i_n)$ is an MMS defining allocation for agent $i$ and $R_i$ is the random set of goods available when $i$ arrives.
\begin{lemma}
\label{lem:xos_value}
   If \Cref{alg:XOS} doesn't report failure, then for every agent $i\in[n]$, we have $v_i(A_i)\geq\frac{1}{4\log{n}}$
\end{lemma}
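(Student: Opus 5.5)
We split on the branch the algorithm takes for agent $i$; we may assume the instance is normalized so that $\MMS_i=1$.

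\emph{Special branch.} If $R_i$ contains a good $e$ with $v_i(\{e\})\ge \frac{1}{4\log n}$, the algorithm allocates such a good $e^*$. By monotonicity $v_i(A_i)\ge v_i(\{e^*\})\ge\frac{1}{4\log n}$, so nothing more is needed.

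\emph{Sampling branch.} Otherwise every good in $R_i$ has singleton value below $\frac{1}{4\log n}$. Since no failure is reported, the chosen index $k\in K$ satisfies $v_i(G)\ge\frac12$, where $G=P^i_k\cap R_i$.

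Because $v_i$ is XOS, there is an additive clause $a$ with $a(G)=v_i(G)\ge\frac12$ and $a(S)\le v_i(S)$ for every $S$. Moreover, for each $e\in G$,
\[
a(e)\le v_i(\{e\})<\frac{1}{4\log n}.
\]
This is the ``supporting additive clause'' by which $G$ is ordered, in weakly decreasing order of $a$. It therefore suffices to lower bound $a(A_i)$, since $v_i(A_i)\ge a(A_i)$.

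We now run the same charging argument as in Lemma~\ref{lem:frac-alloc-simple}, with bucket size $s=\lfloor\frac{\log n}{16}\rfloor$. Every good in bucket $j+1$ (including the leftover partial block of fewer than $s$ goods at the end) has $a$-value at most $a(e_{ij})$, because the ordering is decreasing. Hence
\[
s\,a(A_i)\ \ge\ a(G)-a(\text{first bucket})\ \ge\ \frac12 - s\cdot\frac{1}{4\log n}\ \ge\ \frac12-\frac1{64}.
\]
This gives
\[
a(A_i)\ \ge\ \frac{31/64}{s}\ \ge\ \frac{31\cdot 16}{64\log n}\ \gg\ \frac{1}{4\log n}.
\]
The bound is deterministic for every random choice of $e_{ij}$. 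If $|G|<s$, so that no full bucket exists, then $a(G)<s\cdot\frac{1}{4\log n}\le\frac1{64}<\frac12$, which contradicts $v_i(G)\ge\frac12$. So at least one full bucket exists.

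\emph{Expected obstacle.} The only delicate point is justifying that the supporting clause $a$ of $G$ is dominated by singleton values and by $v_i$ on subsets of $G$. This is the defining property of XOS: $a$ is one of the clauses $v_{ik}$, so $a(S)\le\max_k v_{ik}(S)=v_i(S)$ for every $S$, and in particular for singletons. A minor secondary detail is that $s\ge1$, which requires $n$ sufficiently large, as the theorem assumes.
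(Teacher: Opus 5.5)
Your proposal is correct and follows essentially the same route as the paper. The special-round case is immediate. In the sampling case you lower bound the supporting additive clause of $P^i_k\cap R_i$ by the block-charging argument, obtaining $a(A_i)\ge (1/2-1/64)/s$, and then conclude via $v_i(A_i)\ge a(A_i)$. Your extra remarks, that a full block must exist and that $s\ge 1$ for large $n$, are small clarifications the paper leaves implicit.
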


\begin{proof}
    Fix any agent $i$, if there's a high-valued good, $e\in R_i$ such that $v_i(\{e\})\geq\frac{1}{4\log{n}}$, then the algorithm directly allocates $A_i=\{e\}$ to $i$ and the lemma is true. Otherwise, if the algorithm doesn't report failure, then the algorithm chooses a bundle $P^i_k$ for agent $i$ with $v_i(P^i_k \cap R_i) \geq 1/2$. Let $a^i_k$ denote the supporting additive clause for the bundle $P^i_k \cap R_i$ i.e., $v_i(P^i_k \cap R_i) = a^i_k(P^i_k \cap R_i)$. The algorithm orders the goods in \(P^i_k \cap R_i\) by non-increasing supporting values and samples one good uniformly at random from each consecutive full block of size $s = \lfloor \log n/16 \rfloor$. Every sampled good has supporting value at least that of every good in the next block, including a possible final incomplete block. If $B_1$ is the first $s$ sized block of goods, this implies
    \[
    a^i_k(A_i) \geq \frac{a^i_k(P^i_k \cap R_i) - a^i_k(B_1)}{s} \geq \frac{1/2 - 1/64}{\log n / 16} \geq \frac{1}{4 \log n},
    \]
    where the second inequality follows because each good has a value at most $1/4 \log n $. By the property of XOS functions, $v_i(A_i) \geq a^i_k(A_i)$ proving the lemma.  
\end{proof}

Now, we show the second part of \Cref{thm:xos-alg}, i.e., lower bound the probability of success of the algorithm. We prove that the algorithm succeeds for any fixed agent $i \in [n]$ with a high probability and then union bound across the agents to show the overall guarantee.

To bound the success probability for a particular agent, we consider two cases: (1) the agent had a large number of high-valued goods and, (2) the agent had a large value among its low valued goods. One of these two cases will always be true. In the first case, we show that at least one of the high valued goods is retained by the time $i$ arrives and in the second case we show that a large fraction of the value of the low-valued goods is retained allowing the agent to receive the requisite allocation via sampling. Towards this, we first prove two technical claims that are similar to the concentration bounds we proved in Section \ref{sec:mgf}. The proof of both these lemmas are deferred to \Cref{app:xos-details}.

Our first lemma proves that the value of a fixed set $S$ will be approximately preserved in the non-special rounds. Use $p = \frac{1}{s} = \frac{1}{\lfloor \log n / 16 \rfloor}$ as the probability that any item is allocated in a non-special round of \Cref{alg:XOS} and $E_{i}$ as the set of all goods that are allocated in non-special rounds by the algorithm before arrival of agent $i$. In particular, $E_{n+1}$ is the set of all goods allocated in non-special rounds throughout the algorithm. We show the following lemma.
\begin{lemma}
\label{lem:conc_XOS}
    Let $S$ be a fixed set of items with weights in $[0, L]$, and let the weight of item $g$ be $x_g$. Suppose we run \Cref{alg:XOS} and denote $T=S\cap E_{n+1}$. Denote $X = \sum_{g\in T} x_g$, $W = \sum_{g\in S} x_g$, we have that for all $\lambda \geq 0$, $\mathbb{E}[{e^{\lambda X}}] \leq e^{2(e^{\theta} - 1)}$ where $\theta =pW\frac{e^{\lambda L}-1}{L}$.  
\end{lemma}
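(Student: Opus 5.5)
The plan is to copy the exponential-supermartingale argument of Lemma~\ref{lem:deficit-mgf}. I will first bound the contribution of a single non-special round conditionally on the history. Then I will compensate by a predictable quantity whose total is at most $2$ on every sample path.

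\emph{One round, conditional bound.} Fix a non-special round $i$ and condition on the history up to the start of the round, which includes $t=t_i$ and the set $K$ with $|K|\ge (n-t)/2$. If the algorithm reports failure, nothing more is allocated; this only lowers $X$, so we may treat the remaining rounds as allocating nothing. Otherwise a bundle index $k$ is drawn uniformly from $K$. Given $k$, the goods taken form an $\mswr(P^i_k\cap R_i,p)$ sample: one good from each consecutive block of size $s$, so each good is sampled with probability at most $1/s=p$. Let $X^{(i)}$ be the weight of goods of $S$ taken in round $i$, and $W_k:=\sum_{g\in S\cap P^i_k\cap R_i}x_g$.

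Apply Lemma~\ref{lem:weighted-mgf} to the rescaled weights $x_g/L\in[0,1]$ with parameter $\lambda L$. This gives
\[
\E\bigl[e^{\lambda X^{(i)}}\mid \text{history},k\bigr]\le \exp\Bigl(pW_k\tfrac{e^{\lambda L}-1}{L}\Bigr)=:e^{\theta_k}.
\]
Since the $P^i_k$ are disjoint, $\sum_k\theta_k\le\theta$, and each $\theta_k\le\theta$. By convexity, $e^{\theta_k}-1\le \frac{\theta_k}{\theta}(e^\theta-1)$. Averaging over the uniform choice of $k\in K$ then gives
\[
\E\bigl[e^{\lambda X^{(i)}}\mid\text{history}\bigr]\le 1+\frac{1}{|K|}\sum_k (e^{\theta_k}-1)\le 1+\frac{2}{n-t_i}(e^\theta-1)\le \exp\Bigl(\frac{2(e^\theta-1)}{n-t_i}\Bigr).
\]

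\emph{Supermartingale.} Let $X_j$ be the weight of $S$-goods allocated in the non-special rounds among the first $j$ rounds. Let $\Sigma_j:=\sum \frac{1}{n-t_i}$, summed over the non-special rounds $i\le j$. Each term is determined before the round's randomness is drawn, so $\Sigma_j$ is predictable. The per-round bound shows that $M_j:=\exp\bigl(\lambda X_j-2(e^\theta-1)\Sigma_j\bigr)$ satisfies $\E[M_{j+1}\mid\text{history}]\le M_j$. Special rounds leave both $X_j$ and $\Sigma_j$ unchanged, so they do not break this. Hence $\E[M_n]\le M_0=1$.

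\emph{Pathwise bound $\Sigma_n\le 1$.} This is the step I expect to need the most care, because the number of non-special rounds is chosen adaptively. Fix a realization, and let $R$ be its total number of non-special rounds. Suppose the $j$-th non-special round occurs at time $i$. Then $t_i=(i-1)-(j-1)$. Also $n-i+1\ge R-j+1$, because at least $R-j+1$ non-special rounds occur at times $\ge i$. Therefore
\[
n-t_i=(n-i+1)+(j-1)\ge R,
\]
and so $\Sigma_n\le R\cdot\frac1R=1$. Consequently
\[
\E[e^{\lambda X}]\le e^{2(e^\theta-1)}\,\E[M_n]\le e^{2(e^\theta-1)}.
\]
Here $X=X_n$, since $E_{n+1}$ consists exactly of the goods allocated in non-special rounds. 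I would double-check that rounds ending in reported failure are handled consistently in this count. Such rounds can simply be counted as non-special rounds with zero allocation: they still satisfy the conditional bound trivially and the counting above.
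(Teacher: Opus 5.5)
Your proposal is correct and follows essentially the same route as the paper's proof. Both bound the conditional MGF of one non-special round given the chosen bundle. Both average over the uniform choice from $K$ using convexity of $e^x-1$ and $|K|\ge (n-t)/2$. Both then compensate with the predictable sum $\sum 1/(n-t_s)$ to get a supermartingale, and bound that sum by $1$ on every path.

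The only differences are cosmetic. You invoke Lemma~\ref{lem:weighted-mgf} on the rescaled weights $x_g/L$, where the paper writes the per-block product directly. For the pathwise step you show $n-t_i\ge R$, where the paper uses $n-t_s\ge n-T\ge N$.
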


We can use Lemma \ref{lem:conc_XOS} to show that the cardinality of a fixed set is also approximately preserved when it goes through only the non-special rounds.

\begin{lemma}\label{xos:set_remain}
   Let $S$ be a fixed set of items with $|S|=k$. There exists a constant $C$ such that 
    \[
    \Pr[|S\cap E_i|\ge \frac{k}{2}]<\max \left\lbrace\frac{C}{n^4},\frac{Cpk}{2^{k/2}-1} \right\rbrace. 
    \]
\end{lemma}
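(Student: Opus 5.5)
The plan is to apply Lemma~\ref{lem:conc_XOS} with unit weights $x_g=1$ for $g\in S$ (so $L=1$ and $W=k$), and then convert the moment generating function bound into a tail bound via Markov's inequality. Since $E_i\subseteq E_{n+1}$, we have $|S\cap E_i|\le |S\cap E_{n+1}|=X$. It therefore suffices to bound $\Pr[X\ge k/2]$. For every $\lambda\ge 0$,
\[
\Pr[X\ge k/2]\le e^{-\lambda k/2}\,\E[e^{\lambda X}]\le \exp\!\left(-\frac{\lambda k}{2}+2\bigl(e^{\theta}-1\bigr)\right),\qquad \theta=pk(e^{\lambda}-1).
\]

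We then choose $\lambda$ separately in two regimes of $k$, which produce the two terms inside the maximum.

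\textbf{Large $k$.} When $pk$ is small compared with $1$ (recall $p=\Theta(1/\log n)$), take a constant $\lambda$, say $\lambda=\ln 2$. Then $\theta=pk$. Whenever $pk\le 1$, the factor $e^{\theta}-1$ is $O(pk)$, and the bound becomes $2^{-k/2}e^{O(pk)}$. More carefully, we aim for a bound of the form $\frac{C\,pk}{2^{k/2}-1}$. The first step is to compare $e^{2(e^{\theta}-1)}$ with $1+O(\theta)$. The natural route to a fraction of the shape $pk/(2^{k/2}-1)$ is to sum a geometric series, for instance by summing the contributions over thresholds or over $\lambda$ levels. Concretely, I would write $e^{2(e^{\theta}-1)}\le 1+C'\theta$ for $\theta\le 1$. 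I would then note that $\Pr[X\ge k/2]\le \E[e^{\lambda X}-1]/(e^{\lambda k/2}-1)$, which holds because $X\ge k/2$ implies $e^{\lambda X}-1\ge e^{\lambda k/2}-1$. With $\lambda=\ln 2$ this gives exactly $\frac{C' pk}{2^{k/2}-1}$.

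\textbf{Small $pk$ versus large $k$ with $\theta$ large.} If $pk>1$, the estimate above is useless. In this case I take $\lambda$ small enough that $\theta=pk(e^{\lambda}-1)\le 1$, for example $e^{\lambda}-1=1/(pk)$, so that $\lambda\approx 1/(pk)$ for large $pk$. The exponent is then roughly $-\frac{k}{2pk}+O(1)=-\frac{1}{2p}+O(1)=-\Theta(\log n)$. Since $p=1/\lfloor \log n/16\rfloor$, we get $1/(2p)\approx \log n/32$, which is not enough for $n^{-4}$. I would therefore push $\lambda$ larger while keeping $\theta=O(\log n)$. Taking $e^{\lambda}=1+\frac{c\log n}{pk}$ makes $\theta=c\log n$, and then the factor $e^{2(e^{\theta}-1)}$ explodes. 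So the real choice must keep $\theta\le \ln(1+O(\log n))$. This step, extracting $n^{-4}$ from the $e^{2(e^{\theta}-1)}$ form, is the main obstacle.

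My fallback for that obstacle is to use that in the regime $pk\ge 1$ the mean $pk$ is at most $k/\lfloor\log n/16\rfloor$. The threshold $k/2$ therefore exceeds the mean by a factor of $\Theta(\log n)$. Taking $\lambda=\ln(\log n)$-scale with $\theta=\ln 2$, i.e.\ $e^{\lambda}-1=\ln 2/(pk)$, does not work when $pk\gg 1$. Instead I would use the regime split at $pk\le 1$ with $\lambda=\ln(1/(2p))$-type choices, so that $e^{-\lambda k/2}=(2p)^{k/2}$. This is polynomially small in $n$ once $k\gtrsim 8\log n/\log\log n$. For smaller $k$ the term $pk/(2^{k/2}-1)$ already dominates. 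The constant $C$ absorbs the bounded factors $e^{2(e^{\theta}-1)}$ with $\theta=O(1)$.
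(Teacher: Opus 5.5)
Your first regime is correct and is essentially the paper's first case. With $\lambda=\ln 2$ and $pk=O(1)$, Markov's inequality applied to $e^{\lambda X}-1$ gives $(e^{2(e^{pk}-1)}-1)/(2^{k/2}-1)\le C'pk/(2^{k/2}-1)$. The paper splits at $k<\log n$, where $pk\le 16+o(1)$; your split at $pk\le 1$ works equally well.

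The gap is the complementary regime $pk>1$, which the proposal never closes. You correctly diagnose your first two attempts there as failing, but the fallback fails too. With $e^\lambda=1/(2p)$ you get $\theta=pk(e^\lambda-1)=k/2-pk$, which grows linearly in $k$. The factor $\exp(2(e^\theta-1))$ is therefore doubly exponential in $k$, not a bounded constant. It overwhelms $(2p)^{k/2}=\exp(-\tfrac{k}{2}\ln\tfrac{1}{2p})$, and the resulting bound exceeds $1$. Your claim that ``the constant $C$ absorbs the bounded factors $e^{2(e^\theta-1)}$ with $\theta=O(1)$'' is false for this choice of $\lambda$.

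The missing observation is that $\theta$ only needs to be a large absolute constant, not of order $\log n$. When $\lambda$ stays bounded, $e^\lambda-1$ and $\lambda$ agree up to constant factors. The Markov exponent then satisfies $\lambda k/2\asymp\theta/(2p)\asymp\theta\log n/32$. Raising $\theta$ to a suitably large constant already makes this exponent $4\ln n$, while $\exp(2(e^\theta-1))$ stays an absolute constant.

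Concretely, the paper takes $\lambda=8\ln n/k$ for $k\ge\log n$. Then $\lambda\le 8$, so $e^\lambda-1\le e^8\lambda$ and $\theta\le 8e^8\,p\ln n=O(1)$, since $p=\Theta(1/\log n)$. Meanwhile $e^{-\lambda k/2}=n^{-4}$, which gives the $C_2/n^4$ term. The same choice works with your split, since for $k>\lfloor\log n/16\rfloor$ one still has $\lambda<128$, a constant. Your attempt $e^\lambda=1+c\log n/(pk)$ failed only because it makes $e^\lambda-1$ larger than this by a factor of order $\log n$.
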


We now proceed with the proof of Theorem \ref{thm:xos-alg}. Fix an agent $i \in [n]$. Recall that we have defined $R_i$ to be the set of goods available when agent $i$ arrives and $E_i$ to be the set of goods allocated in non-special rounds before agent $i$ arrives. We also define $S_i$ as the set of goods allocates in special rounds before $i$ arrives. Additionally, for each agent $i \in [n]$, fix an MMS-defining allocation for the agent, $P^i_1, \ldots, P^i_n$. Note that we fix this allocation independent of the random choices of the algorithm. We use $t_i$ to denote the number of special rounds that have occurred prior to the arrival of agent $i$. These rounds can touch at most $t_i$ of the agent's MMS defining bundles. Without loss of generality, suppose the bundles \emph{not} touched in these special rounds are $P^i_1, \ldots, P^i_{n-t_i}$. We divide the proof into two cases depending on the number of high valued goods that agent has in $\Gamma_i = \cup_{r \in [n-t_i]} P^i_r$.

\medskip

\noindent \textbf{Case 1: Agent has large number of high valued goods}

\medskip

In particular, this handles the case where $\Gamma_i$ has at least $\frac{n-t_i}{2}$ high-valued goods. We prove that at least one of them remained available with a high probability. Ideally, we would have just liked to use Lemma \ref{xos:set_remain}. However, since $\Gamma_i$ is not a fixed set with respect to the algorithm's random choices, we cannot do this. We therefore define a virtual process that has a deterministic set of high valued goods that are lost before any agent arrives. We couple this virtual process with the actual algorithm to show that the actual algorithm's allocations in special rounds do not deviate significantly from the virtual process' allocations allowing us the bound the goods lost in actual process. The complete argument is formalized in the following lemma.

\begin{lemma}\label{cl:xos-case1-failure}
Let $L$ denote the set of all indices $i\in [n]$ with at least $\frac{n-t_i}{2}$ high-valued goods in $\Gamma_i$. Then, the probability that Algorithm \ref{alg:XOS} declares (first) failure at one of $i \in L$ is at most $O(1/\log n)$.
\end{lemma}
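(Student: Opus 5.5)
We bound, for each $i\in L$, the probability that agent $i$ finds none of her high-valued goods in $\Gamma_i$ available, so that she cannot take a special round and proceeds to the sampling branch where failure may be declared. A union bound over $i\in L$ then gives the claim.

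Fix $i$ and let $H_i$ be the set of goods $e$ with $v_i(\{e\})\ge \frac{1}{4\log n}$. The obstacle is that $\Gamma_i$, and hence $H_i\cap\Gamma_i$, depends on which bundles the earlier special rounds touched, so it is not a fixed set. To handle this we introduce a \emph{virtual process}.

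\textbf{Virtual process.} Since special rounds always take the first available high-valued good in the fixed global order $g_1,\ldots,g_m$, the goods taken in special rounds are, up to interference from non-special rounds, a prefix-like set in that order. Define a deterministic set $F_i$ as follows. Simulate agents $1,\ldots,i-1$ as if every agent whose singleton threshold is met takes a special round and no non-special round removes anything. Let $F_i$ be the goods removed in this simulation; it has at most $i-1$ elements.

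\textbf{Coupling.} We couple this simulation with the real run. The key claim is monotone: each real special-round good either lies in $F_i$, or is a good chosen because an earlier candidate in $F_i$ was removed by a non-special round (an item of $E_i$). By induction over rounds, this gives $|S_i\setminus F_i|\le |E_i\cap F_i'|$, where $F_i'$ is a deterministic set of size $O(n)$ obtained by extending the simulation order. Each non-special removal can displace at most one special choice down the order, which is what makes the count bounded.

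Now take the fixed set $U=(H_i\cap\bigcup_r P^i_r)\setminus F_i$. Every bundle untouched by real special rounds lies among the $P^i_r$, and at most $t_i$ bundles are touched. Hence at least $\frac{n-t_i}{2}$ high-valued goods of $\Gamma_i$ are candidates. After removing $F_i$ and the displaced goods, what remains is controlled by $|U\cap E_i|$ and $|F_i'\cap E_i|$.

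\textbf{Concentration.} We apply Lemma~\ref{xos:set_remain} to the fixed sets $U$ and $F_i'$. With probability $1-\max\{C/n^4, Cpk/(2^{k/2}-1)\}$, fewer than half of each is lost in non-special rounds. When $k=\Omega(n-i+1)$ is large this probability is polynomially small. When $n-t_i$ is small the bound behaves like $p\cdot k2^{-k/2}$, and summing over $i$, with $k$ tied to $n-i+1$, gives a geometric series of total $O(p)=O(1/\log n)$. This is exactly the source of the $O(1/\log n)$ term.

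Combining the two halves, some high-valued good of $\Gamma_i$ survives, so agent $i$ takes a special round and does not fail. The main obstacle is making the coupling inequality precise, in particular ensuring $F_i$ and $F_i'$ are genuinely independent of the algorithm's coins.
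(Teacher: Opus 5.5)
Your virtual process and your fixed set $U=H_i\setminus F_i$ coincide with the paper's $S_i^v$ and $D_i$. The gap is in how you control the displacement.

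On the failure event $H_i\cap R_i=\varnothing$, so $|U\cap E_i|\ge |U|-|S_i\setminus F_i|$. To land in the bad event of Lemma~\ref{xos:set_remain} for $U$, you therefore need $|S_i\setminus F_i|$ to be a small fraction of $|U|$. You need the same smallness, relative to $n-t_i$, to deduce $|U|=\Omega(n-i+1)$ from $|H_i\cap\Gamma_i|\ge (n-t_i)/2$, which you assert without proof.

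Lemma~\ref{xos:set_remain} applied to $F_i'$ only gives $|F_i'\cap E_i|<|F_i'|/2$. But $|F_i'|$ can be $\Theta(n)$ while $n-t_i=O(1)$. Your two concentration events together yield only $|U|<|F_i'|$, which contradicts nothing. No fixed-set tail bound can repair this: a single non-special round can already hit a future virtual special good with probability of order $1/b$. So the displacement is controllable only up to probability $O(1/\log n)$, and a per-agent bound of that size would not survive a union bound over $i$.

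The missing ingredient is the paper's single global bound on $Z:=|S^v_{n+1}\cap E_{n+1}|$, the virtual special goods eaten in non-special rounds, with the run stopped at its first failure.

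\begin{enumerate}
\item \textbf{Coupling.} By induction, $S_i^v\subseteq S_i\cup E_i$ and $t_i\le d_i:=|S_i^v|$. These directly give $|S_i\setminus S_i^v|\le |S_i^v\cap E_i|=Z_i\le Z$, so no extended set $F_i'$ is needed. They also give $N\le Q+Z$, where $N$ is the number of actual non-special rounds and $Q=n-d_{n+1}$.
\item \textbf{Expected growth of $Z$.} In a non-special round $r$, at most $n-r+1$ virtual special goods are still available. The algorithm picks one of at least $(n-t_r)/2\ge (n-r+1)/2$ bundles uniformly, and each good of the chosen bundle is taken with probability at most $1/b$, where $b=\lfloor \log n/16\rfloor$. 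So $Z$ grows by at most $2/b$ in expectation per non-special round.
\item \textbf{Markov step.} Hence $\mathbb{E}[Z]\le \frac{2}{b}(Q+\mathbb{E}[Z])$, and Markov gives $\Pr[Z\ge Q/8]=O(1/\log n)$.
\item \textbf{Consequence for every agent.} On the complementary event, simultaneously for every $i$, $Z_i<Q/8\le q_i/8$, where $q_i=n-d_i\ge n-i+1$. Hence $|U|>3q_i/8$ and $|U\cap E_i|\ge |U|-Z_i>|U|/2$.
\item \textbf{Concentration.} Only now is Lemma~\ref{xos:set_remain} applied, to $U$ alone, and its bounds summed over $i$.
\end{enumerate}

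The $O(1/\log n)$ in the statement comes from this Markov step as well as from the summed tail bounds, not from the geometric series alone.
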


\begin{proof}
Let $H_i$ be the set of high-valued goods of agent $i$, i.e., $H_i \coloneqq \{g \vert g \in [m], v_i(\{g\}) \geq 1/4 \log n\}$. Run a virtual process as follows: Use the same ordering over goods as Algorithm \ref{alg:XOS}. Whenever an agent arrives and there is a high-valued good of the agent available, allocate the first good from this order to the agent. If there is no high-valued good available, skip the round.

Let \(S_i^v\) be the goods chosen by the virtual process before round \(i\). Note that $S_i^v$ is a determined entirely by the chosen order of the goods and the valuation functions of the agents. In particular, it is independent of the random choices of the algorithm. Define,
$
d_i=|S_i^v|,\ q_i=n-d_i,\ Q=q_{n+1},
\ Z_i=|S_i^v\cap E_i|.
$
Stop the actual process at its first failure, and let \(Z\) count the
goods of \(S_{n+1}^v\) removed in its non-special rounds. We get the following relationships:
\[
t_i\le d_i,\qquad S_i^v\subseteq S_i\cup E_i,
\qquad |S_i\setminus S_i^v|\le Z_i.
\]

Recall that \(s=\lfloor\log n/16\rfloor\) is the block size in the non-special rounds of the actual algorithm. In an actual non-special round \(r\),
only future virtual special goods can be newly removed. There are at most
\(n-r\) such goods, the algorithm chooses among at least
\((n-t_r)/2\ge(n-r+1)/2\) bundles, and each good in the chosen bundle
is selected with probability at most \(1/b\). Thus the expected increase
in \(Z\) is at most \(2/b\). If \(N\) is the number of actual non-special
allocations before stopping, the coupling gives \(N\le Q+Z\). Hence
\[
\mathbb{E}[Z]\le\frac{2}{b}\mathbb{E}[N]\le\frac{2}{b}(Q+\mathbb{E}[Z])
\]
Using Markov,
\[
\Pr[Z\ge Q/8]=O(1/b)
\]
Note that if \(Q=0\) then \(Z=N=0\), so failure is impossible in this case.

Suppose now that the algorithm declares failure for some \(i\in L\) and \(Z<Q/8\).
The set \(D_i=H_i\setminus S_i^v\) is fixed; write \(k_i=|D_i|\).
No large good remains, so
$
|D_i\cap E_i|\ge k_i-|S_i\setminus S_i^v|\ge k_i-Z_i.
$
Also, every good in \(H_i\cap\Gamma_i\) lies in \(E_i\). Therefore
$
k_i+Z_i\ge |H_i\cap\Gamma_i|
 \ge\frac{n-t_i}{2}\ge\frac{q_i}{2}.
$
Since \(Z_i\le Z<Q/8\le q_i/8\), we have
\(k_i>3q_i/8\) and \(|D_i\cap E_i|>k_i/2\).
Apply Lemma \ref{xos:set_remain} to the fixed set \(D_i\). As
\(q_i\ge n-i+1\), summing its bound over \(i\) gives \(O\!\left(\frac1{\log n}\right)\).
Together with \(\Pr[Z\ge Q/8]=O(p)\) proves the lemma.
\end{proof}

\medskip

\noindent \textbf{Case 2: Agent has small number of high-valued goods}

\medskip

In particular, we take the complementary case to case $1$ -- the agent has less than $\frac{n-t_i}{2}$ high valued goods in $\Gamma_i$.
We prove that in this case, the probability that the algorithm does not have sufficiently many bundles of value at least $1/2$ left is small and therefore the algorithm will not declare a failure.

First, we use Lemma \ref{lem:conc_XOS} to show that for a fixed set $S$ and an agent $i \in [n]$, if $i$ has no high-valued goods in $S$ then the non-special rounds before $i$ do not deplete too much value in $S$. For any agent $i \in [n]$, let $S_i$ denote the set of goods allocated in special rounds before $i$ arrives.
\begin{claim}\label{cl:boundfailure}
    For any fixed set $S \subseteq [m]$ and any agent $i \in [n]$ suppose $v_i(S) \geq 1$ and $i$ has no high valued goods in $S$. Then, for sufficiently large $n$, $\Pr[S_i \cap S = \varnothing \text{ and } v_i(S \cap R_i) < \frac{1}{2}] \leq \frac{1}{n^4}$
\end{claim}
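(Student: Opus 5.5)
Work on the event $S_i\cap S=\varnothing$. There $S\cap R_i = S\setminus E_i$, since everything allocated before $i$ is allocated either in special rounds (in $S_i$) or in non-special rounds (in $E_i$). So it suffices to show that the non-special rounds remove at most half of the supporting additive value of $S$, except with probability $n^{-4}$.

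Fix the additive clause $a$ of $v_i$ supporting $S$, so that $a(S)=v_i(S)\ge 1$ and $a(U)\le v_i(U)$ for every $U\subseteq S$. Since $i$ has no high-valued goods in $S$, each weight satisfies $x_g:=a(g)\le v_i(\{g\})<\frac{1}{4\log n}=:L$. If necessary, first restrict to a sub-clause, or rescale, so that $W:=a(S)$ is between $1$ and a constant (say $2$). This can be done by dropping items, because each item is tiny. It only strengthens the conclusion, since $v_i(S\cap R_i)\ge a(S\cap R_i)$.

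Let $X=\sum_{g\in S\cap E_{n+1}}x_g$. Since $E_i\subseteq E_{n+1}$, we have $a(S\cap R_i)\ge W-X$ on our event, so failure requires $X> W-\frac12\ge W/2$.

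Apply \Cref{lem:conc_XOS} with $p=1/\lfloor \log n/16\rfloor$ and $L=1/(4\log n)$. It gives $\mathbb{E}[e^{\lambda X}]\le e^{2(e^\theta-1)}$ with $\theta=pW(e^{\lambda L}-1)/L$. Choose $\lambda=c\log n$ so that $\lambda L=c/4$ is a constant. Then $\theta = pW\cdot\frac{4\log n}{1}(e^{c/4}-1)\approx 64W(e^{c/4}-1)$, which is a constant independent of $n$. Markov's inequality then yields
\[
\Pr[X\ge W/2]\le \exp\bigl(2(e^\theta-1)-\lambda W/2\bigr)=\exp\bigl(O(1)-\tfrac{c}{2}W\log n\bigr).
\]
Taking $c$ a large enough constant (e.g.\ $c\ge 10$ with $W\ge1$) gives at most $n^{-4}$ for large $n$, because the $O(1)$ term is $n$-independent once $c$ is fixed.

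The main obstacle is precisely the balance in this step. With $\lambda=\Theta(\log n)$, we have $\theta=\Theta(pW\log n)\cdot(e^{\Theta(1)}-1)$. Since $p=\Theta(1/\log n)$, $\theta$ is only a constant depending on $c$ (roughly $e^{c/4}$). The exponent $2e^{\theta}$ could then be doubly exponential in $c$ and overwhelm $\frac{c}{2}\log n$.

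To handle this, I will instead use $\lambda=\beta\log n/\dots$. More carefully, I pick $\lambda L=\mu$ small: set $\lambda=4\mu\log n$ with constant $\mu$. Then $\theta\le 64W\cdot\frac{e^{\mu}-1}{1}\cdot\frac{1}{1}$, which is still constant. The gain $\lambda W/2=2\mu W\log n$ grows with $n$, while $2(e^\theta-1)$ stays bounded for fixed $\mu$. Hence for any fixed $\mu$ with $2\mu W\ge 5$, sufficiently large $n$ makes the bound below $n^{-4}$. This is exactly where the hypothesis ``sufficiently large $n$'' enters, and the cap $W\le 2$ keeps $\theta$ bounded.
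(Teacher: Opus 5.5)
Your proof is correct and follows the paper's argument: you take the supporting additive clause of $S$ as weights bounded by $L=\frac{1}{4\log n}$, apply \Cref{lem:conc_XOS} with $\lambda=\Theta(\log n)$ so that $\theta$ is an $n$-independent constant, and finish with Markov. Your choice $\mu=5/2$ gives exactly the paper's $\lambda=10\log n$, and the only cosmetic difference is that you cap $W\in[1,2]$ by dropping items, whereas the paper rescales so that $a_i(S)=1$. Your ``obstacle'' paragraph is unnecessary: for fixed $c$ the term $2(e^\theta-1)$ is a constant, so your first computation already gives $O(n^{-5})\le n^{-4}$ for large $n$.
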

\begin{proof}
Let $a_i$ be the supporting additive clause for $S$ i.e., $a_i(S)=v_i(S)$ and $a_i(Q)\leq v_i(Q)$ for every set $Q$. Normalize the values so that $a_i(S) = 1$ and define weights of goods as follows: $x_g=a_i(\{g\})$ if $g \in S$ and $x_g=0$ otherwise. Note that
\(
\sum_{g\in S}x_g=1
\)
and since $S$ has no high-valued good, \(
x_g \le\frac{1}{4\log n}
\) for all $g \in S$.

Recall that $E_{n+1}$ is the set of all goods allocated in non-special rounds by Algorithm \ref{alg:XOS}. Let $X=\sum_{g\in S\cap E_{n+1}}x_g$ be the total weight of $S$ removed in these non-special rounds. Applying \Cref{lem:conc_XOS} with
$W=1$ and $L=\frac{1}{4\log n}$, we get for every $\lambda\geq0$,
\[
\mathbb E[e^{\lambda X}]
 \leq \exp\!\left(2(e^{\theta}-1)\right)
\qquad
\]
where,
\[
\theta=4p\log n(e^{\lambda \frac{1}{4\log n}}-1) \leq 128(e^{5/2} - 1)
\]
where for the last inequality we use $\lambda=10\log n$ and $p = 1 / \lfloor \log n / 16\rfloor) \leq 32 / \log n $ for sufficiently large $n$.
Thus $\mathbb E[e^{\lambda X}]\leq C$ for an absolute constant~$C$.
Applying Markov's inequality,
\[
\Pr(X>\frac{1}{2})\leq C e^{-\lambda/2} = Cn^{-5}<n^{-4}
\]
for all sufficiently large~$n$. Now, note that when $S \cap S_i = \varnothing$, $v_i(S \cap R_i) \geq 1 - \sum_{g \in S \cap E_i} x_g \geq 1 - X$. Therefore,
\[
\Pr[S_i \cap S = \varnothing \text{ and } v_i(S \cap R_i) < \frac{1}{2}] \leq \Pr[X > \frac{1}{2}] < \frac{1}{n^4},
\]
 which proves the lemma.
\qedhere
\end{proof}
Let $L'$ be the set of agents who have less than $\frac{n-t_i}{2}$ high valued goods in their bundles that are not touched by the special rounds, i.e, in $\Gamma_i$. Then, we get the following lemma.
\begin{lemma}
    For sufficiently large $n$, the total probability that Algorithm \ref{alg:XOS} declares a (first) failure for some $i \in L'$ is at most $\frac{1}{n^2}$.
\end{lemma}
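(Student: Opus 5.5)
The plan is to show that, for each $i\in L'$, failure at $i$ forces the existence of a \emph{fixed} (algorithm-independent) set $S$ satisfying the hypotheses of Claim~\ref{cl:boundfailure} for which the bad event of that claim happens. We then union bound over polynomially many such fixed sets and over agents.

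Fix $i\in L'$ and suppose the algorithm declares its first failure at $i$. Then agent $i$ has no good of value $\ge \frac{1}{4\log n}$ in $R_i$, and fewer than $\frac{n-t_i}{2}$ indices $k$ satisfy $v_i(P^i_k\cap R_i)\ge \frac12$. Restrict attention to the untouched bundles $P^i_1,\dots,P^i_{n-t_i}$ forming $\Gamma_i$. Because $i\in L'$, fewer than $\frac{n-t_i}{2}$ of these bundles contain a high-valued good of agent $i$. So more than half of them are free of high-valued goods and also untouched by special rounds. If failure is declared, more than half of the untouched bundles have residual value $<\frac12$. By pigeonhole, some bundle $P^i_k$ with $k\le n-t_i$ has three properties at once: it contains no high-valued good, $S_i\cap P^i_k=\varnothing$, and $v_i(P^i_k\cap R_i)<\frac12$.

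Since $P^i_k$ is one of the $n$ bundles of the fixed MMS partition, and $v_i(P^i_k)\ge 1$ by normalization, we can apply Claim~\ref{cl:boundfailure} with $S=P^i_k$. Ranging over all $k\in[n]$ with no high-valued good for $i$, each gives probability at most $1/n^4$. Hence the failure probability at $i$ is at most $n\cdot n^{-4}$, and summing over $i\in L'$ gives at most $n^2\cdot n^{-4}=1/n^2$.

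The main obstacle is that $\Gamma_i$, $t_i$ and the identity of the untouched bundles depend on the algorithm's randomness, so we cannot apply concentration directly to them. The resolution is the step above: we union bound over \emph{all} $n$ fixed bundles $P^i_k$, and the random conditions (untouched by special rounds, residual value below $\frac12$) are exactly the event $\{S_i\cap S=\varnothing,\ v_i(S\cap R_i)<\frac12\}$ that Claim~\ref{cl:boundfailure} already controls for fixed $S$. A secondary point to check is the definition of $L'$. Membership in $L'$ is itself random through $t_i$, but we only use it as a conditioning event inside the pigeonhole step, and the union bound is taken over the deterministic index set $[n]$, so this causes no difficulty.
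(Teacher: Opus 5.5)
Your proposal is correct and follows essentially the same argument as the paper. You union-bound Claim~\ref{cl:boundfailure} over all $n$ fixed MMS bundles of each agent and then over agents, giving $n\cdot n\cdot n^{-4}=n^{-2}$. You then use the $L'$ condition to show that failure at $i$ forces some untouched bundle with no high-valued good to retain value below $\tfrac12$; your explicit pigeonhole count ($|F|,|Lo|>\tfrac{n-t_i}{2}$ inside a set of size $n-t_i$) just spells out what the paper states tersely.
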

\begin{proof}
    Fix the MMS defining allocation of agent $i \in L'$ as $(P^i_1, \ldots, P^i_n)$. Note that these bundles are fixed with respect to the randomness of the algorithm. Therefore, for a fixed $i \in [n]$, the probability that some bundle among these that does not contain high-valued goods and is not touched by special rounds loses more than $1/2$ its value is at most $n \cdot \frac{1}{n^4} < \frac{1}{n^3}$. Union bounding across all agents, the probability that some agent in $[n]$ loses more than $1/2$ its value from a bundle with no high-valued goods that is not touched by special rounds is at most $\frac{1}{n^2}$. For agents in $L'$, there are at least $\frac{n-t_i}{2}$ bundles that satisfy this condition and therefore the probability that at least one of them lose more than $1/2$ its value is at most $\frac{1}{n^2}$ proving the lemma.
\end{proof}
\medskip

\noindent \textbf{Combining the cases: Proof of Theorem \ref{thm:xos-alg}}

\medskip

Combining the two cases, we know the failure probability of \Cref{alg:XOS} is at most $O(\frac{1}{\log{n}})$, which completes the proof of Theorem \ref{thm:xos-alg}.

\subsubsection{Tightness of Algorithm \ref{alg:XOS}}

Finally we show that no algorithm, randomized or not, can achieve $(c, (1-\frac{c}{2}))$ approximation for any $c>0$, i.e., $c$-MMS with at least $(1-\frac{c}{2})$ probability is not possible.

\begin{theorem}
\label{thm:lb_xos_c}
    For any $c\in(0,1]$, there exists an \textsc{OnlineMMS} instance $\I$ with binary-XOS valuations, where no randomized algorithm is \((c, \beta)\)-competitive for any $\beta\geq1-\frac{c}{2}$.
\end{theorem}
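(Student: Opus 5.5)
We will mirror the proof of \Cref{thm:lb_c}, but use the XOS structure so that the first agent's allocation only needs to touch \emph{one} bin rather than many goods. Fix $c\in(0,1]$ and $\epsilon>0$ small, let $\ell=\lceil \frac{1}{c-\epsilon}\rceil$ (so $\ell\ge 2$ whenever $c<1$; the case $c=1$ is handled the same way with $\ell=2$ replaced appropriately), and build an instance with $n$ agents and $n\ell$ goods. The first agent has the binary-XOS valuation $v_1(S)=\max_{j\in[n]}|S\cap C_j|$, where $C_1,\ldots,C_n$ is a fixed partition of the goods into $n$ blocks of size $\ell$. Her \MMS\ is $\ell$, so a $c$-\MMS\ allocation requires at least $c\ell>1$ goods from a single block $C_j$, i.e.\ at least two goods inside one block.

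The remaining $n-1$ agents are identical binary-additive (a special case of binary-XOS) agents. They value at $1$ exactly the goods in a hidden set $S$ consisting of one good from each of $n$ blocks. We choose $S$ by picking independently and uniformly at random one good from each block $C_j$. Each of these agents has $\MMS=1$, since the $n$ goods of $S$ can be split one per bundle. A $c$-\MMS\ allocation for all of them therefore requires that at least $n-1$ goods of $S$ survive the first agent, i.e.\ the first agent takes at most one good of $S$.

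By Yao's principle it suffices to fix any deterministic allocation for agent $1$ (the algorithm sees only $v_1$, which is independent of $S$). If agent $1$ is satisfied, she holds two distinct goods $g,g'$ from a common block $C_j$. The probability that $S\cap C_j$ hits $\{g,g'\}$ is $2/\ell$, and such a hit alone does not yet force failure; we need two goods of $S$ taken. To fix this, we strengthen the construction: let $S$ contain two random goods per block and use $2n$ goods per agent-type.

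A cleaner route is to keep $S$ as one good per block but let the later agents have \MMS\ equal to one good of $S$ while requiring all $n$ goods. Concretely, take $n-1$ later agents with valuation $|S'\cap\cdot|$ where $S'$ has exactly $n-1$ goods. Then losing even one good of $S'$ kills some later agent. So the failure probability is at least $\Pr[S'\cap\{g,g'\}\neq\emptyset]\approx 2/\ell\ge c$, up to $\epsilon$ and $1/n$ corrections. This exceeds $c/2$, which gives $\beta<1-c/2$.

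The main obstacle is this calibration. We must ensure that later agents' \MMS\ (computed over all goods, including those agent $1$ may take) stays at $1$ with one fewer good than $n$. This is achieved by making $S'$ cover $n-1$ of the blocks but choosing, uniformly at random, which block is uncovered. That costs a factor $(1-1/n)$, which is absorbed by taking $n$ large and $\epsilon$ small, exactly as in the last step of \Cref{thm:lb_c}.
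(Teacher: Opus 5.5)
There is a genuine gap: none of your constructions actually forces a failure. Your treatment of the first agent matches the paper's: blocks $C_1,\dots,C_n$ of size $\ell$, so a $c$-\MMS\ bundle needs two goods from one block. You also correctly observe that if the later agents' valued set $S$ has one good per block, the first agent removes at most one good of $S$. That leaves $n-1$ valued goods for the $n-1$ later agents, so nothing is forced.

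Your ``cleaner route'' does not repair this. If the later agents value only a set $S'$ with $|S'|=n-1$, every partition of the goods into $n$ bundles has a bundle with no good of $S'$. Their \MMS\ is therefore $0$, and any allocation satisfies them. Choosing at random which block is uncovered does not change $|S'|$, so the calibration obstacle you identify is not resolved. The variant with two random goods per block fails as well. The later agents' \MMS\ becomes $2$, and after the first agent takes two goods from one block, at least $2n-2$ goods of $S$ remain. That is enough to give each of the $n-1$ later agents two of them.

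The missing idea is to make the later agents' valued set a union of \emph{whole} blocks. In the paper, the other $n-1$ agents value exactly the goods in $n/\ell$ blocks chosen uniformly at random (with $\ell \mid n$). They then value exactly $n$ goods, so their \MMS\ is $1$. Since $v_1$ carries no information about which blocks are chosen, a fixed block from which the first agent holds two goods is among the chosen ones with probability $1/\ell$. In that event at most $n-2$ valued goods remain for $n-1$ agents, so some agent gets value $0$. The failure probability is thus at least $1/\ell$, which exceeds $c/2$ for small $\epsilon$ (when $c<1$).

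You do not need the stronger bound $2/\ell\approx c$ you were aiming for. That target is what pushed you toward designs in which losing a single valued good must be fatal, which is incompatible with the later agents having positive \MMS.
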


\begin{proof}
     We'll construct our instance $\I$ as below. In fact, only the first agent in our instance has a binary-XOS valuation function while all other agents have binary additive valuation functions. Let $n$ be a sufficiently large integer. The instance contains $n$ agents and $\lceil\frac{1}{c-\epsilon}\rceil n$ goods where $\epsilon>0$ is a small constant. We partition all items into $n$ bins $C_1,\dots,C_n$, where each bin contains $\lceil\frac{1}{c-\epsilon}\rceil$ goods. The first agents' valuation function is $v_1(S)=\max_{j\in[n]}|S\cap C_j|$. For the remaining $n-1$ agents, they pick $\frac{n}{\lceil\frac{1}{c-\epsilon}\rceil}$ uniform indices from $[n]$, and value all the items inside the bin within those indices at $1$ while all other goods at $0$. In the resulting instance, the MMS value of the first agent is $\lceil\frac{1}{c-\epsilon}\rceil$, while for the remaining agents it is $1$.  

    For a randomized algorithm to be $(c,\beta)$-competitive, it must allocate at least $c\cdot\lceil\frac{1}{c-\epsilon}\rceil>1$ goods to the first agent from one bin since her MMS value is $\lceil\frac{1}{c-\epsilon}\rceil$. This means we must allocate at least $2$ goods from one bin to the first agent. If this bin is one of the bins that the remaining agent has positive value, it will result in one of the remaining agent receive no good she values positively since there are only $n-2$ positive valued items left. Thus $\beta<1-\frac{1}{\lceil\frac{1}{c-\epsilon}\rceil}<1-\frac{c}{2}$. The last inequality holds true if we pick $n$ large enough and $\epsilon$ small enough.
\end{proof}



Setting $c\in \Omega(\log n)$, this proves the tightness of Thoerem \ref{thm:xos-alg}.

\section{AI Disclosure} The algorithms in this paper were developed by the authors. Some of the analysis ideas and proofs were developed using GPT 5.6 Sol and GPT 6 Astra. All proofs have been verified and rewritten by the authors to improve clarity and convey intuition. The authors take full responsibility for the correctness of all claims.


\newpage

\appendix
\section*{Appendix}
\addcontentsline{toc}{section}{\\ Appendix}
\section{Relevant Background for Valuation Functions}

\subsection{Normalization of the Valuation Function}\label{app:norm} 

First consider normalization for additive valuations as given in Definition \ref{def:norm}. Given a non-normalized input instance $\I = ([n], [m], \{v_i\}_{i\in[n]})$, for any valuation function $v_i$, the corresponding normalized valuation $v'_i$ is constructed by computing an MMS partition \(P^i = (P^i_1, \ldots, P^i_{n})\) for \(v_i\) and then rescaling the valuations so that for every \(i \in [n]\) and for every good \(j \in P^i_k\), $v'_i(j) = \frac{v_i(j)}{v_i(P^i_k)}$. Thereby, the corresponding normalized input instance is defined as $\I' = ([n], [m], \{v'_i\}_{i\in[n]})$. Lemma 4 in \cite{} shows that for any set of goods $B\subseteq [m]$, $v_i(B) \ge v'_i(B) \cdot \MMS^{n}_{v_i}$. By construction, the MMS value of every type in the normalized instance is $1$, i.e. for all $i\in[n]$, $\MMS^{n}_{v'_i} =1$. Therefore, if $v_i'(B)\geq \alpha \MMS^{n}_{v'_i}$, we have that $v_i(B) \ge \alpha \cdot \MMS^{n}_{v_i}$. This implies that an \(\alpha\) approximation of the MMS value according to a normalized instance guarantees that each agent receives a bundle valued at least \(\alpha\) approximation of her MMS value in the original (non-normalized) instance. 

Next, consider normalization for submodular valuations as given in Definition \ref{def:norm}. Here we simply create a new valuation by assuming $v'_i(S) = \frac{1}{\text{MMS}^n_i} v_i(S)$. For the MMS partition, this scales each bundle to have value at least $1$ and the smallest of these bundles has value $1$. On the other hand, if the normalized instance has an MMS partition such that the value of each bundle in this partition is more than $1$, the original instance has a partition where the value of each bundle is more than $\MMS_i^n$ which is not possible. Therefore, we can assume wlog that instances are normalized. 

\subsection{Fractional Extensions for Submodular Functions}\label{app:prelims-submod}
\noindent\textbf{Multilinear Extension of Submodular Valuations.}
The multilinear extension of a submodular function models the expected value an agent receives when goods are allocated independently according to the marginal probabilities $(p_j)_{j \in [m]}$.

\begin{definition}[Multilinear Extension]
\label{def:mle}
Let $f : 2^{[m]} \to \mathbb{R}_+$ be a set function. The multilinear extension of $f$, denoted by $F : [0,1]^m \to \mathbb{R}_+$, is defined for $\vec{p} \in [0,1]^m$ as
\[
F(\vec{p}) \;=\; \sum_{S \subseteq [m]} f(S)
\prod_{j \in S} p_j
\prod_{j \in [m] \setminus S} (1 - p_j).
\]
Equivalently,
\[
F(\vec{p}) = \mathbb{E}[f(R)],
\]
where $R$ is a random subset of $[m]$ obtained by including each element $j \in [m]$ independently with probability $p_j$.
\end{definition}

\medskip

\noindent\textbf{Concave Extension of Submodular Valuations.}
The concave extension of a submodular valuation function models the maximum expected value an agent can receive under any correlated rounding scheme that preserves the marginal probabilities of each good.

\begin{definition}[Concave Extension]
Let $f : 2^{[m]} \to \mathbb{R}_+$ be a submodular function. The concave extension of $f$, denoted by $f^+$, is defined for $\vec{p} \in [0,1]^m$ as
\[
f^+(\vec{p}) \;=\;
\max \left\{
\sum_{S \subseteq [m]} \alpha_S f(S)
\;\middle|\;
\sum_{S \subseteq [m]} \alpha_S = 1,\;
\sum_{S : j \in S} \alpha_S = p_j \;\; \forall j \in [m],\;
\alpha_S \ge 0
\right\}.
\]
\end{definition}

It is known that for any submodular function $f$ defined over a ground set $[m]$, $F(x) \geq \left(1-\frac{1}{e} \right)f^+(x)$ for all $x \in [0,1]^m$ i.e., the correlation gap is bounded by $\left(1-\frac{1}{e}\right)$ \cite{chekuri2007maximizing}.

\section{Deferred Proofs for Sampling Without Replacement}
\label{sec:sampling-ext}



The following claim is required for proving \Cref{lem:bucket-cylinder}.
\begin{claim}
    \label{claim:binom}
    For any positive integers $q\le y< m$ and $x\in [0,1]$, we have
    \[
(1-x)\frac{(y)_q}{(m)_q}+x\frac{(y+1)_q}{(m)_q}\le \left(\frac{y+x}{m}\right)^q
    \]
    where $(r)_q:=r(r-1)\cdots(r-q+1)$ is the falling factorial.
\end{claim}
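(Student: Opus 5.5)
Multiplying both sides by $(m)_q$, the claim becomes
\[
(1-x)(y)_q + x(y+1)_q \le \frac{(m)_q}{m^q}\,(y+x)^q .
\]
I would not try to compare the left side with $(y+x)^q$ directly, since the factor $(m)_q/m^q<1$ on the right is needed when $x$ is close to $1$. The plan is to factor both sides into $q$ terms each and compare them termwise.

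\textbf{Step 1: factor the left side.} Write $(y+1)_q = (y+1)\,(y)_{q-1}$ and $(y)_q = (y)_{q-1}\,(y-q+1)$. Then
\[
(1-x)(y)_q + x(y+1)_q = (y)_{q-1}\,\bigl(y-q+1+xq\bigr) = (y)_{q-1}\,\bigl(y+x-(q-1)(1-x)\bigr).
\]
So the left side equals $(y)_{q-1}\,(y+x-(q-1)(1-x))$. This is the central simplification.

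\textbf{Step 2: factor the right side and match terms.} Write $\frac{(m)_q}{m^q}(y+x)^q = \prod_{r=0}^{q-1}\frac{(m-r)(y+x)}{m}$. Each factor equals $y+x-\frac{r(y+x)}{m}$. Since $y<m$ and $x\le 1$, we have $y+x\le m$, so the $r$-th factor is at least $y+x-r$. It therefore suffices to prove
\[
(y)_{q-1}\,\bigl(y+x-(q-1)(1-x)\bigr) \le \prod_{r=0}^{q-1}\bigl(y+x-r\bigr)
\]
when $y+x\le m$. I will actually compare against the sharper factors $y+x-r(y+x)/m$ if needed.

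\textbf{Step 3: termwise comparison.} I would pair $(y)_{q-1}=\prod_{r=0}^{q-2}(y-r)$ with the factors of the product. There is some slack, because the left side's last factor $y+x-(q-1)+(q-1)x$ exceeds $y+x-(q-1)$. A better route is to note that the map $z\mapsto (z)_q$, viewed as a polynomial in real $z\ge q-1$, is convex. Then $(1-x)(y)_q + x(y+1)_q$ is a chord value, which lies above $(y+x)_q$, so convexity points the wrong way. This means the factor $(m)_q/m^q$ against $(y+x)^q$ must absorb the gap. The cleanest path is to take the right side as $\prod_{r}(y+x)(1-r/m)$ and use $y<m$. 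The left side is $\prod_{r=0}^{q-2}(y-r)$ times $(y-q+1+xq)$, and I would show
\[
(y-r) \le (y+x)\Bigl(1-\frac{r}{m}\Bigr) \quad (r\le q-2),
\]
which reduces to $r(y+x)/m \le r+x$. This holds because $y+x\le m$ gives $r(y+x)/m\le r$. For the last factor I need
\[
y-q+1+xq \le (y+x)\Bigl(1-\frac{q-1}{m}\Bigr),
\]
which reduces to $(q-1)(y+x)/m \le (q-1)(1-x)$, i.e.\ $y+x\le m(1-x)$. This fails for $x$ near $1$.

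\textbf{Step 4: the main obstacle, redistributing the slack.} The last factor has a deficit of at most $(q-1)\bigl(x-(1-x)\cdots\bigr)$, while each earlier factor has slack $x+r-r(y+x)/m \ge x$. I would multiply out and handle the comparison by an exchange argument: write each side as $\prod$ (base) $\times$ (perturbation) and use AM--GM. The left side's factors $\{y-r\}_{r\le q-2}\cup\{y-q+1+xq\}$ sum to $qy-\binom{q}{2}+xq$. The right side's factors $(y+x)(1-r/m)$ sum to $q(y+x)-\binom{q}{2}(y+x)/m \ge qy+qx-\binom q2$. So the right side's factors have at least as large a total. The left side's multiset is also more spread out: it is the right side's configuration with mass shifted onto its largest element. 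I would formalize this as a majorization statement and conclude with Schur-concavity of the product (equivalently, of $\sum\log$). Setting up that majorization carefully is where I expect the real work to be.
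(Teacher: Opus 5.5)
Your Step 1 identity $(1-x)(y)_q+x(y+1)_q=(y)_{q-1}\,(y-q+1+qx)$ is correct; it is the same fact as the paper's $Y^+-Y=\frac{q}{y-q+1}Y$. You also rightly abandon Steps 2 and 3: the Step 2 target $\prod_r(y+x-r)=(y+x)_q$ is in fact smaller than the left side, by the convexity you note.

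\textbf{The gap is in Step 4, which carries the proof and rests on a false claim.} Let $a=\{y,y-1,\ldots,y-q+2,\;y-q+1+qx\}$ and $b=\{(y+x)(1-r/m)\}_{r=0}^{q-1}$. To get $\prod a\le\prod b$ from Schur-concavity when $\sum a\le\sum b$, you need, for every $k$, the sum of the $k$ smallest entries of $a$ to be at most the corresponding sum for $b$. If this fails at $k=1$, then with $\alpha$ the smallest entry of $a$, the increasing Schur-concave function $t\mapsto\sum_i\min(t_i,\alpha)$ is larger at $a$ than at $b$, so no majorization argument can work. It does fail at $k=1$, in the tightest case $m=y+1$. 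For $q\ge 2$ and $x\in[\frac12,1)$, the smallest entry of $a$ is $y-q+2$, while the smallest entry of $b$ is $(y+x)(y-q+2)/(y+1)<y-q+2$. For example, $y=q=2$, $m=3$, $x=0.9$ gives $a=\{2,\,2.8\}$ and $b=\{2.9,\,29/15\}$. The product inequality holds ($5.6\le 5.607$), but it is $b$, not $a$, that is more spread out. Your description of $a$ is also backwards. Relative to $\{y+x-r\}_{r=0}^{q-1}$, $a$ is obtained by taking $x$ off each of the $q-1$ largest factors and adding $(q-1)x$ to the smallest. Mass moves toward the bottom, which is exactly why convexity pointed the wrong way.

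\textbf{The missing idea is to re-factor the right side so that it shares the factors $y,y-1,\ldots,y-q+2$ with the left side.} Since $(m)_q/m^q=\prod_{r<q}(1-r/m)$ is increasing in $m$ and $m\ge y+1$, the right side is at least $(y+x)^q\,(y+1)_q/(y+1)^q=(y+x)^q\,(y)_{q-1}/(y+1)^{q-1}$. Cancelling $(y)_{q-1}>0$, it remains to show
\[
(y-q+1+qx)\,(y+1)^{q-1}\le (y+x)^q .
\]
This is AM--GM for the $q$ positive numbers $y-q+1+qx,\,y+1,\ldots,y+1$, whose mean is exactly $y+x$. This differs from the paper's route, which uses concavity of the difference in $x$ and evaluates it at the stationary point $x^*$. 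It also makes the role of $y<m$ explicit. In the paper's computation that hypothesis is needed but not stated: it guarantees $x^*\le 1$, which for $q\ge2$ is equivalent to the same bound $(m)_q/m^q\ge (y+1)_q/(y+1)^q$.
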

\begin{proof}
Let $Y = (y)_q/(m)_q$ and $Y^+ = (y+1)_q/(m)_q$. Consider the function 
\[
f(x) := (1-x)Y + xY^+ - (y+x)^q/m^q.
\]
Our goal is to show that $f(x)\le 0$ for $x\in [0,1]$. It is easy to see that $f$ is concave in $x$. Let $x^*$ denote its maximizer. Then it suffices to show $f(x^*)\le 0$.

By writing the first order equation, we get 
\[\frac{(x^*+y)^{q-1}}{m^q} = \frac{Y^+-Y}{q}\]

Furthermore, 
\[Y^+-Y = (y+1)\frac{(y)_{q-1}}{(m)_q} - (y-q+1)\frac{(y)_{q-1}}{(m)_q} = \frac{q}{y-q+1} Y\]

Substituting these back in $f$ and simplifying, we get,
\begin{align*}
f(x^*) & = Y + (Y^+-Y)x^* - (Y^+-Y)(x^*+y)/q\\
    & = Y\left( 1+\frac{q}{y-q+1}\left(x^* - \frac{x^*}{q}- \frac{y}{q}\right)  \right)\\
    & = \frac{Y}{y-q+1} (q-1)(x-1) \le 0
\end{align*}
where the final inequality uses the fact that $1\le q\le y+1$ and $x\le 1$.
\end{proof}

\section{Negative Association Can Fail Under Multi-Bucket SwR}
\label{app:negassoc}
In this section, we show that sampling with both fixed and adaptive bucketing does not satisfy negative association. In fact we exhibit counterexamples corresponding to a run of our single-or-sample allocation algorithm (Algorithm~\ref{alg:submod-rr}) over submodular valuations.  The counterexamples are carefully created and therefore are involved. 

\subsection{A counterexample with a monotone submodular valuation, and adaptive bucketing}

We construct an instance with five agents and ten items in which
negative association fails after a single execution of
$\mswr(G,1/2)$. Moreover, the buckets can be obtained by repeatedly
choosing the two remaining items with the largest marginal values
for a normalized, monotone submodular valuation. Only the valuation
of the first agent is relevant; the other agents may have the same
valuation.

\paragraph{The valuation.}
Let $G=[10]$, $R=G\setminus\{1,2\}$, and fix
$0<\varepsilon<1/100$. Define singleton values
\[
  w_1=\frac12+\varepsilon,\qquad
  w_2=\frac12+6\varepsilon,\qquad
  w_k=\frac12\quad(k\in R),
\]
and nonnegative coefficients as follows:
\[
\begin{array}{c|rrrrrrrr}
  k      &3&4&5&6&7&8&9&10\\ \hline
  c_{1k} &0&2&3&1&6&4&7&5\\
  c_{2k} &0&1&4&5&6&2&3&7
\end{array}
\]
For every $S\subseteq G$, set
\begin{equation}
\label{eq:adaptive-valuation}
  v_1(S)
  =\sum_{j\in S}w_j
   -\varepsilon\sum_{r\in S\cap\{1,2\}}
                       \sum_{k\in S\cap R}c_{rk}.
\end{equation}
Thus $v_1(\emptyset)=0$ and $v_1(\{j\})=w_j$.

Note that the MMS-defining bundles are $M_j=\{j,j+5\}$ for
$j\in[5]$, and each has value $v_1(M_j)=1$. 


To verify submodularity, write
$\Delta_1(j\mid S)=v_1(S\cup\{j\})-v_1(S)$ for $j\notin S$.
For $k\in R\setminus S$ and $r\in\{1,2\}\setminus S$,
respectively, we have
\[
  \Delta_1(k\mid S)
    =\frac12-\varepsilon\sum_{r\in S\cap\{1,2\}}c_{rk},
  \qquad
  \Delta_1(r\mid S)
    =w_r-\varepsilon\sum_{k\in S\cap R}c_{rk}.
\]
Note that all coefficients are nonnegative, and these marginals can only
decrease as $S$ grows. Hence $v_1$ is submodular.
Both coefficient rows sum to $28$, and
$\max_{k\in R}(c_{1k}+c_{2k})=12$. Consequently,
\[
\begin{aligned}
 \Delta_1(1\mid S)&\ge\frac12-27\varepsilon>0,\\
 \Delta_1(2\mid S)&\ge\frac12-22\varepsilon>0,\\
 \Delta_1(k\mid S)&\ge\frac12-12\varepsilon>0
       &&(k\in R\setminus S).
\end{aligned}
\]
Thus $v_1$ is also monotone and nonnegative. 


\paragraph{Adaptive bucket selection.}
Follow the zero-based indexing in the definition of $\mswr$.
Let $G_0=G$ and let $S_i=\bigcup_{t=0}^{i-1}A_t$ denote the sample
accumulated before bucket $i$, with $S_0=\emptyset$.
At step $i$, choose $B_i\subseteq G_i$ to consist of the two items
with the largest values of $\Delta_1(\cdot\mid S_i)$, draw
\[
  A_i\sim\swr(B_i,1/2),\qquad
  G_{i+1}=G_i\setminus B_i,
\]
using fresh randomness conditional on the preceding history, and
set $S_{i+1}=S_i\cup A_i$. Every bucket has size two, so
$|A_i|=1$. The returned sample is $A=S_5$.

Since $w_2>w_1>w_k$ for all $k\in R$, the initial bucket is
$B_0=\{1,2\}$. Define events
\[
 E=\{A_0=\{1\}\},\qquad
 E^c=\{A_0=\{2\}\}.\qquad
\]
Both events have probability $1/2$. Also 
define the set of \textit{core} items as $ C=\{3,4,8,9\}.$
After $B_0$ is removed,
every remaining item belongs to $R$. Conditional on $A_0=\{r\}$,
its marginal is always $1/2-\varepsilon c_{rk}$: selecting other
items of $R$ does not change it. The subsequent buckets are
therefore determined by the increasing order of the coefficients
in row $r$:
\[
\begin{array}{c|cccc}
 &B_1&B_2&B_3&B_4\\ \hline
 E   &\{3,6\}&\{4,5\}&\{8,10\}&\{7,9\}\\
 E^c &\{3,4\}&\{8,9\}&\{5,6\}&\{7,10\}
\end{array}
\]
All these marginal rankings are strict. Conditional on $E$,
the four core items occupy four distinct buckets; conditional on
$E^c$, the pairs $\{3,4\}$ and $\{8,9\}$ each form a bucket.
Figure~\ref{fig:adaptive-buckets} illustrates the construction.

\begin{figure}[tbp]
\centering
\begin{tikzpicture}[
  x=1cm,y=1cm,
  item/.style={circle,draw=black!65,fill=white,minimum size=5.5mm,
               inner sep=0pt,font=\small},
  core/.style={item,draw=blue!65!black,fill=blue!15},
  bucket/.style={draw=black!40,rounded corners=3pt,
                 minimum width=2.85cm,minimum height=1.35cm},
  val/.style={font=\scriptsize,inner sep=1pt},
  title/.style={font=\small,inner sep=1pt}
]
  \node[bucket] at (4.95,0) {};
  \node[title] at (4.95,0.95) {$B_0$};
  \node[item] (one) at (4.25,0.14) {$1$};
  \node[item] (two) at (5.65,0.14) {$2$};
  \node[val,below=1mm of one] {$\frac12+\varepsilon$};
  \node[val,below=1mm of two] {$\frac12+6\varepsilon$};

  \node[title] at (4.95,-1.24)
    {$E$: item $1$ sampled from $B_0$ \quad (probability $1/2$)};
  \foreach \x/\idx/\a/\b/\sa/\sb in {
       0/1/3/6/core/item,
       3.3/2/4/5/core/item,
       6.6/3/8/10/core/item,
       9.9/4/7/9/item/core} {
    \node[bucket] at (\x,-2.55) {};
    \node[title] at (\x,-1.65) {$B_{\idx}$};
    \node[\sa] at (\x-0.65,-2.40) {$\a$};
    \node[\sb] at (\x+0.65,-2.40) {$\b$};
    \node[val] at (\x-0.65,-2.91) {$\frac12$};
    \node[val] at (\x+0.65,-2.91) {$\frac12$};
  }

  \node[title] at (4.95,-3.80)
    {$E^c$: item $2$ sampled from $B_0$ \quad (probability $1/2$)};
  \foreach \x/\idx/\a/\b/\sa/\sb in {
       0/1/3/4/core/core,
       3.3/2/8/9/core/core,
       6.6/3/5/6/item/item,
       9.9/4/7/10/item/item} {
    \node[bucket] at (\x,-5.10) {};
    \node[title] at (\x,-4.20) {$B_{\idx}$};
    \node[\sa] at (\x-0.65,-4.95) {$\a$};
    \node[\sb] at (\x+0.65,-4.95) {$\b$};
    \node[val] at (\x-0.65,-5.46) {$\frac12$};
    \node[val] at (\x+0.65,-5.46) {$\frac12$};
  }
\end{tikzpicture}
\caption{Adaptive buckets for the first agent. Numbers inside circles
identify items; values below them are singleton values $v_1(\{j\})$,
not conditional marginal values. Blue items form the core
$C=\{3,4,8,9\}$. Exactly one item is sampled from each bucket.
The four core items lie in separate buckets under $E$ and in two
paired buckets under $E^c$.}
\label{fig:adaptive-buckets}
\end{figure}

\paragraph{Failure of negative association.}
Let $X_j=\mathbf{1}\{j\notin A\}$ indicate that item $j$ survives
this execution of $\mswr$, and set
\[
 F=X_3X_4,\qquad H=X_8X_9.
\]
These are coordinatewise increasing functions of disjoint sets
of survival indicators. Conditional on $E$, the four core items
are in distinct buckets, sampled independently, so
\[
 \mathbb{E}[F\mid E]=\mathbb{E}[H\mid E]=\frac14,
 \qquad \mathbb{E}[FH\mid E]=\frac1{16}.
\]
Conditional on $E^c$, exactly one item survives in each of the
buckets $\{3,4\}$ and $\{8,9\}$. Therefore $F=H=0$ on $E^c$.
Since $\Pr(E)=1/2$, it follows that
\[
 \mathbb{E}[F]=\mathbb{E}[H]=\frac18,
 \qquad \mathbb{E}[FH]=\frac1{32},
\]
and hence
\[
 \operatorname{Cov}(F,H)
 =\mathbb{E}[FH]-\mathbb{E}[F]\mathbb{E}[H]
 =\frac1{32}-\frac1{64}
 =\frac1{64}>0.
\]
This contradicts the inequality required by negative association.

\subsection{A counterexample with additive valuation, and fixed-probability SwR}
\label{app:negassoc-additive-fixed}

We give a two-agent additive instance in which negative association
fails after two rounds, even though both bucket partitions are
fixed in advance and every bucket uses the same fixed-probability
sampling rule $\swr(B,1/4)$.

\paragraph{The instance and its MMS partitions.}
Let
\[
 G=\{a\}\mathbin{\dot\cup}U\mathbin{\dot\cup}V,
 \qquad U=\{b,c,e,f,g\},\quad V=\{d,h,i\}.
\]
The two valuations are additive, with singleton values
\[
\begin{array}{c|ccc}
 \text{item or group}&a&U&V\\ \hline
 \text{number of items}&1&5&3\\
 v_1(\{x\})&1/4&1/5&1/4\\
 v_2(\{x\})&1/6&1/6&1/3
\end{array}
\]
where a group entry is the value of each item in that group.
Fix the partitions
\[
\begin{aligned}
 M_{1,1}&=U,            &M_{1,2}&=\{a\}\cup V,\\
 M_{2,1}&=\{a\}\cup U, &M_{2,2}&=V.
\end{aligned}
\]
For each agent $r$, both displayed bundles have value $1$ and
$v_r(G)=2$. Hence these partitions are MMS-defining and both
agents have maximin share exactly $1$.

\paragraph{The sampling process.}
The agents arrive in order $1,2$. Let $G^{(0)}=G$. In round $r$,
the buckets are the intersections of the fixed MMS bundles with
the remaining ground set:
\[
 B_{r,j}=M_{r,j}\cap G^{(r-1)},\qquad j\in\{1,2\}.
\]
Conditional on the history, independently draw
\[
 A_{r,j}\sim\swr(B_{r,j},1/4),
 \qquad
 G^{(r)}=G^{(r-1)}\setminus(A_{r,1}\cup A_{r,2}).
\]
Here the sample size from a bucket $B$ is in
$\{\lfloor |B|/4\rfloor,\lceil |B|/4\rceil\}$ and has
expectation $|B|/4$, exactly as in the fixed-probability SwR
definition. In particular, an item that reaches a round survives
that round with probability $3/4$.

In round $1$, exactly one item is sampled from $\{a\}\cup V$.
From $U$, one item is sampled with probability $3/4$, and two
items are sampled with probability $1/4$. Thus three or four
items of $U$ remain, and two or three items of $V$ remain.
Both second-round buckets are therefore always nonempty.

\paragraph{Failure of negative association.}
Let $X_x=\mathbf{1}\{x\in G^{(2)}\}$, and define
\[
 F=X_d,\qquad H=\mathbf{1}\{X_b+X_c\ge1\}.
\]
These are increasing functions of disjoint coordinates.
Condition on $E=\{a\in G^{(1)}\}$, which has probability $3/4$.
On $E$, two uniformly random items of $V$ survive round $1$;
on $E^c$, all three survive. Consequently,
\[
 \mathbb E[F\mid E]=\frac23\cdot\frac34=\frac12,
 \qquad
 \mathbb E[F\mid E^c]=\frac34.
\]

Let $Q=|\{b,c\}\cap G^{(1)}|$. Its distribution is independent
of $E$, and
\[
 \Pr(Q=0)=\frac1{40},\qquad
 \Pr(Q=1)=\frac9{20},\qquad
 \Pr(Q=2)=\frac{21}{40}.
\]

On $E^c$, the bucket containing the surviving items of $U$ has
size three or four, so at most one item is sampled in round $2$.
If $Q=2$, at least one of $b,c$ therefore survives surely; if
$Q=1$, the sole remaining distinguished item survives with
probability $3/4$. Hence
\[
 \mathbb E[H\mid E^c]
 =\frac9{20}\cdot\frac34+\frac{21}{40}
 =\frac{69}{80}.
\]


On $E$, condition on the first-round survivors among $b,c$.
If $Q=0$, then $H=0$. If $Q=1$, the unique surviving item
survives round $2$ with probability $3/4$. If $Q=2$, there are two possibilities. First, the sample from
$U$ consists of one item of $U\setminus\{b,c\}$, which occurs
with probability $\frac9{20}.$
The second-round bucket then contains five items, including
$a,b,c$. Both $b,c$ are sampled with probability
$1/40$, so $H=1$ with probability $39/40$. Second, the sample from $U$ consists of two items of
$U\setminus\{b,c\}$, which occurs with probability $\frac3{40}.$
The second-round bucket then contains four items, so exactly
one item is sampled and $H=1$ surely.
Since the first-round sample from $U$ is independent of $E$,
these probabilities also hold conditional on $E$. Therefore,
\[
\mathbb E[H\mid E]
=\frac9{20}\cdot\frac34
 +\frac9{20}\cdot\frac{39}{40}
 +\frac3{40}\cdot1
=\frac{681}{800}.
\]
Conditional on $E$ or $E^c$, the variables $F$ and $H$ are independent.
Therefore,
\[
 \mathbb E[F]=\frac9{16},\qquad
 \mathbb E[H]=\frac{2733}{3200}.
\]
Also we have
\[\mathbb E[FH]=\Pr(E)\mathbb E[F\mid E]\mathbb E[H\mid E]+\Pr(E^c)\mathbb E[F\mid E^c]\mathbb E[H\mid E^c]=\frac34\cdot\frac12\cdot\frac{681}{800}+\frac14\cdot\frac34\cdot\frac{69}{80}=\frac{1539}{3200}.\]
Hence, 
\[
 \operatorname{Cov}(F,H)
 =\frac{1539}{3200}-\frac9{16}\cdot\frac{2733}{3200}
 =\frac{27}{51200}>0.
\]
Thus negative association fails after two rounds with additive
valuations and fixed MMS bucket partitions, even when every
bucket in both rounds uses the common probability $p=1/4$.

\section{Deferred Proofs for the XOS Lower Bound}
\label{app:xos-details}

We use the distribution and notation from
Section~\ref{sec:xos-hard-distribution}.

\begin{proof}[Proof of Lemma~\ref{lem:xos-instances}]
The number of agents and number of items are easy to compute that are $n$ and $n^{d+3}$. According to the definition of agents' valuation function one can also verify 
\[
    \min_i\operatorname{MMS}_i\ge n^2,
    \qquad
    \max_{i,g}\frac{v_i(\{g\})}{\operatorname{MMS}_i}\le n^{-2}.
\]

To prove offline feasibility, note we are dividing the number of agents in each round as evenly as possible, we know the number of agents of each group $S_{r}$ satisfies
\begin{equation}
\label{eq:xos-rounded-groups}
\begin{split}
    s_{r}
    &
     \ge \left\lfloor\frac{T}{q_r}\right\rfloor
      = \left\lfloor\frac{n}{(d+1)q_r}\right\rfloor
     \ge \frac{n}{2(d+1)q_r},\\
    s_{r}
    &\le \left\lceil\frac{T}{q_r}\right\rceil
     \le T
     = \frac{n}{d+1}.
\end{split}
\end{equation}

Now consider a pool $P$ in round $r<d$. At most
$q_{r+1}\le T$ of its bins are selected by the lottery, while its group has
at most $\lceil n/(d+1)\rceil$ agents. Since $d\ge1$,
\begin{equation}
\label{eq:xos-offline-capacity}
    s_{r}+|Q_r\cap\{B_{P,1},\ldots,B_{P,n}\}|
    \le
   T+q_{r+1} \le 2T
    \le n.
\end{equation}

Give every agent of $P$ a distinct full bin not selected by $Q_r$.
No later pool is contained in such a bin. In the last round, give the agents
of each pool distinct full bins; there are enough because $s_{d}\le n$.

These allocations are pairwise disjoint. Pools in the same round are
disjoint, and any later pool lies in bins that were retained, rather than
allocated by this offline rule, in all earlier rounds. Each agent receives
value equal to her MMS. Any goods left over may be distributed arbitrarily
without decreasing values.
\end{proof}

\begin{proof}[Proof of Lemma~\ref{lem:xos-pool-forcing}]
Let $k=\lceil\alpha M_r\rceil$. A successful agent has a bin contributing
at least $k$ goods to her allocation, so the prescribed charging rule is
well-defined. Since allocations are disjoint, a bin with $M_r$ goods receives
at most $M_r/k$ charges. Consequently,
\[
    s_{r}\le |D_P|\frac{M_r}{k}
    \le \frac{|D_P|}{\alpha}.
\]
Every bin in $D_P$ supplies at least $k\ge\alpha M_r$ goods to a charged
agent in the current group. These goods were available when allocated and
are therefore distinct from every good allocated before the group arrived.
\end{proof}

\begin{proof}[Proof of Lemma~\ref{lem:xos-continuation}]
Condition on a history as in the statement. The current pools, their bins,
the selected pool $P$, and its charged set $D_P$ are fixed. Conditional on the current bins, $Q_r$ is therefore still a
uniform $q_{r+1}$-element subset of the $nq_r$ bins.

On a history in which the group at $P$ succeeds,
Lemma~\ref{lem:xos-pool-forcing} gives $|D_P|\ge\alpha s_{r,P}$. Thus
\begin{align*}
    \Pr[D_P\cap Q_r=\varnothing\mid\text{history}]
    &=\frac{\binom{nq_r-|D_P|}{q_{r+1}}}
            {\binom{nq_r}{q_{r+1}}}\\
    &\le \left(1-\frac{|D_P|}{nq_r}\right)^{q_{r+1}}\\
    &\le \exp\!\left(-\frac{|D_P|q_{r+1}}{nq_r}\right)\\
    &\le \exp\!\left(-\frac{\alpha K}{2(d+1)}\right).
\end{align*}
If $nq_r-|D_P|<q_{r+1}$, the probability is zero and the bound is immediate;
otherwise the first inequality follows by multiplying the conditional miss
probabilities of successive draws without replacement. The last inequality
uses~\eqref{eq:xos-rounded-groups} and $q_{r+1}=Kq_r^2$. 
\end{proof}

\begin{proof}[Proof of Lemma~\ref{lem:xos-balance}]
Fix one bin $B$ and let $X=|E\cap B|$. The bin is a uniform $M$-element
subset of $P$. Put $N=|P|=nM$ and $a=|E|$, so $\rho=a/N$.
View the bin as $M$ successive draws without replacement, and write
$X=\sum_{j=1}^M Z_j$, where $Z_j$ indicates that draw $j$ lies in $E$.
Each $Z_j$ has mean $\rho$ and variance at most $1/4$. If $N>1$, then for
distinct draws $j$ and $\ell$,
\[
    \mathbb E[Z_jZ_\ell]
       =\frac{a(a-1)}{N(N-1)}\le\left(\frac aN\right)^2.
\]
Their covariance is nonpositive, and hence
$\operatorname{Var}(X)\le M/4$. When $N=1$, $X$ is deterministic and the
same variance bound holds.

Chebyshev's inequality now gives
\[
    \Pr[X<(\rho-\varepsilon)M]
    \le \Pr[|X-\rho M|\ge\varepsilon M]
    \le \frac{\operatorname{Var}(X)}{\varepsilon^2M^2}
    \le \frac{1}{4\varepsilon^2M}.
\]
Taking a union bound over the $n$ bins proves the claim.
\end{proof}

\begin{proof}[Proof of Lemma~\ref{lem:xos-accumulation}]
Recall that $E_r$ records deletions in $P_r$ by the end of round $r-1$,
and $\rho_r=|E_r|/|P_r|$. We prove that, whenever the selected sequence has
reached round $r\ge1$ and all preceding selected refinements are balanced,
\begin{equation}
\label{eq:xos-path-induction}
    \rho_r\ge r\alpha-(r-1)\varepsilon.
\end{equation}
For $r=1$, the chosen $P_1$ is a charged bin of the successful round-$0$
group. At least $\alpha M_0$ of its $M_0$ goods were allocated to that
group, giving $\rho_1\ge\alpha$.

Suppose~\eqref{eq:xos-path-induction} holds for $r<d$. Balance implies that
every bin of $P_r$ contains at least $(\rho_r-\varepsilon)M_r$ goods of
$E_r$. The set $E_r$ is frozen at the previous round boundary; any additional
allocations before the group at $P_r$ arrives can only increase the number
of deleted goods. If the sequence continues, this group succeeds and
$P_{r+1}$ is one of its charged bins. Lemma~\ref{lem:xos-pool-forcing}
supplies at least $\alpha M_r$ newly allocated goods in this bin, all
disjoint from $E_r$. Since $|P_{r+1}|=M_r$, by the end of round $r$ its
deleted fraction is at least
\[
    \rho_{r+1}\ge\rho_r-\varepsilon+\alpha
       \ge(r+1)\alpha-r\varepsilon.
\]
This proves the induction.

At round $d$, a balanced refinement of $P_d$ therefore gives every bin a
deleted fraction at least
\[
    \rho_d-\varepsilon
       \ge d\alpha-d\varepsilon
       =d\alpha-\frac\alpha2
       \ge1-\frac\alpha2,
\]
since $d=\lceil1/\alpha\rceil$. Every last-round bin has size
$M_d=n^2$, so at most $(\alpha/2)n^2$ goods remain in each bin at the start
of this round. Subsequent allocations cannot increase this number. An agent
assigned to $P_d$ has MMS $n^2$ and values her allocation by its largest
intersection with a single bin. Even giving her all remaining goods in
$P_d$ yields value at most $(\alpha/2)n^2<\alpha n^2$.
\end{proof}

\begin{proof}[Proof of Theorem~\ref{thm:xos-distributional}]
Fix a deterministic online algorithm. Regard every random
choice in the instance as having been sampled before the first arrival.
Expose all partitions of a round to the analysis at the start of that round,
but expose its lottery only after the entire round has been allocated. Select
the next witness pool immediately after exposing that lottery, before
exposing any partition of the next round.

This scheme preserves the conditional laws needed by the lemmas. The
algorithm's allocations are functions only of the valuations it has actually
seen and its own earlier decisions. The witness-pool selection uses only
already exposed partitions, allocations, and lotteries. Consequently,
conditional on the auxiliary history before a lottery, that lottery is
still uniform over subsets of the current bins of the prescribed size.
Conditional on the auxiliary history before a new round, the selected pool
and its frozen deleted set are fixed, while its fresh partition is uniform.
Neither assertion conditions on eventual success or on an unexposed
partition.

If a group encountered by the selected sequence fails, simultaneous success
is already impossible. If each such group before round $d$ succeeds, each
encountered lottery retains a charged bin, and each selected refinement is
balanced, Lemma~\ref{lem:xos-accumulation} forces a last-round agent to fail.
That pool has a nonempty group by Lemma~\ref{lem:xos-instances}. Therefore,
simultaneous success is contained in the union of the events that an
encountered lottery misses the charged set of a successful selected group or
an encountered refinement is not balanced.

There are at most $d$ events of each kind. For each lottery, on every history
where that step is reached with a successful group,
Lemma~\ref{lem:xos-continuation} bounds its conditional miss probability by
$\exp(-\alpha K/(2(d+1)))$. For a selected round-$r$ refinement,
Lemma~\ref{lem:xos-balance} bounds its conditional failure probability by
$n/(4\varepsilon^2M_r)$. These are bounds on the steps of the selected
sequence, not on all pools in the round. Averaging each conditional bound and
taking a union bound yields
\begin{align*}
    \Pr[A\text{ produces an }\alpha\text{-MMS allocation}]
    &\le d\exp\!\left(-\frac{\alpha K}{2(d+1)}\right)
         +\sum_{r=1}^d\frac{n}{4\varepsilon^2M_r}\\
    &\le d\exp\!\left(-\frac{\alpha K}{2(d+1)}\right)
         +\frac{dn}{4\varepsilon^2n^2}\\
    &=d\exp\!\left(-\frac{\alpha K}{2(d+1)}\right)
         +\frac{d^3}{\alpha^2n}.
\end{align*}
Two inequalities hold true cause $M_r\ge n^2$ and $\varepsilon=\alpha/(2d)$. Finally, $d$ is fixed and $K\to\infty$ as
$n\to\infty$, so its right-hand side is $o(1)$.
\end{proof}


\begin{proof}[Proof of \Cref{lem:conc_XOS}]
Let $b=\lfloor\log n/16\rfloor$, and
$a=(e^{\lambda L}-1)/L$; the case $L=0$ is trivial.
Consider a non-special round $r$ with $h\ge(n-t_r)/2$
eligible bundles. If bundle $P$ is chosen, let
$W_P=\sum_{g\in S\cap P\cap R_r}x_g$. Since each block
contributes one uniformly sampled good,
\[
\mathbb E[e^{\lambda\Delta_r}\mid P,\text{history}]
 \le \prod_{\text{blocks }B}
 \left(1+\frac1b\sum_{g\in S\cap B}(e^{\lambda x_g}-1)\right)
 \le e^{paW_P}.
\]
The eligible bundles are disjoint, so $\sum_P W_P\le W$.
Averaging over the uniformly chosen bundle gives
\[
\mathbb E[e^{\lambda\Delta_r}\mid\text{history}]
 \le 1+\frac{e^{paW}-1}{h}
 \le \exp\!\left(\frac{2(e^{paW}-1)}{n-t_r}\right).
\]
Consequently,
\[
\exp\!\left(
 \lambda X_r-2(e^{paW}-1)
 \sum_{\substack{s\le r\\s\text{ non-special}}}
 \frac1{n-t_s}\right)
\]
is a supermartingale. If the process stops with $T$ special
and $N$ non-special allocations, then $N\le n-T$ and
$n-t_s\ge n-T$ for every non-special round $s$. Thus the
sum in the exponent is at most $1$, and
\[
\mathbb E[e^{\lambda X}]
 \le \exp\!\left(
 2\left(e^{pW(e^{\lambda L}-1)/L}-1\right)\right).
\]
\end{proof}

\begin{proof}[Proof of Lemma \ref{xos:set_remain}]
Assume $k\ge1$, and put $X=|S\cap E_{n+1}|$.
Apply \Cref{lem:conc_XOS} with unit weights, $L=1$, and $W=k$.
Writing $p=1/\lfloor\log n/16\rfloor$, we obtain
\[
\mathbb E[e^{\lambda X}]
 \le \exp\!\left(2\bigl(e^{pk(e^\lambda-1)}-1\bigr)\right).
\]
Also $|S\cap E_i|\le X$.
If $k<\log n$, take $\lambda=\log 2$. Since
$p=O(1/\log n)$, for a constant $C_1$ we have
\[
\Pr[|S\cap E_i|\ge k/2]
 \le \frac{\mathbb E[2^X-1]}{2^{k/2}-1}
 \le \frac{e^{2(e^{pk}-1)}-1}{2^{k/2}-1}
 \le \frac{C_1pk}{2^{k/2}-1}.
\]

If $k\ge\log n$, take $\lambda=8\log n/k$.
Then $\lambda\le8$ and
$
pk(e^\lambda-1)
 \le 8e^8p\log n=O(1).
$
Markov's inequality therefore gives
\[
\Pr[|S\cap E_i|\ge k/2]
 \le e^{-\lambda k/2}\mathbb E[e^{\lambda X}]
 \le \frac{C_2}{n^4},
\]
 for some constant $C_2$. Taking $C>\max\{C_1,C_2\}$ proves the claim.
\end{proof}
\end{document}